\numberwithin{equation}{section}
\pgfplotsset{compat=newest}
\tikzset{
    >=stealth',
    punkt/.style={
           rectangle,
           rounded corners,
           draw=black, very thick,
           text width=6.5em,
           minimum height=2em,
           text centered},
    pil/.style={
           ->,
           thick,
           shorten <=2pt,
           shorten >=2pt,}
}
\tikzstyle{block} = [rectangle, rounded corners, minimum width= 3cm, minimum height=1cm, text centered, draw=black, fill=blue!20,]
\tikzstyle{invisbleblock} = [minimum width= 3em, minimum height=1cm, text centered, ]
\tikzstyle{decision} = [diamond, minimum width=3cm, minimum height=1cm, text centered, draw=black, fill=orange!20]
\tikzstyle{arrow} = [thick,->,>=stealth]
\tikzstyle{line} = [thick,-]
\title{Inverse Problems of Single Molecule Localization Microscopy}
\author{Montse Lopez-Martinez$^1$ {\footnotesize\href{mailto:lopez-martinez@iap.tuwien.ac.at}{lopez-martinez@iap.tuwien.ac.at}}\\
        Gwenael Mercier$^3$       {\footnotesize\href{mailto:gwenael.mercier@univie.ac.at}{gwenael.mercier@univie.ac.at}}\\
        Kamran Sadiq$^{2}$        {\footnotesize\href{mailto:kamran.sadiq@ricam.oeaw.ac.at}{kamran.sadiq@ricam.oeaw.ac.at}}\\
        Otmar Scherzer$^{2,3}$    {\footnotesize\href{mailto:otmar.scherzer@univie.ac.at}{otmar.scherzer@univie.ac.at}}\\
        Magdalena Schneider$^1$   {\footnotesize\href{mailto:schneider@iap.tuwien.ac.at}{schneider@iap.tuwien.ac.at}}\\
        John C Schotland$^4$        {\footnotesize\href{mailto:schotland@umich.edu}{schotland@umich.edu}}\\
        Gerhard J. Schütz$^1$        {\footnotesize\href{mailto:schuetz@iap.tuwien.ac.at}{schuetz@iap.tuwien.ac.at}}\\
        Roger Telschow$^3$        {\footnotesize\href{mailto:roger.telschow@univie.ac.at}{roger.telschow@univie.ac.at}}\\
}
\date{\today}
\newtheorem{lemma}{Lemma}[section]
\newaliascnt{proposition}{lemma}
\newaliascnt{corollary}{lemma}
\newaliascnt{theorem}{lemma}
\newtheorem{theorem}[theorem]{Theorem}
\newaliascnt{definition}{lemma}
\newtheorem{definition}[definition]{Definition}
\newaliascnt{assumption}{lemma}
\newtheorem{assumption}[assumption]{Assumption}
\newaliascnt{notation}{lemma}
\newaliascnt{example}{lemma}
\newaliascnt{experiment}{lemma}
\newtheorem{experiment}[experiment]{Experiment}
\theoremstyle{nonumberplain}
\newtheorem{remark}{Remark}
\newtheorem{proof}{Proof}
\newcommand{\N}{\mathds{N}}
\newcommand{\R}{\mathds{R}}
\newcommand{\C}{\mathds{C}}
\newcommand{\abs}[1]{\left|#1\right|}
\newcommand{\norm}[1]{\left\|#1\right\|}
\newcommand{\set}[1]{\left\{#1\right\}}
\newcommand{\inner}[2]{\left<#1,#2\right>}
\newcommand{\e}{\mathrm e}
\renewcommand{\i}{\mathrm i}
\newcommand{\ve}{\varepsilon}
\newcommand{\focus}{{\texttt{f}}}
\newcommand{\refractive}{{\texttt{n}}}
\newcommand{\tensortwo}{\mathcal{T}}
\newcommand{\sphere}{\mathbb{S}^2}
\newcommand{\dd}{\, \mathrm{d} }
\newcommand{\br}{{\bf r}}
\newcommand{\bx}{{\bf x}}
\newcommand{\by}{{\bf y}}
\newcommand{\bk}{{\bf k}}
\newcommand{\bu}{{\bf u}}
\newcommand{\bv}{{\bf v}}
\newcommand{\be}{{\bf e}}
\newcommand{\bE}{{\bf E}}
\newcommand{\bpsi}{{\bf \Psi}}
\newcommand{\bphi}{{\bf \Phi}}
\newcommand{\brk}{\tilde{\bf k}}
\newcommand{\fobj}{\texttt{f}_{\texttt{obj}}}
\newcommand{\flens}{\texttt{f}_{\texttt{L}}}
\newcommand{\psf}{{\text{PSF}}}
\newcommand{\fouriercurrent}{\widehat{\bf J}}
\newcommand{\eindelta}{\tilde \delta}
\newcommand{\scal}[2]{\left \langle #1,#2 \right \rangle}
\newcommand{\ls}{\leqslant}
\newcommand{\gs}{\geqslant}
\newcommand{\rpsi}{r_3^{\! \bf \Psi}}
\newcommand{\vr}{\begin{pmatrix} r_1  \\ r_2 \\ r_3 \end{pmatrix}}
\newcommand{\vp}{\begin{pmatrix}  \Psi_1 \\ \Psi_2 \\ \Psi_3 \end{pmatrix}}
\newcommand{\obj}{{\texttt{obj}}} 
\newcommand{\ttd}{{\texttt{d}}}
\newcommand{\bfp}{{\texttt{bfp}}} 
\newcommand{\rbfp}{r_3^{\texttt{b}}}
\newcommand{\NA}{\mathtt{NA}}
\renewcommand{\L}{\mathtt{L}}
\newcommand{\tmax}{{\theta_{\mathrm{max}}}}
\DeclareMathOperator{\Grad}{\nabla_{\br}}
\DeclareMathOperator{\Div}{\nabla_{\br} \cdot}
\DeclareMathOperator{\Curl}{\nabla_{\br} \times}
\newcommand{\sspace}[1]{\mathcal{S}(#1)}
\newcommand{\tempdist}[1]{\mathcal{S'}(#1)}
\begin{document}

\maketitle
\thispagestyle{empty}
\begin{center}	
\hspace*{8em}
\parbox[t]{12em}{\footnotesize
\hspace*{-1ex}$^1$Institute of Applied Physics\\
TU-Wien\\
Getreidemarkt 9\\
A-1060 Vienna, Austria}
\hfil
\parbox[t]{17em}{\footnotesize
\hspace*{-1ex}$^2$Johann Radon Institute for Computational\\
\hspace*{1em}and Applied Mathematics (RICAM)\\
Altenbergerstraße 69\\
A-4040 Linz, Austria}
\\[4ex]
\hspace*{8em}
\parbox[t]{12em}{\footnotesize
\hspace*{-1ex}$^3$Faculty of Mathematics\\
University of Vienna\\
Oskar-Morgenstern-Platz 1\\
A-1090 Vienna, Austria}
\hfil   
\parbox[t]{17em}{\footnotesize
\hspace*{-1ex}$^4$Department of Mathematics \\
\hspace*{1em}and Department of Physics\\
University of Michigan\\
Ann Arbor, MI 48109, USA}
\end{center}

\begin{abstract}
Single molecule localization microscopy is a recently developed superresolution imaging technique to 
visualize structural properties of single cells. 
The basic principle consists in chemically attaching fluorescent dyes to the molecules, 
which after excitation with a strong laser may emit light. To achieve superresolution, signals of individual fluorophores are separated in time.
In this paper we follow the physical and chemical literature and derive mathematical models describing the 
propagation of light emitted from dyes 
in single molecule localization microscopy experiments via Maxwell's equations. This forms the basis of formulating inverse problems 
related to single molecule localization microscopy. We also show that the current status of reconstruction methods is a simplification 
of more general inverse problems for Maxwell's equations as discussed here.
\end{abstract}
\section{Introduction}
The structure and organization of proteins in cells relate directly to their biological function. Many proteins associate with each other and form functional supramolecular arrangements known as oligomers. Protein oligomers appear in a wide range of crucial biological processes, such as signal transduction, ion transport or immune reactions. The accurate characterization of the supramolecular organization of proteins, including oligomer stoichiometry and its spatial distribution, is fundamental to fully understand these biological processes. 

Several tools address the study of the structure of small biological units, most popular ones being x-ray crystallography and, most recently, cryo-electron microscopy, which have been used to characterize the structure of individual isolated proteins with a high level of detail \cite{Shi14,OrlMyaAndNatKha17}. However, currently these tools cannot be applied for studying quaternary protein assemblies in their native cellular environment, due to a lack in chemical contrast: it is impossible to single out the molecular structures of interest within the plethora of other molecular species. A solution is provided by fluorescence microscopy, where a single protein species is addressed by specific fluorescence labelling directly in the cell. While fluorescence microscopy allows for imaging these labelled structures at a high signal to noise ratio, its resolution is limited to around \SI{200}{\nano\meter} due to the diffraction of light. This prohibits a characterization of oligomeric arrangements with conventional light microscopy, since these structures are smaller than the resolution limit. In summary, the current life sciences are limited by a resolution gap, the upper limit of which is set by the diffraction limit of fluorescence microscopy, the lower limit by the difficulty to interpret crystallography experiments of oligomeric protein complexes.    

In principle, the arrival of superresolution microscopy techniques allows to overcome this gap. Virtually all superresolution techniques are based on fluorescence microscopy, and as such have to overcome or circumvent the problem of optical diffraction. A fluorescent label emits light that is imaged by the microscopy system as a blurry dot. This dot of diffracted light is known as the point spread function (PSF). Its size $d$ (the diameter of the 
essential support of the PSF) is determined by the light wavelength $\lambda$ and by the numerical aperture ($\NA$) of the objective. The angle $\tmax$ is one half of the angular aperture (A). 
Neglecting lens aberrations, it can be described analytically by an Airy function, where the distance between the maximum and its first minimum is given by (see \autoref{eq:Fourier_pupil_polar})

\begin{equation}\label{eq:diffraction_limit}
	d=\frac{\lambda}{2 \refractive \sin(\tmax)} = \frac{\lambda}{2\,\NA} \text{ with } \refractive \text{ the refractive index of the medium}.
\end{equation}

The size of the PSF determines the limit of resolution of conventional light microscopy. It was first described by Abbe \cite{Abb73}, and it is known as Abbe’s limit of diffraction: If two fluorescent labels are closer than the distance d, their PSFs overlap, and they cannot be distinguished from each other. For fluorescence microscopy, with wavelengths in the visible spectrum and objectives with numerical apertures generally lower than $1.3$, this resolution limit is in the order of \SI{200}{\nano\meter}.

For oligomeric protein structures, the distance between their subunits is typically in the range of a few nanometers, far smaller than the diffraction limit. The signals from the individual subunits overlap, and cannot be resolved by conventional light microscopy. A prominent example for such a structure is the nuclear pore complex (NPC), which is a large protein complex located in the nuclear membrane of eukaryotic cells. Its structure is well characterized through electron microscopy \cite{AppKosSpaOriGui15}. NPCs are composed of around $30$ proteins arranged in an $8$-fold symmetry forming a pore that regulates the transport across the nuclear membrane. The overall size ranges approximately between $80$ to \SI{120}{\nano\meter} depending on the species \cite{KabSchw15}. As we see in \autoref{fig:NPCdiffraction_intro}, even if only one protein in each symmetrical subunit is labelled, diffraction leads to one blurry dot as the image of the complex, where we can neither identify the number of subunits nor their spatial arrangement.

\begin{figure}[ht] 
	\centering
	\includegraphics[width=1\textwidth]{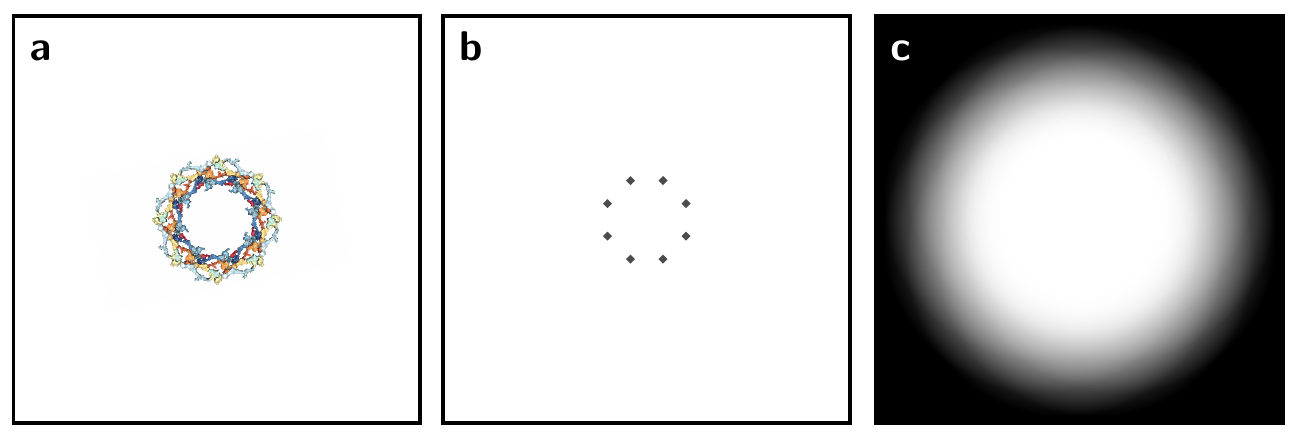}
	\caption{Illustration of the limit of light diffraction. (a) Crystallographic structure of the human NPC (pdb: 5A9Q) \cite{AppKosSpaOriGui15} viewed with NGL viewer \cite{RosBraValDuaPrl18}. (b) Simplification of the NPC structure showing its eight symmetric units. c) Representation of an ideal, diffraction-limited image of the structure in b.}
	\label{fig:NPCdiffraction_intro}
\end{figure}

Great efforts have been made to overcome this barrier, but it was not until the advent of superresolution microscopy that images with a resolution below the diffraction limit could be obtained. As a key asset, superresolution microscopy techniques circumvent Abbe’s limit of diffraction by utilizing photophysical properties of the fluorescent labels: they keep adjacent molecules at different fluorescence states, making it possible to differentiate them from each other. This is achieved using different techniques that can be combined into two general approaches:

\begin{enumerate}
	\item Techniques that use patterned illumination to control the fluorescence state of the labels, selecting which of them emit at a given moment. This approach includes, among others, Stimulated Emission Depletion (STED) \cite{KlaJakDybEgnHel00,KlaEngHel01,WesHel05}, Reversible Saturable Optical Fluorescence Transitions (RESOLFT)  \cite{HofEggJokHel05}, Minimal Photon Fluxes (MINFLUX) \cite{BalEilGwoWes17} or Saturated Structured Illumination Microscopy (SSIM) \cite{Gus05} methods.
	\item Techniques that use properties of the fluorescent labels to stochastically switch their fluorescent state, so that neighbouring labels do not emit at the same time. These techniques are commonly termed Single Molecule Localization Microscopy (SMLM) and include, among others, Stochastic Optical Reconstruction Microscopy (STORM) \cite{RusBat05}, Photoactivated Localization Microscopy (PALM) \cite{BetParSouLinOle06}, and DNA- Points Accumulation for Imaging in Nanoscale Topography (DNA-PAINT) \cite{JunSteScheKuzTin10}. In SMLM, the signals of the individual fluorophores are sequentially localized and used to reconstruct an image with subdiffraction resolution.
\end{enumerate}

In this work, we focus on SMLM techniques, where the working principle is described in \autoref{s:work}. 
The objective of this paper is to derive mathematical models of light propagation through the imaging device and to formulate associated inverse problems.
This sets the base for the formulation of the inverse problems of SMLM, which concerns the localization of the fluorescent labels with high localization precision and the reliable reconstruction of the imaged structures. We show that the currently used imaging workflow in SMLM can be viewed as solving 
an inverse problem for Maxwell's equation (see \autoref{s:ip}).
The inverse problem of SMLM has been previously investigated. In \cite{DavSwaUnlKarGol07}, a model for light propagation based on Maxwell’s equations is proposed and used to localize the positions and strengths of fluorescent dipoles. The model also accounts for the effects of the detection optics and employed a maximum likelihood reconstruction method. The inverse scattering problem with internal sources was investigated in \cite{GilLevScho18}, as a means of achieving sub wavelength resolution in SMLM. A local inversion formula was derived and the inverse problem was shown to be well-posed.

\section{Single Molecule Localization Microscopy (SMLM)} \label{s:work}

\subsubsection*{Principle of SMLM}

An SMLM experiment starts with the labelling of the proteins of interest with a fluorophore. There are different strategies for labelling, depending on the type of fluorescent probe, the molecule of interest, and its location in the cell. It should be taken into account that no labelling strategy is perfect, and labelling efficiency will likely be below $100\%$. In addition, the size of the probe or of the attachment of the linker molecule, in the cases were an intermediate is necessary, can affect the accuracy of the measurement. The influence of these aspects will be addressed in later sections in more detail.

During an SMLM measurement, the experimental conditions are tuned such that most of the fluorophores are in their dark state, and in each frame, a small subset of them is stochastically activated. The active fluorophores are sufficiently isolated from each other so that their PSFs do not overlap. After some time these fluorophores switch to a dark state, and a new subset of fluorophores is stochastically activated. This is repeated thousands of times, to ensure the collection of signals from enough fluorophores. We can see a scheme of an ideal experiment in \autoref{fig:principleSMLM}. The necessity for sparse labels per image and for enough localizations to reconstruct the structure results in movies with tens of thousands of frames. After data collection, all signals in all frames are fitted individually to obtain the coordinates of the fluorescent probes. All localizations are then collected and used to reconstruct a superresolution image.

\begin{figure}[ht] 
	\centering
	\includegraphics[width=1\textwidth]{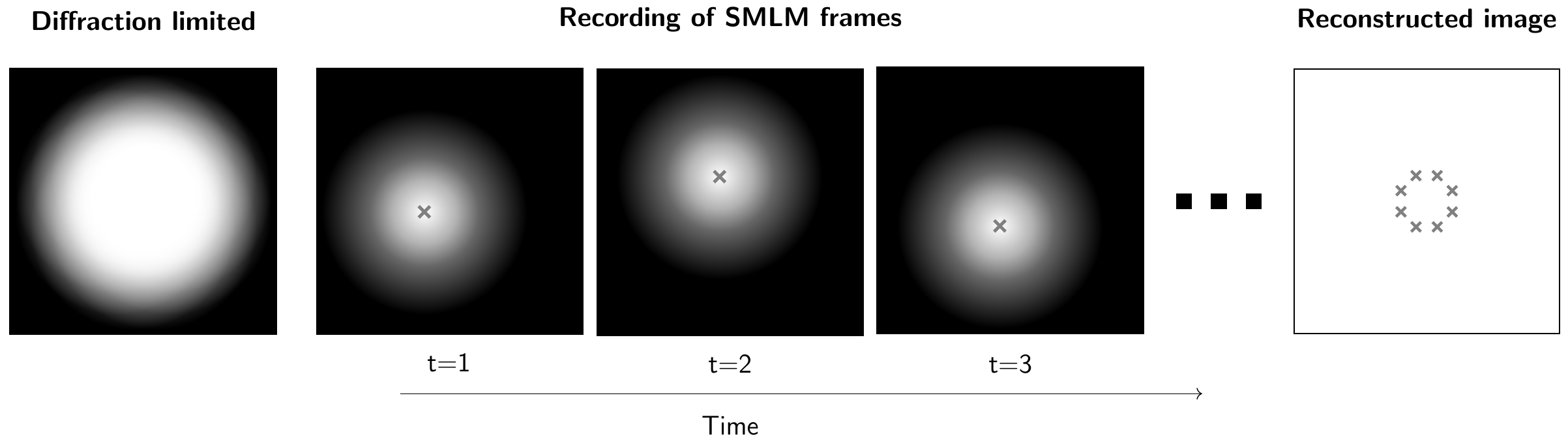}
	\caption{Scheme of an ideal SMLM experiment. In a classical diffraction-limited image, all the fluorophores are active, and the structure underneath – in this example an NPC – is unresolvable. In contrast, in an SMLM experiment only a sparse subset of fluorophores is active per image. In the first frame $(t=1)$ of this exemplary SMLM movie, only one fluorophore is active, while the others remain in their dark state. The PSF of this fluorophore, can be fitted mathematically, which yields the fluorophore localization. In $t\!=\!2$, the first fluorophore returns to its dark state, and another fluorophore is activated and can now be localized. This is repeated until all fluorophores have been localized. All localizations are collected in a final reconstructed image, which corresponds to the structure shown in \autoref{fig:NPCdiffraction_intro}b.}
	\label{fig:principleSMLM}
\end{figure}

The fluorophores used in SMLM are able to spontaneously change their fluorescence state. This property is commonly known as photoswitching or blinking. One dark-bright-dark cycle is usually called a blink. Photoswitching mechanisms are different for different kind of probes and can be the result of conformational changes in the dye molecule, chemical changes, or binding events. Typically, a combination of light illumination and the choice of special chemical conditions is used for deactivation, i.e. the transitions to a long-lived dark state. The activation, i.e. the transition back from this state, is usually light-induced, although other phenomena may apply (for example, binding events in the case of DNA-PAINT microscopy). An extensive review of available SMLM fluorophores and their properties can be found in \cite{LiVau18}.

In an ideal experiment, each fluorophore undergoes exactly one blink, in which it emits a high number of photons, and it remains in a dark state for the rest of the measurement. Commonly, however, fluorophores undergo multiple blinking events, or remain inactive during the whole imaging procedure. These non-ideal behaviours directly influence the quality of the final image, and they should be considered when analysing the data, as will be detailed in the next sections. The emission behaviour of a fluorescent label, and therefore the quality of the collected data, will depend largely on the kind of fluorophore used, the labelling strategy followed, and the environmental conditions of the fluorophore \cite{LevRan11}.

\subsubsection*{Fitting of localizations}

In an SMLM experiment, thousands of individual frames are recorded. Obtaining the final image requires post-processing of the recorded raw data. All blinking events are analyzed and the positions of the molecules are determined by fitting their signals. A variety of algorithms and software packages exist that can be applied to analyze the data \cite{SagPhaBabLukPen19}. Often, a Gaussian function is fitted to the detected intensity data using a maximum likelihood or least squares method. The coordinates of the center of the Gaussian peak are then taken as the position of the molecule. Finally, the localizations obtained from all recorded frames are combined to yield the reconstructed image.

\subsubsection*{Localization error and bias}
The achievable resolution in SMLM depends on how well the position of a molecule can be estimated by fitting its PSF. The fitting procedure is influenced by various factors of signal quality, including brightness, background noise and the pixel size of the detector. The error in the estimation of the molecule position follows a normal distribution. Its standard deviation is referred to as localization precision $\sigma_{\text{loc}}$. The mean of the error distribution is the localization accuracy $\mu_{\text{loc}}$. In the optimal case, it holds that $\mu_{loc}=0$, i.e. the estimation is unbiased. However, in practice a bias in the localization procedure may be present, e.g. due to distortions of the PSF. A bias may also arise from the labeling procedure. The size of some labels itself can be rather large, which displaces the position of the fluorophore from the actual molecule of interest by up to tens of nanometers.
Various formulas for the estimation of the localization precision $\sigma_{\text{loc}}$ have been proposed in the literature \cite{DesZanMloDiaBew14}. The theoretical limit for the best achievable localization precision is given by the Cramér-Rao lower bound (CRLB), which is critically dependent on the collected number of photons \cite{SmiJosRieLid10}.

\subsubsection*{Blinking and overcounting }
In SMLM, fluorophores switch between a fluorescent on-state and a non-fluorescent off-state. The transitions between the two states occur stochastically. Ideally, each fluorophore is detected exactly once during the whole imaging procedure, i.e. it is in the on-state in exactly one frame.

However, this is unlikely in a real experimental situation. Due to the stochastic nature of transitions between the states, fluorescent dyes can stay in the on-state for several consecutive frames and, moreover, repeatedly switch between the on- and the off-state. Thus, a single molecule may be detected multiple times. However, the position coordinates assigned to each detection slightly differ due to localization errors. Hence, it is not possible to distinguish whether localizations belong to one blinking molecule or to different molecules. Overcounting of single protein molecules may also occur as a consequence of non-stoichiometric labeling: Depending on the labeling procedure, a single molecule of interest does not necessarily carry one fluorescent dye only, but may be linked to multiple dyes.

The problem of overcounting is depicted in \autoref{fig:locmap}. Here, individual molecules of the NPC are assumed to be detected multiple times during the imaging procedure, leading to a misrepresentation of the actual structure.

Blinking statistics can be determined experimentally by labeling at sufficiently low concentrations of the dye, so that localizations from individual molecules of interest can be well separated. Analysis of the acquired localization data allows to determine statistics for the number of detections of individual molecules of interest, the duration of emission bursts ($t_{\text{on}}$) and the duration of dark times ($t_{\text{off}}$). In \autoref{fig:timeTraceBlink}, a schematic of a time trace of occupied states for an individual molecule is shown. An exemplary result for the blinking statistics of a fluorescent dye is depicted in \autoref{fig:blinkStat}.

\begin{figure}[ht] 
	\centering
	\includegraphics[width=0.6\textwidth]{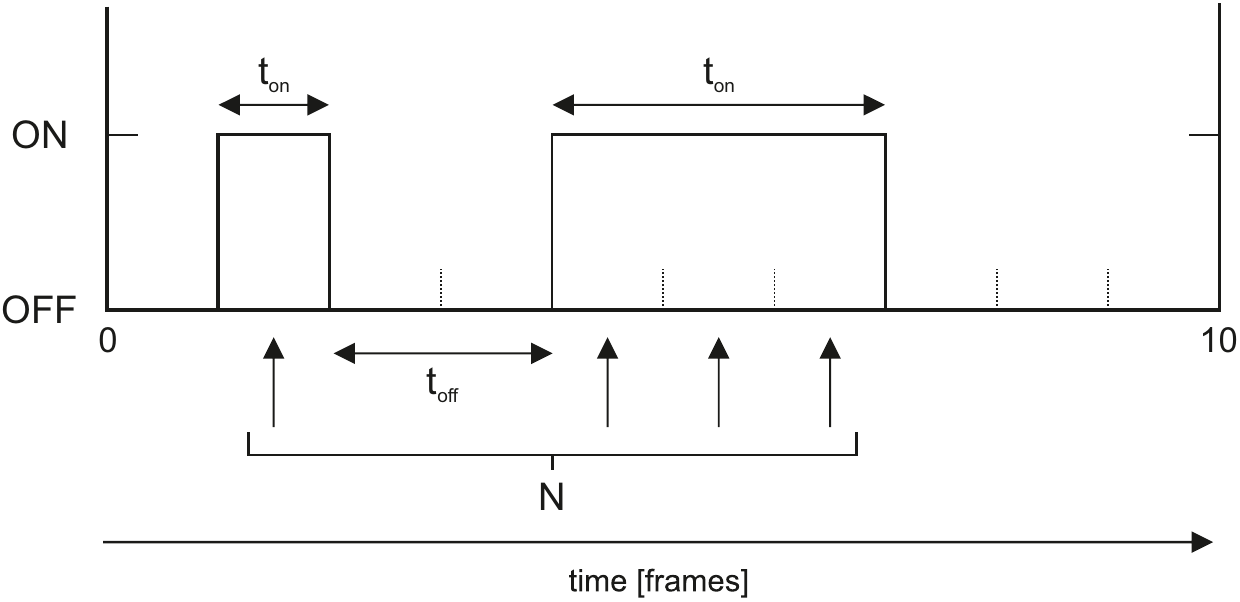}
	\caption{Exemplary time trace for a fluorophore. The fluorophore can switch between a dark off-state and a bright on-state. Indicated are the on- and off-time ($t_{\text{on}}$, $t_{\text{off}}$), representing the number of consecutive frames the molecule is in its bright or dark state, respectively, and the number of detections $N$.}
	\label{fig:timeTraceBlink}
\end{figure}

\begin{figure}[ht] 
	\centering
	\includegraphics[width=0.8\textwidth]{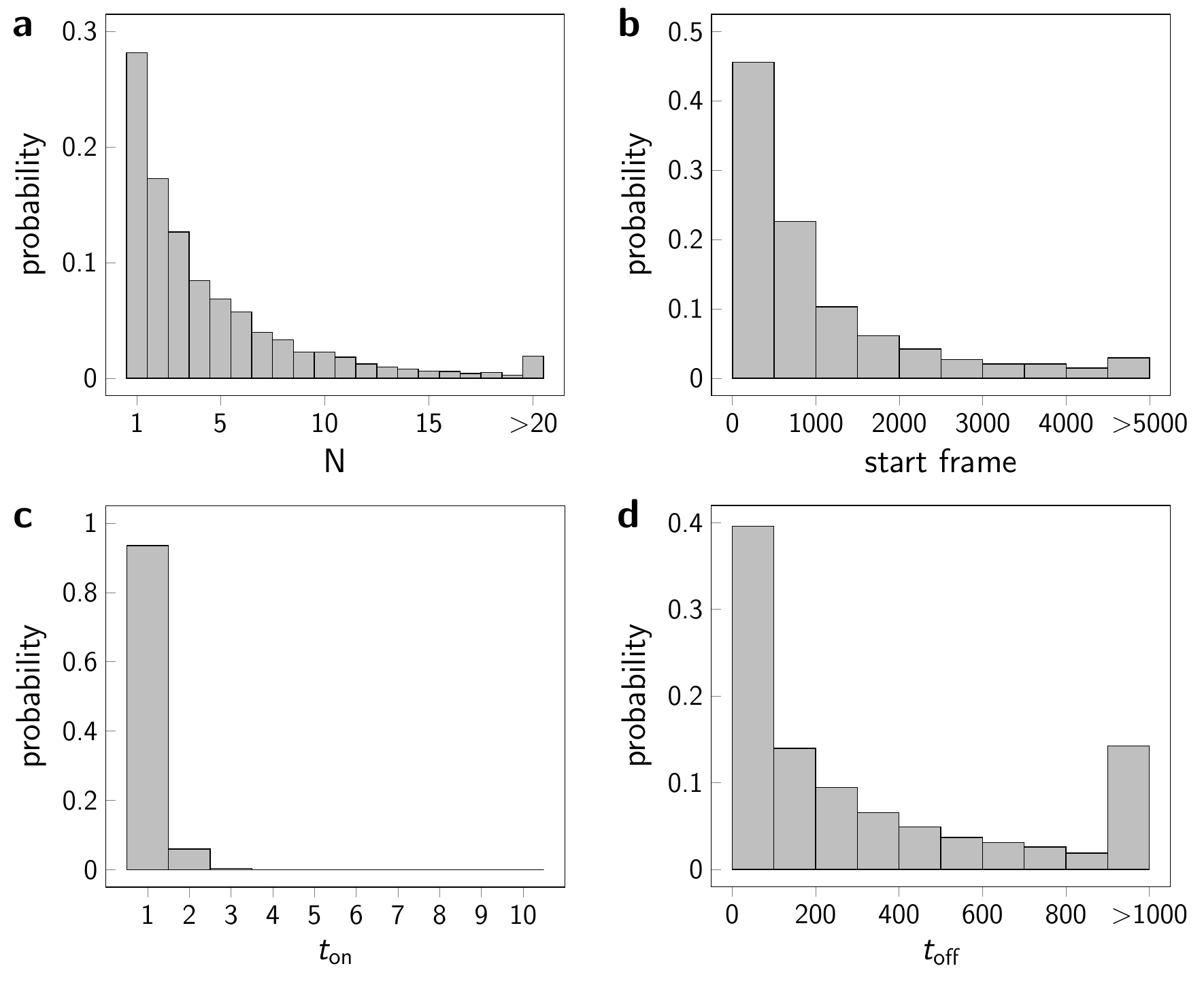}
	\caption{Experimentally derived blinking statistics for Alexa Fluor 647, a commonly used fluorophore for SMLM. Shown are histograms for the number of detections $N$ of a single fluorophore (a), the on-time $t_{\text{on}}$ (b) and the off-time $t_{\text{off}}$ (c). }
	\label{fig:blinkStat}
\end{figure}

A simple approach to account for multiple detections of the same molecule is to merge localizations that occur in close spatial and temporal proximity \cite{AnnVanScaRotRad11}. However, the results of this method highly depend on the chosen thresholds. Moreover, it cannot account for long-lived dark states.
Other post-processing algorithms rely on experimentally derived blinking statistics in order to correct for overcounting. However, care must be taken here, because photophysics of fluorophores, in particular blinking, depends on the local environment of the dye   and may likely vary under different experimental conditions \cite{LevRan11}. An overview over different methods for correcting overcounting artifacts is given in  \cite{BauArnRosBraSchu19}.

\subsubsection*{Forward simulation of SMLM localization maps}
In the following, we describe the main steps in the simulation of localization maps obtained by a 2D SMLM experiment. \autoref{fig:locmap} shows simulation results of the spatial arrangement for the example of NPCs.

\begin{figure}[ht] 
	\centering
	\includegraphics[width=0.9\textwidth]{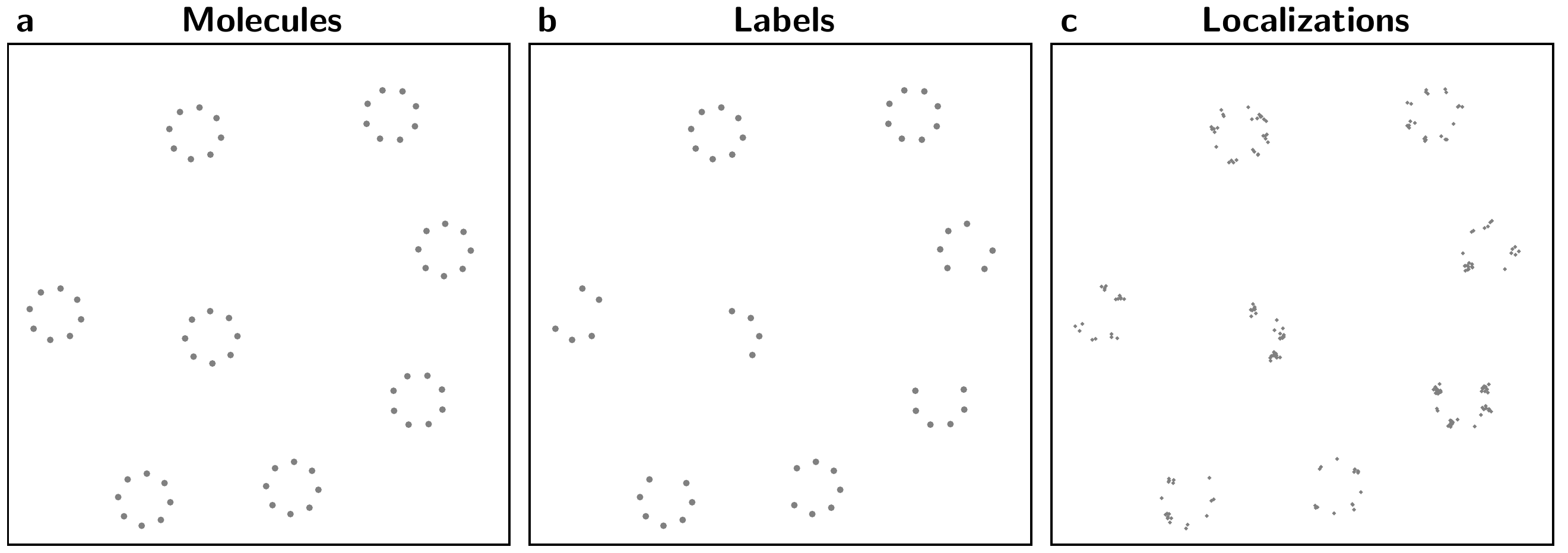}
	\caption{Simulation of a SMLM experiment for the nuclear pore complex (NPC). (a) True spatial arrangement of molecules. The distance between two neighboring molecules of the NPC was set to 40nm. The molecules are labeled with fluorescent probes (b) with a labeling efficiency of $80\%$. The SMLM experiment was simulated with a localization precision of $\sigma_{\text{loc}}\!=\!\SI{5}{\nano\meter}$, and the blinking statistics from \autoref{fig:blinkStat}. (c) Obtained localization map. Due to overcounting and the finite localization precision, individual molecules are observed multiple times.}
	\label{fig:locmap}
\end{figure}

The actual question of interest is the structural arrangement of molecules in a cell membrane. The first step in the simulation is therefore to spread the position of molecules on the region of interest according to the desired distribution. For example, the molecules can be spread randomly, in clusters, or as oligomers of a certain shape. The assigned positions represent ground truth.

As a second step, the simulated molecules are fluorescently labeled. In real experimental conditions, not all molecules of interest are detected: some proteins are not bound to a dye, or the dye is never detected during the imaging time. In the simulations this is accounted for by adjusting the mean labeling efficiency, a parameter in the interval $\left[0,1\right]$ that determines the mean fraction of molecules observed in the experiment. Labeled molecules are selected randomly from all simulated molecules according to the chosen distribution.

Next, overcounting has to be included in the simulation. As described above, a protein molecule can be detected multiple times during the whole imaging procedure. To account for this in the simulations, the number of detections of each molecule of interest, the frame of its first appearance  and the duration of on- and off-times are included. For each labeled molecule, these variables are drawn randomly either from experimentally acquired blinking statistics or from specified theoretical distributions. This allows to assign to each molecule a list of those frames, in which it is detected.

The last step in the simulation is to account for measurement errors. For each detection of a molecule, its true simulated position is displaced by adding a localization error, which is drawn randomly from a normal distribution. The mean and the standard deviation of the error distribution correspond to the localization accuracy and localization precision, respectively. Ideally, the mean value is zero, i.e. the localization is accurate. However, inaccuracy may occur, e.g. due to certain properties of the labeling procedure. Localization precision depends mainly on the collected number of photons and background noise. Typical values that are achieved in SMLM experiments are commonly around 10nm, but precisions of \SI{1}{\nano\meter} have been claimed.

The final result of the simulation is the localization map, i.e. a list of localization coordinates with the according frame numbers of detection. An exemplary simulated localization map for the NPC is shown in \autoref{fig:locmap}. The obtained localizations are the basis for further analysis.

In the next section we model the experiment mathematically. In order to do so we summarize essential notation first in \autoref{tab:notation}.
\begin{center}
\begin{table}
 \begin{tabular}{|c|c||c||c|c|}
 \hline
 Symbol & Description & Reference & Relations & Units \\
 \hline
 \hline
 $\texttt{d}_0$ & maximum thickness of lens & & & $m$\\
 $\texttt{d}$ & thickness of lens (function of height) & \autoref{fig:Scheme} & & $m$\\
 $\flens$ & focal length of the tube lens & \autoref{fig:Scheme} & & $m$ \\
  $\focus_\obj$ & focal length of the objective & \autoref{fig:Scheme} & & $m$ \\
 \hline
 $\lambda$ & wavelength & \autoref{eq:diffraction_limit} & & $m$\\
 $\refractive = 1$ & refractive index in vacuum & \autoref{eq:diffraction_limit} & & $-$ \\ 
 $\refractive_l$ & refractive index of lens & \autoref{eq:phase_shift} & & $-$ \\
 $d$ & resolution limit & \autoref{eq:diffraction_limit} & & $m$\\
 $\NA$ & numerical aperture & \autoref{eq:diffraction_limit} & $d = \frac{\lambda}{2 \NA}$ & $-$\\
 $\tmax \in [0,\pi/2)$ & angle of aperture & \autoref{eq:diffraction_limit} & $\NA = \refractive \sin(\tmax)$ & $-$\\
 $\epsilon_0$ & electric permittivity (vac.) & \autoref{eq:dipl} & & $F/m$ \\ 
 $\mu_0$ & magnetic permeability (vac.) & \autoref{eq:mag} & & $H/m$ Henries per $m$\\
 $\omega$ & wave frequency & & &$Hz = 1/s$\\
 $c$ & light speed (vac.)& \autoref{eq:kappa} & & $m/s$ \\
 $\kappa$, $\kappa_\ve$ & wave number & \autoref{eq:kappa} & $\kappa = \frac{\omega}{c} = \frac{2\pi}{\lambda}$ & $1/m$\\
 $\chi$ & susceptibility & \autoref{chi_1} & & $-$\\ 
 \hline
 $\bpsi$ & dipol & \autoref{eq:emit} & & \\
 $\Psi_p$, $\Psi_s$ & dipol components & \autoref{eq:ps3} & & \\
 \hline
\end{tabular}
\caption{\label{tab:notation} Physical parameters used in the paper and dimensions.}
\end{table}
\end{center}
\clearpage

\section{Mathematical Prerequisites}
In what follows we summarize some basic mathematical framework:

\subsection{Distributions}
In order to define distributions (generalized functions) we need to introduce appropriate function spaces first:
\begin{definition} The \emph{Schwartz-space} of functions from $\R^n$ to $\C$ is defined as  
 \begin{equation} \label{eq:Schwarz}
  \sspace{\R^n;\C} := \set{ \phi \in C^\infty(\R^n;\C): 
                         \text{ for all } \alpha,\beta \in \N_0^n,  
                         \norm{\phi}_{\alpha,\beta} := \sup_{\bx \in \R^n} \abs{\bx^\alpha \partial^\beta \phi(\bx)} < \infty}.
 \end{equation}
 Accordingly the \emph{Schwartz-space} of \emph{vector valued functions} is defined by 
 \begin{equation} \label{eq:Schwarz_m}
  \sspace{\R^n;\C^m} := \set{ {\bf \Phi} \in C^\infty(\R^n;\C^m): \Phi_i \in \sspace{\R^n;\C}, i=1,\ldots,m}.
 \end{equation}
 The space of linear functionals $T:\sspace{\R^n;\C^m} \to \C$ for which there exist $k,l \in \N_0$ and some $C > 0$ such that for all 
 ${\bf \Phi} \in \sspace{\R^n;\C^m}$ the following inequality holds 
 \begin{equation}\label{eq:td}
 \abs{\scal{T}{\bf \Phi}} := 
         \abs{T {\bf \Phi}} \leq C \sum_{i=1}^m \sum_{\abs{\alpha} \leq k, \abs{\beta} \leq l} \norm{\Phi_i}_{\alpha,\beta}
 \end{equation}
 is called \emph{space of tempered distributions} and is denoted by $\tempdist{\R^n;\C^m}$.
\end{definition}

\begin{definition}[Causal Distribution]
\label{def:causdist}
 A tempered distribution $T \in \mathcal S'(\R^{n-1} \times \R;\C)$ is called \emph{causal} if its support in time is included in $[0,+\infty)$. That is $T$ is causal if and only if for all test functions $\phi \in \mathcal S(\R \times \R^{n-1};\C)$ which satisfy 
 \[\phi(\bx,t) = 0 \text{ for all } t \gs 0, \, \bx\in \R^{n-1}, \ \] 
 we have 
 \[ \scal{T}{\phi} = 0.\]
\end{definition}
For causal distributions, the quantity $|\scal{T}{\phi}|$ can be estimated as follows.
\begin{lemma}
 \label{lem:causdist}
 Let $T \in \mathcal S'(\R^{n-1} \times \R;\C)$ be causal. Then, there exists a constant $C>0$ (which depends only on $l$) 
 such that for all test functions $\phi \in \mathcal S(\R^{n-1}\times \R;\C)$ the following estimate holds:
 \begin{equation}
 \label{eq:estcausdist}
 \left \vert \scal{T}{\phi} \right \vert \ls C \sup_{\abs{\alpha} \leq k, \abs{\beta} \leq l} 
 \sup_{\substack{t \gs -1 \\ \bx\in \R^{n-1}}} |(\bx , t)^\alpha \partial^\beta \phi(\bx , t)|.
 \end{equation}
\end{lemma}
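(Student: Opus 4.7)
The tempered-distribution estimate \eqref{eq:td} already gives a bound of the form $|\langle T,\phi\rangle| \ls C\sum_{|\alpha|\ls k,|\beta|\ls l}\sup_{(\bx,t)\in\R^{n-1}\times\R}|(\bx,t)^\alpha \partial^\beta \phi(\bx,t)|$ but with the supremum taken over \emph{all} $t\in\R$. The point of \autoref{lem:causdist} is that for a causal $T$ the values of $\phi$ on the half-line $t<-1$ are irrelevant, and one can truncate the test function there at the cost of a harmless multiplicative factor. The plan is therefore to insert a fixed smooth time cutoff, use causality to remove the part living in the far past, and apply \eqref{eq:td} to the truncated test function.

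\textbf{Step 1: cutoff and causality.} Fix once and for all a function $\chi\in C^\infty(\R;[0,1])$ with $\chi(t)=0$ for $t\ls -1$ and $\chi(t)=1$ for $t\gs -\tfrac12$, and set $\tilde\phi(\bx,t):=\chi(t)\phi(\bx,t)$. Since $\tilde\phi\in\sspace{\R^{n-1}\times\R;\C}$ and the complement $(1-\chi(t))\phi(\bx,t)$ vanishes for all $t\gs 0$, \autoref{def:causdist} yields $\scal{T}{(1-\chi)\phi}=0$, hence $\scal{T}{\phi}=\scal{T}{\tilde\phi}$.

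\textbf{Step 2: apply \eqref{eq:td} to $\tilde\phi$ and use Leibniz.} Applying the defining estimate of $T$ to $\tilde\phi$ gives
\[
|\scal{T}{\phi}|=|\scal{T}{\tilde\phi}|\ls C\sum_{|\alpha|\ls k,|\beta|\ls l}\sup_{(\bx,t)\in\R^{n-1}\times\R}\bigl|(\bx,t)^\alpha \partial^\beta\bigl(\chi(t)\phi(\bx,t)\bigr)\bigr|.
\]
Writing $\beta=(\beta',\beta_n)$ with $\beta'$ multi-index in $\bx$ and $\beta_n$ the time index, the Leibniz rule produces
\[
\partial^\beta(\chi\phi)=\sum_{j=0}^{\beta_n}\binom{\beta_n}{j}\chi^{(j)}(t)\,\partial_t^{\beta_n-j}\partial^{\beta'}\phi(\bx,t),
\]
and since $\chi$ is fixed, $M:=\max_{j\ls l}\|\chi^{(j)}\|_\infty<\infty$. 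Hence there is a constant $C'$ (depending only on $l$ and $\chi$) such that
\[
\bigl|(\bx,t)^\alpha\partial^\beta(\chi\phi)(\bx,t)\bigr|\ls C'\,\chi_{[-1,\infty)}(t)\sum_{|\tilde\beta|\ls l}\bigl|(\bx,t)^\alpha\partial^{\tilde\beta}\phi(\bx,t)\bigr|,
\]
where the indicator factor encodes the fact that $\chi$ (and therefore $\chi^{(j)}$ for every $j$) is supported in $\{t\gs-1\}$.

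\textbf{Step 3: restrict the supremum.} Because the left-hand side vanishes for $t<-1$, the supremum over $(\bx,t)\in\R^{n-1}\times\R$ equals the supremum over $\bx\in\R^{n-1},\,t\gs-1$. Absorbing all constants into a new $C$, we obtain exactly \eqref{eq:estcausdist}. The only non-routine observation is the causality step of Step~1; everything after it is bookkeeping via Leibniz and the boundedness of the fixed cutoff $\chi$ and its derivatives.
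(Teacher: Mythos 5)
Your proof is correct and follows essentially the same route as the paper's: a fixed smooth time cutoff supported in $\{t\gs -1\}$, the causality identity $\scal{T}{\phi}=\scal{T}{\chi\phi}$, the defining tempered-distribution estimate applied to the truncated test function, and a Leibniz expansion to absorb the (fixed, bounded) derivatives of the cutoff before restricting the supremum to $t\gs -1$. The only cosmetic difference is your choice of cutoff plateau ($\chi\equiv 1$ on $t\gs -\tfrac12$ versus the paper's $t\gs 0$), which changes nothing in the argument.
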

\begin{proof}
 Let $\varkappa$ be a $\mathcal C^\infty(\R;\R)$ cut-off function, that satisfies $\varkappa(t) = 1$ for $t \gs 0$ and $\varkappa(t) = 0$ for $t \ls -1$. Then, for all 
 test functions $\phi \in \mathcal S(\R^{n-1} \times \R;\C)$, we define $\psi = \varkappa \phi.$ For $t \gs 0$, we have $\psi - \phi = 0$, which means, since $T$ is causal, 
 that $\scal{T}{\psi - \phi} = 0$, or in other words,
 \begin{equation} \scal{T}{\phi} = \scal{T}{\psi}.
 \label{eq:tphieqtpsi}
 \end{equation}
 Now, let us use the definition of tempered distribution for $T$: there exists $k,l \in \mathbb N_0$ and $\tilde C>0$ such that
 \begin{equation} |\scal{T}{\psi}| \ls \tilde C \sup_{|\alpha| \ls k, |\beta| \ls l} \ \sup_{(\bx , t) \in \R^{n-1} \times \R}  |(\bx , t)^\alpha \partial^\beta \psi(\bx , t)|.\label{eq:Tistd}\end{equation}
 Now, the function $\psi$ has support in $\R^{n-1} \times [-1,+\infty)$, which means that the last inequality can rewrite
\[ |\scal{T}{\psi}| \ls C \sup_{|\alpha| \ls k, |\beta| \ls l} \ \sup_{(\bx , t) \in \R^{n-1} \times [-1,+\infty)}  |(\bx , t)^\alpha \partial^\beta \psi(\bx , t)|.\]
Finally, denoting $\beta = (\beta_t,\beta_{x_1}, \cdots,\beta_{x_{n-1}})$, one can expand
\[ \partial^\beta \psi = \sum_{i = 0}^{\beta_t} \binom{\beta_t}{i}\varkappa^{(i)}(t) \partial^{(\beta_t-i,\beta_{x_1}, \cdots,\beta_{x_{n-1}})} \phi (\bx , t). \]
Since the function $\varkappa$ is fixed (independent of $\phi$), the quantity
\[C_l = \sup_{i \ls l} \sup_{ t \in \R} |\varkappa^{(i)} (t)|\]
is finite and independant of $\phi$ and we have 
\[|\partial^\beta \psi(\bx , t)| \ls 2^l C_l \sup_{\abs{\beta} \leq l} |\partial^\beta \phi(\bx , t)|\]
and we can finally conclude, taking the supremum on $t \gs -1$, that
\[\sup_{|\alpha| \ls k, |\beta| \ls l} \ \sup_{(\bx , t) \in \R^{n-1} \times [-1,+\infty)}  |(\bx , t)^\alpha \partial^\beta \psi(\bx , t)| \ls 2^l C_l \sup_{|\alpha| \ls k, |\beta| \ls l}\sup_{(\bx , t) \in \R^{n-1} \times [-1,+\infty)}  |(\bx , t)^\alpha \partial^\beta \phi(\bx , t)|.  \]
Since $\psi$ has a support included in $[-1,+\infty)$, one can in the left hand side of the inequality take the supremum over $t \in \R$. Plugging this inequality into \autoref{eq:Tistd} and recalling \autoref{eq:tphieqtpsi}, we get \autoref{eq:estcausdist}.
 \end{proof}
 
 We need to notationally differ between $\delta$-distribution in different dimensions:
\begin{definition}[$\delta$-Distributions] \label{de:dist}
 $\delta: \R^3 \to \R$ denotes the $3$-dimensional $\delta$-distribution. $\eindelta : \R \to \R$ denotes the $1$-dimensional $\delta$-distribution.
 For $r_0 \in \R$, $\eindelta_{r_0}: \R \to \R$ is defined by $\eindelta_{r_0}(r)=\eindelta(r-r_0)$ for all $r \in \R$.
 $\eindelta':\R \to \R$ denotes the derivative of the $1$-dimensional $\delta$-distribution.
\end{definition}

\subsection{Fourier- and \texorpdfstring{$k$}{k}-Transform}
The most important mathematical tool in this paper is the Fourier-transform:

\begin{definition}[Temporal Fourier-Transform] 
 Let $T \in \mathcal S'(\R \times \R^{n-1},\C).$ We define its \emph{Fourier-transform} $\hat T$ by its action on a test function $\phi \in \mathcal S(\R \times \R^{n-1},\C)$
 \begin{equation}
  \label{eq:FT}
  \scal{\hat T}{\phi} := \scal{T}{\check{\phi}}
 \end{equation}
where
\[ \check{\phi}(t) := \frac{1}{\sqrt{2 \pi}} \int_{\xi = -\infty}^{\infty} \e^{\i\xi t} \phi(\xi) \dd \xi.\]
As defined, the operator $T \mapsto \hat T$ is well defined and continuous from $S'(\R \times \R^{n-1}, \C)$ into itself \cite{Gru09} with inverse $T \mapsto \check T$ with for all test functions $\phi$,
\[ \scal{\check T}{\phi} := \scal{T}{\hat \phi} \]
where
\[ \hat \phi(\xi) := \frac{1}{\sqrt{2 \pi}} \int_{t = -\infty}^{\infty} \e^{-\i \xi t} \phi(t) \dd t\]
is the Fourier-transform on the Schwartz space (it coincides with the one given in \autoref{eq:FT}). 
\end{definition}

The Fourier-transform in spatial variables is called the $k$-transform:
\begin{definition}[$k$-transform] \label{eq:k} 
Let $i, \hat{i} \in \set{1,2,3}$, $\br, \bk \in \R^3$ and denote by $\brk^{(i)}, \brk^{(i,\hat{i})} = (\tilde{k}_1,\tilde{k}_2,\tilde{k}_3)$
where 
$$\tilde{k}_j = \left\{ \begin{array}{rl} 
                   r_j & \text{ if } j \neq i\\
                   k_i & \text{ if } j=i
                  \end{array} \right. \quad \text{ and } \quad \tilde{k}_j = \left\{ \begin{array}{rl} 
                   r_j & \text{ if } j \neq i \text{ and } j \neq \hat{i}\\
                   k_i & \text{ if } j=i \text{ or } j=\hat{i}
                  \end{array} \right.,
                  $$
respectively.

The $k$-transform of the Fourier-transform of ${\bf V}: \R^3 \to \C^3$ in direction $x_i$, $(x_i,x_j)$ and in all three directions 
are defined by
\begin{equation*}
\boxed{
\begin{aligned}
  \mathcal{F}_i[{\bf V}](\brk^{(i)}) &:=  \frac{1}{\sqrt{2\pi}} \int_{\R} \e^{-\i k_i r_i} {\bf V}(\br) d r_i, \\
  \mathcal{F}_{ij}[{\bf V}](\brk^{(i,j)}) &:=  \frac{1}{2\pi} \int_{\R} \int_{\R} \e^{-\i (k_i r_i + k_j r_j)} {\bf V}(\br) d r_i d r_j, \\
  \mathcal{F}[{\bf V}](\bk) &:=  \mathcal{F}_1[\mathcal{F}_2[\mathcal{F}_3[{\bf V}]]](\bk) = 
  \frac{1}{{(2\pi)}^{\frac{3}{2}}} \int_{\R^3} \e^{-\i \bf k \cdot \bf r} {\bf V}(\br) d \br. 
  \end{aligned}}
\end{equation*}
\end{definition}

\begin{remark}
 From \autoref{de:dist} it follows that for $\br^0 \in \R^3$ fixed 
 \begin{equation} \label{eq:dist3}
  \mathcal{F}[\br \to \delta(\br-\br^0)](\bk) = \frac{1}{{(2\pi)}^{\frac{3}{2}}} \e^{-\i \bf k \cdot \bf r^0}.
 \end{equation}
\end{remark}

\subsection{Coordinate Systems}
\begin{definition}[Spherical Coordinates] \label{de:basis}
 Associated to $\br = \begin{pmatrix}
                         r_1 \\ r_2 \\ r_3
                        \end{pmatrix}
\in \R^3$ is the polar coordinate representation $(r=\abs{\br},\theta,\varphi) \in [0,\infty) \times [0,\pi] \times [0,2\pi)$ 
 such that
 \begin{equation}\label{eq:br}
  \br = r \begin{pmatrix} 
                  \sin(\theta) \cos(\varphi) \\ \sin(\theta) \sin(\varphi) \\ \cos(\theta)
                 \end{pmatrix}.
 \end{equation}
 \end{definition}
\section{Mathematical Modeling of Light Propagation} \label{sec:model}
We consider an optical single molecule localization microscopy experiment. Therefore a mathematical modeling of the light propagation via Maxwell's equation is appropriate:
We consider macroscopic Maxwell’s equations (in SI units), in order to model the interaction of the incoming light with the sample. 
These equations describe the time evolution of the \emph{electric field} $\bE: \R^3 \times \R \to \R^3 $  and the \emph{magnetic field} 
${\bf B}: \R^3 \times \R \to \R^3$ for a given \emph{charge density} $\rho : \R^3 \times \R \to \R $ and an \emph{electric current} 
${\bf J}: \R^3 \times \R \to \R^3$:
\begin{subequations}\label{eqMacroMaxwell}
\begin{align}\label{eq:Gelectriclaw_mac}
\Div {\bf D}(\br;t) &= \rho(\br, t),  &\br \in \R^3, t \in \R, \\ \label{eq:Melectriclaw_mac}
\Div {\bf B}(\br;t) &= 0, &\br \in \R^3, t \in \R, \\ \label{eq:Faradaylaw_mac}
\Curl \bE(\br, t)&= -\partial_t {\bf B}(\br;t), &\br \in \R^3, t \in \R, \\ \label{eq:Amperelaw_mac}
\Curl {\bf H}(\br;t) &= \partial_t {\bf D}(\br;t) + {\bf J}(\br;t), &\br \in \R^3, t \in \R.
 \end{align}
\end{subequations}
Here 
\begin{equation} \label{eq:dipl}
{\bf D} \equiv \epsilon_0\bE + {\bf P}
\end{equation}
denotes the \emph{electric displacement} and 
\begin{equation} \label{eq:mag}
 {\bf H} \equiv \frac{1}{\mu_0}{\bf B} -{\bf M}
\end{equation} 
denotes the \emph{effective magnetic field}, related to the \emph{electric} and \emph{magnetic polarization fields} ${\bf P}$ and ${\bf M}$, 
respectively. All along this paper the differential operators $\Grad$, $\Div$, $\Curl$, $\Delta$ are meant with respect to the variables $\br$.
More background on modeling of electromagnetic wave propagation can be found in \cite{Jac98}. 

In the following we make a series of assumptions for simplifying Maxwell's equations: 
\subsection{Material Properties}
Biological specimens as we are considering in single molecule localization microscopy experiments can be assumed to be non-magnetizable:
\begin{assumption}[Non-Magnetizeable Medium] \label{ass:maxwell}
  A medium is \emph{non-magnetizable} if 
  \begin{equation}\label{eq:ass:maxwell}
   {\bf M}(\br;t) = 0 \text{ for all } \br \in \R^3, t \in \R.
  \end{equation}
\end{assumption}
\begin{remark}
 In single molecule localization microscopy experiments, fluorescent dyes are attached to molecules of interest and upon excitation of the probe with a strong laser impulse they emit light. 
 The mathematical modeling of this process is omitted and we are considering only the influence on a macroscopic level, meaning that 
 charge density and currents are induced. A detailed mathematical modeling of the chemical processes would require a modeling with \emph{microsopic} 
 Maxwell's equations, which is omitted here for the sake of simplicity. In a similar context microscopic Maxwell's equations have been considered 
 in Optical Coherence Imaging in \cite{ElbMinSch15}. 
 
 On a macroscopic level, from \autoref{eq:Gelectriclaw_mac} -- \autoref{eq:Amperelaw_mac} it follows from \autoref{ass:maxwell} that
 \begin{equation} \label{eq:J}
  \partial_t \rho (\br;t) = - \Div {\bf J}(\br;t) \text{ for all } \br \in \R^3, t \in \R.
 \end{equation}
\end{remark}

Taking into account \autoref{ass:maxwell} and combining \autoref{eq:Faradaylaw_mac} and \autoref{eq:Amperelaw_mac} we obtain the 
\emph{vector Helmholtz equation} for the \emph{electric field} $\bE$:
\begin{align}\label{eq:VH}
\boxed{
 \Curl \Curl \bE(\br;t) +\frac{1}{c^2} \partial_{tt} \bE(\br;t) =  
 - \frac{1}{\epsilon_0 c^2}\partial_{tt} {\bf P}(\br;t) - \frac{1}{\epsilon_0 c^2}\partial_{t} {\bf J}(\br;t) \text{ for all } \br \in \R^3, t \in \R}
\end{align} 
where $\mu_0 \epsilon_0 = 1/c^2$, with $c$ being the speed of light in vacuum.

\begin{remark}
If the right hand side of \autoref{eq:VH} vanishes than $\bE$ describes the propagation of the electric field in vacuum. 
The right hand side models the interaction of light and matter and the effect of the external charges.

\autoref{eq:VH} is understood in a distributional sense. That means that for every 
${\bf \Phi} \in \sspace{\R^3;\R^3}$ and $\bpsi \in \sspace{\R;\R^3}$, and with ${\bf \Phi} \otimes \bpsi \in \sspace{\R^3 \times \R;\R^3}$ denoting 
the vector valued function consisting of componentwise multiplication,
\begin{align}\label{eq:VH_weak}
\boxed{
\scal{\bE}{(\Curl \Curl {\bf \Phi}) \otimes \bpsi} +
\scal{\bE}{\frac{1}{c^2} {\bf \Phi} \otimes \partial_{tt} \bpsi} =  
 - \scal{{\bf P}}{\frac{1}{\epsilon_0 c^2}{\bf \Phi} \otimes \partial_{tt}\bpsi}  
 + \scal{{\bf J}}{\frac{1}{\epsilon_0 c^2} {\bf \Phi} \otimes \partial_t \bpsi}.}
\end{align} 
\end{remark}

\subsection{Linear optics}
In linear optics one assumes a linear relation between the magnetic polarization ${\bf P}$ and the electric field $\bE$. 
\begin{assumption}[Polarization Response Function in Linear Optics] \label{ass:lprf} ${\bf P}$ and $\bE$ satisfy the linear relation,
\begin{align}\label{P1_T}
 {\bf P}(\br;t) = \epsilon_0 \int_{\tau = -\infty}^{\infty} \tensortwo(\br;t,\tau) \bE(\br,\tau) d \tau, 
\end{align} 
where $(t;\tau) \to \tensortwo(\br;t,\tau) \in \R^{3 \times 3}$ is a matrix valued function that averages the electric field over time. 
$\tensortwo$ is called the (linear) \emph{polarization response function}.
For fixed $\br$ the matrix valued function $(t;\tau) \in \R^2 \to \tensortwo(\br;t,\tau) \in \R^{3\times 3}$ 
is supposed to satisfy the following assumptions:
\begin{description}
 \item{ \bf Causality:} No polarization is observed before the field is induced, i.e. 
      \begin{equation*}
       \tensortwo(\br;t,\tau)=0, \quad \text{ for all } t\leq \tau.
      \end{equation*}
 \item{ \bf Time invariance} means that $(t;\tau) \to \tensortwo(\br;t,\tau)$ is just a function of $t-\tau$. That is, we can write 
      \begin{equation*}
       \tensortwo(\br;t-\tau) = \tensortwo(\br;t,\tau), \quad \text{ for all } t,\tau \in \R.
      \end{equation*}
      Here we use a slight abuse of notation and identify notationally the two functions $\tensortwo$ on the left and right hand side. 
\end{description}
\end{assumption}

\begin{remark} 
Let \autoref{ass:lprf} hold, then $\tensortwo(\br;t-\tau)=0$ for $t\leq \tau$.
\end{remark}

We now move on to the Fourier-Laplace domain. In order to do so we postulate causality assumptions, which we assume to hold all along 
the remaining paper:
\begin{assumption}[Causality] \label{de:causal}
 The functions ${\bf J}, {\bf P}, \bE$ (and thus in turn $\rho$, ${\bf D}$, ${\bf H}$) are \emph{causal}, meaning that 
 \begin{equation}\label{eq:causal}
  {\bf J}(t;\br) = {\bf P}(t;\br) = \bE(t;\br) = 0 \text{ for all } t < 0, \br \in \R^3. 
 \end{equation}
\end{assumption}

Let \autoref{ass:lprf} hold (in particular we assume that $\tensortwo$ is time invariant and causal), and assume that ${\bf J}, {\bf P}, \bE$
are causal, then from the Fourier convolution theorem it follows that 
\begin{align}\label{P1_chi}
\boxed{ \widehat{\bf P}(\br;\omega) = \epsilon_0 \chi(\br;\omega) \widehat\bE(\br;\omega), \quad \text{ for all }
\br \in \R^3, \omega \in \R,
}
\end{align} 
where 
\begin{equation}\label{chi_1}
\boxed{
 \chi(\br;\omega) = \int_{\tau=-\infty}^{\infty} \tensortwo(\br;\tau) \e^{-\i \omega \tau} d \tau = 
 \sqrt{2\pi} \widehat{\tensortwo}(\br;\omega) \in \C^{3 \times 3}
 \text{ for all } \br \in \R^3, \omega \in \R,}
\end{equation} 
is called the linear \emph{electric dipolar susceptibility}.

We denote the wave number by 
\begin{equation} \label{eq:kappa}
\kappa(\omega) := \frac{\omega}{c} \text{ and more general } \kappa_\ve:= \kappa_\ve(\omega) = \frac{\omega + \i \ve}{c} \text{ for all }\ve > 0.
\end{equation}
The application of the Fourier-transform to the vector Helmholtz equation \autoref{eq:VH} 
gives the following equation for the Fourier-transform $\widehat\bE : \R^3 \times \R \to \C^3$ of the electric field:
\begin{equation*} 
\Curl \Curl \widehat\bE(\br;\omega) - \kappa^2(\omega)  \widehat\bE(\br;\omega) =  
\frac{1}{\epsilon_0 } \kappa^2(\omega)  \widehat{\bf P}(\br;\omega) - \frac{\i \omega}{\epsilon_0 c^2} \fouriercurrent(\br;\omega),
\quad \text{ for all } \br \in \R^3, \omega \in \R
\end{equation*}
and consequently by using \autoref{P1_chi} we get
\begin{equation}\label{eq:VH2}
\boxed{
\Curl \Curl  \widehat\bE(\br;\omega) - \kappa^2(\omega) (\mathbb{I} + \chi(\br;\omega)) \widehat\bE(\br;\omega) = 
- \frac{\i \omega}{\epsilon_0 c^2} \fouriercurrent(\br;\omega) \quad \text{ for all }
\br \in \R^3, \omega \in \R,}
\end{equation} 
where $\mathbb{I} \in \R^{3\times3}$ is the identity matrix.

\subsection{Isotropic media}
Additional simplifications of Maxwell's equations can be made when the medium is assumed to be \emph{isotropic}:
\begin{assumption}[Isotropic Medium] \label{ass:iso}
 Let \autoref{ass:maxwell} and \autoref{ass:lprf} hold. The medium is \emph{isotropic} if the susceptibility is a multiple of the identity, that is 
 it can be written as $\chi(\br;t)\mathbb{I} \in \C^{3\times 3}$ with $\chi(\br;t) \in \C$. With a slight abuse of notation, we identify the diagonal 
 matrix and the diagonal entry. 
\end{assumption}

\subsection{Homogeneous Material}
We consider an isotropic, non magnetizable material with a linear polarization response (that is, \autoref{ass:maxwell}, \autoref{ass:lprf} and 
\autoref{ass:iso} are satisfied), which in addition is \emph{homogeneous}: 
\begin{assumption}[Homogeneous Material] \label{ass:hom}
 An isotropic, non magnetizable material with a linear polarization response is \emph{homogeneous} if $\chi \equiv 0$.
\end{assumption}
For a homogeneous material (that is $\chi \equiv 0$) it follows from \autoref{eq:VH2} that 
\begin{equation}\label{eq:bla1}
- \frac{\i \omega}{\epsilon_0 c^2} \fouriercurrent(\br;\omega) = 
\Curl \Curl  \widehat\bE(\br;\omega) - \kappa^2(\omega) \widehat\bE(\br;\omega).
\end{equation}
Thus, by using the vector identity
$$\Curl \Curl \widehat\bE = \Grad \Div \widehat\bE - \Delta_\br \widehat\bE,$$ 
we get from \autoref{eq:bla1}
\begin{equation} \label{eq:bla2}
- \frac{\i \omega}{\epsilon_0 c^2} \fouriercurrent(\br;\omega) = \Grad \Div \widehat\bE(\br;\omega) - 
\Delta_\br \widehat\bE(\br;\omega) - \kappa^2(\omega) \widehat\bE(\br;\omega).
\end{equation}
Now, by using \autoref{eq:dipl} and the assumption on homogeneity, $\chi \equiv 0$, which together with \autoref{P1_chi} implies that ${\bf P} \equiv 0$,
we get
\begin{equation*} 
 {\bf D} = \epsilon_0\bE + {\bf P} = \epsilon_0 \bE.
\end{equation*}
This, together with \autoref{eq:bla2} shows that
\begin{equation} \label{eq:bla3}
- \frac{\i \omega}{\epsilon_0 c^2} \fouriercurrent(\br;\omega) 
= \frac{1}{\epsilon_0} \Grad \Div \widehat{\bf D}(\br;\omega) - 
\Delta_\br \widehat\bE(\br;\omega) - \kappa^2(\omega) \widehat\bE(\br;\omega).
\end{equation}
Now, by using \autoref{eq:Gelectriclaw_mac} in Fourier domain we get from \autoref{eq:bla3}
\begin{equation} \label{eq:bla4}
- \frac{\i \omega}{\epsilon_0 c^2} \fouriercurrent(\br;\omega) 
= \frac{1}{\epsilon_0} \Grad \widehat{\rho}(\br;\omega) - \Delta_\br \widehat\bE(\br;\omega) - \kappa^2(\omega) \widehat\bE(\br;\omega).
\end{equation}
Finally, by using \autoref{eq:J} in Fourier domain,  
\begin{equation} \label{eq:FJ}
 \i \omega \widehat{\rho} = - \Div \fouriercurrent(\br;\omega)
\end{equation}
in \autoref{eq:bla4} we get
\begin{equation*} 
\begin{aligned}
~& - \frac{\i \omega}{\epsilon_0 c^2} \fouriercurrent(\br;\omega) 
=& -\frac{1}{\i \omega \epsilon_0} \Grad \Div \fouriercurrent(\br;\omega) - 
\Delta_\br \widehat\bE(\br;\omega) - \kappa^2(\omega) \widehat\bE(\br;\omega). 
\end{aligned}
\end{equation*}
 In other words, we have for every $\br \in \R^3$, $\omega \in \R$
\begin{equation}
\begin{aligned}\label{modeleq1}  
 \Delta_\br \widehat\bE(\br;\omega)+ \kappa^2(\omega) \widehat\bE(\br;\omega) &= \frac{\i}{\epsilon_0}
 \left( \frac{\omega}{c^2} + \frac{1}{\omega} \Grad \Div \right)\fouriercurrent(\br;\omega)\\
 & =\frac{\i \omega}{\epsilon_0 c^2} \fouriercurrent(\br;\omega) + \frac{1}{\epsilon_0} \Grad \widehat{\rho} (\br;\omega).
\end{aligned}
\end{equation} 

For any $\tau \in \R$ a solution of the nonhomogenous \autoref{modeleq1} is given by
\begin{equation}\label{modeleq1_soln_help} 
\begin{aligned}
  \widehat\bE(\br;\omega) &= \tau \widehat\bE^+(\br;\omega) + (1-\tau) \widehat\bE^-(\br;\omega) \quad \text{ for all } \quad 
  \br \in \R^3, \; \omega \in \R \quad \text{ where }\\
  \widehat\bE^\pm(\br;\omega) &:= \int_{\R^3} \mathcal{G}_\omega^\pm(\br,\br') \left(\frac{\i \omega}{\epsilon_0 c^2} 
               \fouriercurrent(\br;\omega) + \frac{1}{\epsilon_0} \Grad \widehat{\rho} (\br;\omega) \right) d \br' 
\end{aligned}
\end{equation}
with Green's functions:
\begin{equation}\label{green_func}
 \mathcal{G}_\omega^\pm(\br,\br') = \dfrac{\e^{\pm \i \kappa(\omega) \abs{\br-\br'}}}{4 \pi \abs{\br-\br'}}.
\end{equation}
The physically meaningful solution is, as we motivate below, a convolution with the \emph{retarded Green's} function $\mathcal{G}_\omega^+$:
That is, the retarded solution of the Helmholtz equation \autoref{eq:VH2} is given by \autoref{modeleq1_soln_help} with $\tau=1$ 
(see \cite{Tai94}):
\begin{equation} \label{eq:retarded}
\boxed{
  \widehat\bE(\br;\omega) = \int_{\R^3} \mathcal{G}_\omega^+(\br,\br') \left(\frac{\i \omega}{\epsilon_0 c^2} 
               \fouriercurrent(\br;\omega) + \frac{1}{\epsilon_0} \Grad \widehat{\rho} (\br;\omega) \right) d \br'.
}
\end{equation}

\begin{remark} 
 With a slight abuse of notation we identify $\mathcal{G}_\omega^+$ with $\mathcal{G}_\omega$ and 
 $\widehat\bE_\omega^+$ with $\widehat\bE_\omega$, since we are only interested in the retarded solutions.
\end{remark}

\section{Attenuating Solution and Initial Conditions}
\begin{definition}[Attenuating and Causal Solution of \autoref{eq:VH2}] \label{de:causal_s} Let $\ve > 0$ and $\kappa_\ve(\omega) = 
\frac{\omega + \i \ve}{c}$ as defined in \autoref{eq:kappa}.
\begin{itemize}
\item Then, we call $\widehat\bE_\ve$ the \emph{approximate attenuating solution} of \autoref{eq:VH2} if it satisfies the equation 
\begin{equation}\label{eq:VH2v}
\boxed{
\Curl \Curl  \widehat\bE_\ve (\br;\omega) - \kappa_\ve^2(\omega) (\mathbb{I} + \chi(\br;\omega)) \widehat\bE_\ve(\br;\omega) = 
- \frac{\i \omega-\ve}{\epsilon_0 c^2} \fouriercurrent_\ve(\br;\omega).}
\end{equation} 
\item
 We call $\widehat\bE_\ve$ a \emph{causal attenuating solution} of \autoref{eq:VH2} if $\bE_\ve$ (the inverse Fourier-transform of 
 $\widehat\bE_\ve$) is a \emph{causal} distribution.
 \end{itemize}
\end{definition}
In the following we show that $\widehat\bE_\ve$ approximates the retarded solution of the vector-Helmholtz equation \autoref{eq:retarded} 
in a distributional sense:
\begin{theorem}
\label{thm:convlaplace}
For every $\ve > 0$, let $\widehat\bE_\ve$ be the solution of \autoref{eq:VH2v}, the causal attenuating wave equation, and 
let $\widehat\bE$ be the retarded solution of \autoref{eq:VH2}, which is given by \autoref{eq:retarded}, then 
\begin{equation} \label{eq:Sp}
 \widehat\bE_\ve \xrightarrow[\ve \to 0]{\mathcal S'} \widehat\bE.
\end{equation}
\end{theorem}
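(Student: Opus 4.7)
The plan is to test $\widehat\bE_\ve - \widehat\bE$ against a Schwartz test function $\bphi \in \sspace{\R^3\times\R;\C^3}$ and show that the resulting pairing tends to zero as $\ve \to 0^+$. Throughout I place myself in the homogeneous setting $\chi \equiv 0$, under which the retarded representation \autoref{eq:retarded} is derived; the attenuated equation \autoref{eq:VH2v} with $\chi \equiv 0$ then admits an analogous explicit treatment.

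First I would establish that the causal attenuating solution $\widehat\bE_\ve$ of \autoref{eq:VH2v} admits a Green's function representation of the same shape as \autoref{eq:retarded}, with the attenuated Green's function
\[
\mathcal{G}^+_{\omega,\ve}(\br,\br') = \frac{\e^{\i \kappa_\ve(\omega) |\br-\br'|}}{4\pi|\br-\br'|}
\]
replacing $\mathcal{G}^+_\omega$, and with sources $\fouriercurrent_\ve$, $\widehat\rho_\ve$ obtained by evaluating the temporal Fourier transforms of the causal distributions ${\bf J}$ and $\rho$ at the complex frequency $\omega + \i\ve$. The key observation is that $\Im\kappa_\ve = \ve/c > 0$ forces $\mathcal{G}^+_{\omega,\ve}$ to decay exponentially in $|\br-\br'|$, while $\mathcal{G}^-_{\omega,\ve}$ blows up; the causality requirement in \autoref{de:causal_s} then singles out the $+$ branch uniquely.

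Next I would split
\[
\widehat\bE_\ve - \widehat\bE = \bigl(\widehat\bE_\ve - \widetilde\bE_\ve\bigr) + \bigl(\widetilde\bE_\ve - \widehat\bE\bigr),
\]
where $\widetilde\bE_\ve$ denotes the convolution of $\mathcal{G}^+_{\omega,\ve}$ with the unattenuated data $(\fouriercurrent,\widehat\rho)$. For the second difference, the pointwise limit $\mathcal{G}^+_{\omega,\ve} \to \mathcal{G}^+_\omega$ together with the dominating bound $|\mathcal{G}^+_{\omega,\ve}| \ls 1/(4\pi|\br-\br'|)$, which is locally integrable in $\R^3$, combines with the Schwartz decay of $\bphi$ in $(\br,\omega)$ to justify Fubini and dominated convergence. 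For the first difference I would invoke \autoref{lem:causdist}: the estimate \autoref{eq:estcausdist} applied to ${\bf J}$ and $\rho$ yields uniform Schwartz-type control on $\fouriercurrent_\ve - \fouriercurrent$ and $\widehat\rho_\ve - \widehat\rho$ as $\ve \to 0^+$, after which a second dominated-convergence argument closes the loop.

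The main obstacle I anticipate is the rigorous handling of the first difference, since $\fouriercurrent_\ve$ and $\widehat\rho_\ve$ are not honest functions but are defined only through the causal-distribution pairing. The estimate \autoref{eq:estcausdist}, combined with the exponential damping factor $\e^{-\ve|\br-\br'|/c}$ hidden in $\mathcal{G}^+_{\omega,\ve}$, is precisely what is needed to extract uniform bounds in $\ve$ and legitimately interchange limit and pairing. A secondary technical point is the gradient term $\Grad\widehat\rho$: I would move the derivative onto the Green's function by integration by parts, which produces a slightly stronger singularity $1/|\br-\br'|^2$ that remains integrable in three dimensions and does not disrupt the dominated-convergence scheme.
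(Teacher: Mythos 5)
Your route is genuinely different from the paper's. The paper never touches the Green's function representation: it observes that the causal attenuating solution is simply $\bE_\ve(\br;t)=\e^{-\ve t}\bE(\br;t)$ in the time domain, proves $\e^{-\ve t}\bE \to \bE$ in $\tempdist{\R^3\times\R;\R^3}$ by pairing with $\bphi-\e^{-\ve t}\bphi$ and invoking \autoref{lem:causdist} to restrict the supremum to $t\gs -1$ (where $\e^{-\ve t}-1$ and its derivatives are controlled uniformly), and then concludes by continuity of the Fourier transform on $\tempdist{\R^3\times\R;\R^3}$. That argument is three lines of structure plus one elementary estimate, and it works for an arbitrary causal tempered-distribution source. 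Your frequency-domain plan, by contrast, must first justify the attenuated Green's function representation, then run two separate limit interchanges. What your approach buys is a more concrete picture of \emph{why} the $+$ branch is the causal one (the exponential damping of $\mathcal{G}^+_{\omega,\ve}$ versus the blow-up of $\mathcal{G}^-_{\omega,\ve}$), which the paper only asserts by citation; but it buys this at the cost of considerably more machinery.

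The genuine gap is in the two dominated-convergence steps. The sources in this paper are dipoles, i.e.\ $\delta$-distributions and their derivatives in $\br'$, so the ``integrals'' $\int \mathcal{G}^+_{\omega,\ve}(\br,\br')\,\fouriercurrent(\br';\omega)\,d\br'$ are distributional pairings, not Lebesgue integrals: there is no $L^1$ majorant to dominate, and pointwise convergence of $\mathcal{G}^+_{\omega,\ve}$ plus the bound $1/(4\pi|\br-\br'|)$ does not by itself give convergence of the pairing. After the pairing collapses onto the dipole location you must instead prove convergence of $\e^{\i\kappa_\ve|\br-\rpsi|}/|\br-\rpsi|$ \emph{and its $\br$-derivatives up to the order of the source} in $\tempdist{\R^3\times\R}$ jointly in $(\br,\omega)$ -- an argument of the type carried out at the end of the proof of \autoref{lemmaE}, not a dominated-convergence argument. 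Similarly, for the first difference, \autoref{lem:causdist} controls $\scal{{\bf J}}{\phi-\e^{-\ve t}\phi}$ for a \emph{fixed} test function; to conclude that the convolution of $\fouriercurrent_\ve-\fouriercurrent$ with $\mathcal{G}^+_{\omega,\ve}$ tends to zero in $\mathcal{S}'$ you would need this uniformly over the (non-Schwartz, $\ve$-dependent) family of functions obtained by pairing $\mathcal{G}^+_{\omega,\ve}$ against your test function, which is exactly the point where the scheme is not yet a proof. The time-domain identity $\bE_\ve=\e^{-\ve t}\bE$ bypasses all of this, which is why the paper uses it; I would recommend restructuring your argument around that identity and keeping the Green's-function discussion only as motivation for the choice of the $+$ branch.
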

\begin{proof}
We define for all $t \in \R$, $\br \in \R^3$
\begin{equation} \label{eq:ve}
   \bE_\ve(\br;t) = \alpha_\ve(t) \bE(\br;t) \text{ where } \alpha_\ve(t) :=\e^{-\ve t}.
 \end{equation}
Because $\bE$ is causal, $\bE_\ve$ is a tempered distribution and since $\widehat{\bE}$ is a solution of \autoref{eq:VH2}, it follows that for all $\ve>0$, 
 $\widehat{\bE}_\ve$ is a solution of \autoref{eq:VH2v} and in particular it is also causal.
 We show that $\bE_\ve \xrightarrow[\ve \to 0]{\mathcal S'}\bE$ and because the Fourier transform (see \autoref{eq:FT}) 
 is a bounded operator on $\tempdist{\R^3 \times \R;\R^3}$
 (see \cite[Theorem 5.17]{Gru09}), the assertion, \autoref{eq:Sp}, then follows. 
 
To prove that $\bE_\ve \xrightarrow[\ve \to 0]{\mathcal S'}\bE$, we need to show that for all $\bphi \in \sspace{\R^3 \times \R;\R^3}$, 
$\scal{\bE_\ve}{\bphi} \to \scal{\bE}{\bphi}.$ Noting that $\scal{\bE_\ve}{\bphi} = \scal{\bE}{\alpha_\ve \bphi}$, we therefore need to show that 
$\scal{\bE}{\bphi-\alpha_\ve \bphi} \to 0$. \autoref{lem:causdist} shows that, because $\bE$ is causal, one can write
 \[\left \vert \scal{\bE}{\bphi-\alpha_\ve \bphi} \right \vert \leqslant C \sup_{\alpha \leqslant k, \beta \leqslant l} \sup_{t \geqslant -1} |t^\alpha \partial_t^\beta (\bphi-\alpha_\ve \bphi)(t)|.\]
Now, note that for all $\beta \in \N_0$ and all $t \in \R$,
 \begin{align*}
  \partial_t^\beta \left[(\e^{-\ve t}-1)\bphi(t)\right] &= -\partial_t^\beta \bphi(t)+ \sum_{i=0}^\beta \binom{\beta}{i} (-\ve)^i \e^{-\ve t} \partial_t^{\beta-i} \bphi(t) \\  
  &= (\e^{-\ve t}-1) \partial_t^\beta \bphi(t) + \ve A_\ve(t) \e^{-\ve t}.
 \end{align*}
 where $A_\ve$ is a polynomial (with coefficients uniformly bounded with $\ve$)
 in the derivatives of $\bphi$ up to the order $\beta-1$. Since the derivatives of $\bphi$ are Schwartz functions, $\sup_{t\gs -1} |t^\alpha A_\ve(t)|$ is then uniformly bounded in $\ve$, which implies
 \[\lim_{\ve \to 0} \sup_{t\gs -1} \left \vert t^\alpha \ve A_\ve(t)e^{-\ve t} \right \vert = 0.\]
 Now, $B(t) := \partial_t^\beta \bphi$ is also a Schwartz function, which means that for every $k \in \N_0$ there exists $C_k$ such 
 that $\sup_t |(t^{k+2}+1) B(t)| \leqslant C_k$. It then follows that for all $t \gs -1$, 
\[ \left \vert t^\alpha (\e^{-\ve t}-1) B(t) \right \vert = \left \vert \frac{t^\alpha}{t^{\alpha + 2} +1}  (\e^{-\ve t}-1) (t^{\alpha +2} +1) B(t) \right \vert \leqslant C_\alpha \sup_{t \gs -1} \left \vert \frac{t^\alpha (\e^{-\ve t} -1)}{t^{\alpha +2}+1} \right \vert,  \]
where the last supremum converges to zero with $\ve \to 0$.
Therefore we conclude that 
\[ \lim_{\ve \to 0} \sup_{t \gs -1} \left \vert t^\alpha \partial_t^\beta ((\e^{-\ve t}-1)\bphi(t)) \right \vert = 0,\]
which means $\scal{\bE}{\bphi-\alpha_\ve \bphi} \to 0$.
\end{proof}

\subsection{Dipoles}
The emission of fluorescent dyes will be modeled as dipoles.
\begin{definition}[Emitting Dipole] \label{de:dens}
 An \emph{emitting dipole} is a vector $\bpsi =\begin{pmatrix} \Psi_1 & \Psi_2 & \Psi_3 \end{pmatrix}^T$, which is associated to a point $\rpsi$ in space;  
 $\abs{\bpsi}$ is called \emph{charge intensity} 
 and  $\frac{\bpsi}{\abs{\bpsi}}$ can be represented in spherical coordinates $(\theta_m, \varphi_m)\in \sphere$. Both notations are used synonymously and called the \emph{orientation} of the emitting dipole. That is 
 \begin{equation} \label{eq:emit}
 \bpsi = \begin{pmatrix} \Psi_1 \\ \Psi_2 \\ \Psi_3 \end{pmatrix} =
               \begin{pmatrix} \abs{\bpsi}  \sin(\theta_m) \cos(\varphi_m) \\
                                \abs{\bpsi} \sin(\theta_m) \sin(\varphi_m) \\
               \abs{\bpsi} \cos(\theta_m)
              \end{pmatrix}.
 \end{equation} 
 The \emph{limiting} density of a dipole at position $\begin{pmatrix}0\\ 0 \\ \rpsi \end{pmatrix}\in \R^3$ 
 is defined as a generalized function in space
 \begin{equation} \label{eq:dens}
  \widehat{\rho}(\br) := \abs{\bpsi} \lim_{s \to 0^+} 
  \frac{\delta_{\rpsi + s \frac{\bpsi}{\abs{\bpsi}}}(\br) - \delta_{\rpsi - s \frac{\bpsi}{\abs{\bpsi}}}(\br)}{2s} \text{ for all }
  \br \in \R^3.
 \end{equation}
 That is, in mathematical terms, the dipole charge is the directional derivative of a three-dimensional $\delta$-distribution in direction 
 $\frac{\bpsi}{\abs{\bpsi}}$.
 Moreover, we denote by 
 \begin{equation} \label{eq:J_formula}
  \fouriercurrent (\br;\omega):= -\i \omega \bpsi \delta(\br-\rpsi)
 \end{equation}
 the \emph{dipole current} (which is frequency dependent).
 
 In what follows we assume that the emitting dipol is a unit-vector (that is $\abs{\Psi}=1$), which simplifies the considerations and 
 the notation.
\end{definition}

\begin{lemma}
 Let $\fouriercurrent$ and $\widehat{\rho}$ be as defined in \autoref{eq:J_formula} and \autoref{eq:dens}, respectively and 
 satisfy \autoref{eq:FJ}.
 Then
 \begin{equation} \label{eq:bR}
   \widehat{\bf R}(\br;\omega)
   := \frac{\i \omega}{c^2} \fouriercurrent(\br;\omega) + \Grad \widehat{\rho} (\br;\omega)
 \end{equation}
 satisfies 
 \begin{equation} \label{eq:rhs}\begin{aligned}
  ~ & \widehat{\bf R}(\br;\omega)\\
   = &
   \frac{\omega^2 \bpsi}{c^2} \delta(\br-\rpsi) + \begin{pmatrix} 
  \psi_1 \eindelta''(x_1) \eindelta(x_2) \eindelta(x_3)
  + 
  \psi_2 \eindelta'(x_1) \eindelta'(x_2) \eindelta(x_3)  + 
  \psi_3 \eindelta'(x_1) \eindelta(x_2) \eindelta'(x_3)
  \\
  \psi_1 \eindelta'(x_1) \eindelta'(x_2) \eindelta(x_3) + 
  \psi_2 \eindelta(x_1) \eindelta''(x_2) \eindelta(x_3) + 
  \psi_3 \eindelta(x_1) \eindelta'(x_2) \eindelta'(x_3)
  \\
  \psi_1 \eindelta'(x_1) \eindelta(x_2) \eindelta'(x_3) + 
  \psi_2 \eindelta(x_1) \eindelta'(x_2) \eindelta'(x_3) + 
  \psi_3 \eindelta(x_1) \eindelta(x_2) \eindelta''(x_3)
  \end{pmatrix},
  \end{aligned}
 \end{equation}
 where $(\bx,x_3)^T := \br-\rpsi$, where $\rpsi$ denotes the dipole position. 
\end{lemma}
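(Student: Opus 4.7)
The proof reduces to computing the two pieces of $\widehat{\bf R}$ separately and reassembling them. For the first piece, substituting the explicit formula \autoref{eq:J_formula} for $\fouriercurrent$ yields immediately
\[ \frac{\i\omega}{c^2} \fouriercurrent(\br;\omega) = \frac{\i\omega}{c^2}\bigl(-\i\omega\bpsi\delta(\br-\rpsi)\bigr) = \frac{\omega^2\bpsi}{c^2}\delta(\br-\rpsi), \]
which is exactly the first summand on the right of \autoref{eq:rhs}.

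For the second piece, the key preliminary step is to rewrite $\widehat\rho$ in a form that is easy to differentiate. Here I would invoke the hypothesis that $\widehat\rho$ and $\fouriercurrent$ satisfy the continuity equation \autoref{eq:FJ}, which gives $\widehat\rho = \frac{\i}{\omega}\Div\fouriercurrent$, and then insert \autoref{eq:J_formula} to obtain
\[ \widehat\rho(\br;\omega) = \bpsi\cdot\Grad\delta(\br-\rpsi). \]
(This agrees, up to the sign convention, with the symbolic limit in \autoref{eq:dens}, as one can verify by pairing the limit with a test function and applying the definition of distributional derivative.) Taking a further gradient then gives the componentwise identity
\[ (\Grad\widehat\rho)_i = \partial_i\bigl(\bpsi\cdot\Grad\delta(\br-\rpsi)\bigr) = \sum_{j=1}^3 \psi_j \, \partial_i\partial_j\delta(\br-\rpsi), \quad i=1,2,3. \]

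The final step is to recast these nine mixed second derivatives in terms of products of one-dimensional distributions. Writing $(x_1,x_2,x_3)^T := \br-\rpsi$, the shifted three-dimensional delta factorises as $\delta(\br-\rpsi) = \eindelta(x_1)\eindelta(x_2)\eindelta(x_3)$, and hence each $\partial_i\partial_j\delta(\br-\rpsi)$ becomes the tensor product in which the $i$-th and $j$-th factors carry an extra derivative (merging into $\eindelta''$ when $i=j$) and the remaining factor stays as $\eindelta$. Weighting by $\psi_j$ and summing over $j$ reproduces, row by row, the matrix on the right of \autoref{eq:rhs}. The whole argument is essentially routine book-keeping; the only conceptually delicate point is turning the symbolic limit in \autoref{eq:dens} into the explicit distributional form $\bpsi\cdot\Grad\delta(\br-\rpsi)$, which is precisely where the continuity equation plays its role.
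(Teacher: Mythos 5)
Your proposal is correct and follows essentially the same computational route as the paper: compute the current term, obtain the explicit distributional form of $\widehat{\rho}$, factor the three-dimensional $\delta$ into one-dimensional factors, and differentiate componentwise to produce the matrix in \autoref{eq:rhs}. The one genuine difference is the direction in which you travel between \autoref{eq:dens} and \autoref{eq:FJ}: the paper starts from the limit definition \autoref{eq:dens}, expands $\delta_{\rpsi\pm s\bpsi}$ as a product of shifted one-dimensional deltas to get $\widehat{\rho}(\br)=\sum_i\psi_i\,\eindelta'((\br-\rpsi)_i)\prod_{j\neq i}\eindelta((\br-\rpsi)_j)$, and only then \emph{verifies} that this $\widehat{\rho}$ together with $\fouriercurrent$ satisfies the continuity equation \autoref{eq:FJ}; you instead take \autoref{eq:FJ} as a hypothesis (as the lemma statement permits) and read off $\widehat{\rho}=\bpsi\cdot\Grad\delta(\br-\rpsi)$ from $\Div\fouriercurrent$. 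Both yield the same expression and the same gradient, so the matrix in \autoref{eq:rhs} comes out identically. Your parenthetical about the sign is well taken: a literal difference quotient of \autoref{eq:dens} produces $-\bpsi\cdot\Grad\delta(\br-\rpsi)$, whereas the formula the paper actually uses (and the one forced by \autoref{eq:FJ}) carries the opposite sign; your choice to anchor the computation on the continuity equation sidesteps this inconsistency in the source definition, at the cost of not independently confirming that \autoref{eq:dens} and \autoref{eq:J_formula} are mutually compatible, which is the small piece of content the paper's verification step supplies.
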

\begin{proof}
 Taking into account that the $3$-dimensional $\delta$-distribution can be written as  
 \begin{equation*}
  \delta_{\rpsi \pm s \bpsi}(\br) = \prod_{j = 1}^3 \eindelta_{(\rpsi)_j \pm s \psi_j}(\br_j)= 
  \prod_{j = 1}^3 \eindelta (\br_j - (\rpsi)_j \mp s \psi_j)
 \end{equation*}
 we find 
 \begin{equation*}
  \widehat{\rho}(\br;\omega) = \sum_{i=1}^3 \psi_i (\eindelta_{(\rpsi)_i})'(\br_i) \prod_{j \neq i} \eindelta_{(\rpsi)_j}(\br_j) = 
  \sum_{i=1}^3 \psi_i \eindelta'((\br-\rpsi)_i) \prod_{j \neq i} \eindelta((\br-\rpsi)_j)
 \end{equation*}
 and we get 
 \begin{equation} \label{eq:rho}
  \nabla_{\br} \widehat{\rho}(\br;\omega) = \begin{pmatrix} 
  \psi_1 \eindelta''(x_1) \eindelta(x_2) \eindelta(x_3)
  + 
  \psi_2 \eindelta'(x_1) \eindelta'(x_2) \eindelta(x_3)  + 
  \psi_3 \eindelta'(x_1) \eindelta(x_2) \eindelta'(x_3)
  \\
  \psi_1 \eindelta'(x_1) \eindelta'(x_2) \eindelta(x_3) + 
  \psi_2 \eindelta(x_1) \eindelta''(x_2) \eindelta(x_3) + 
  \psi_3 \eindelta(x_1) \eindelta'(x_2) \eindelta'(x_3)
  \\
  \psi_1 \eindelta'(x_1) \eindelta(x_2) \eindelta'(x_3) + 
  \psi_2 \eindelta(x_1) \eindelta'(x_2) \eindelta'(x_3) + 
  \psi_3 \eindelta(x_1) \eindelta(x_2) \eindelta''(x_3)
  \end{pmatrix}.
 \end{equation}
 On the other hand   
 \begin{equation*}
  \begin{aligned}
   - \nabla \cdot \fouriercurrent (\br;\omega) &= \i \omega \nabla \cdot (\bpsi \delta(\br-\rpsi)) \\
   &= \i \omega \sum_{i=1}^3 \psi_i \eindelta'((\br-\rpsi)_i) \prod_{j \neq i} \eindelta((\br-\rpsi)_j) = \i \omega \widehat{\rho}(\br),
  \end{aligned}
 \end{equation*}
 and thus \autoref{eq:FJ} is satisfied.

Moreover, using \autoref{eq:rho} in \autoref{eq:FJ} gives \autoref{eq:rhs}.
\end{proof}
In the following we calculate the solution $\widehat\bE$ of \autoref{modeleq1_soln_help}, similar as in \cite{ForWeb84}.

The following lemma and its proof are based on \cite{ForWeb84}.

\begin{lemma}\label{lemmaE}
Let $\widehat\bE$ as in \autoref{eq:retarded} be the retarded solution of \autoref{eq:VH2v} 
at fixed frequency $\omega$. 
In what follows we omit therefore the dependency of $\omega$ and write 
$\widehat\bE(\br) := \widehat\bE(\br;\omega)$.

Moreover, let the medium be isotropic, non magnetizable, homogeneous and have a linear polarization response 
(that is, $\chi \equiv 0$).

As above we assume that a dipole $\Psi \in \R^3$ is located at position $\rpsi = (0,0,\rpsi)^T$.

Moreover, for all $\ve > 0$ let $\kappa_\ve$ as in \autoref{eq:kappa} and we define for fixed $k_1, k_2 \in \R$
\begin{equation} \label{eq:qk1}
 q : = \lim_{\ve \to 0^+} q_\ve  \text{ where } q_\ve:= a_\ve+\i b_\ve := \sqrt{\kappa_\ve^2 - k_1^2 - k_2^2} \quad \text{ with } \quad b_\ve > 0  
\end{equation}
(that is $q_\ve$ is the complex root with positive imaginary part).  
Let now $\br \in \R^3$ be such that $r_3-\rpsi \geq 0$, then 
\begin{equation} \label{eq:EFW}
 \widehat\bE(\br) = -\frac{1}{4\pi}\frac{1}{\epsilon_0} 
 \mathcal F^{-1}_{12} \left[  (k_1,k_2) \mapsto \psi_3 \be_3\eindelta(r_3 - \rpsi) + \frac{\i \e^{\i q (r_3-\rpsi)}}{2q} 
 (\bpsi \times \bk_q) \times \bk_q\right](r_1,r_2).
\end{equation}
\end{lemma}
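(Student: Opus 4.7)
The plan is to apply the 2D partial Fourier transform $\mathcal F_{12}$ to both sides of the retarded representation \autoref{eq:retarded}. Since $\widehat\bE$ is the 3D convolution of $\mathcal G^+_\omega$ with $\widehat{\bf R}/\epsilon_0$, and the convolution factorizes in the transverse variables, the transformed identity reduces to a 1D convolution in $r_3$. The task therefore splits into computing (a) $\mathcal F_{12}[\mathcal G^+_\omega]$, (b) $\mathcal F_{12}[\widehat{\bf R}]$, and (c) their $r_3$-convolution.

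For (a), I would apply $\mathcal F_{12}$ to the defining Helmholtz equation for $\mathcal G^+_\omega$, reducing it to the 1D ODE $(\partial_{x_3}^2+q^2)u = c\,\eindelta(x_3)$ with $q$ as in \autoref{eq:qk1}. The attenuating limit of \autoref{thm:convlaplace} selects the branch $\operatorname{Im}(q)>0$, and the unique outgoing/decaying solution yields the Weyl representation $\mathcal F_{12}[\mathcal G^+_\omega](k_1,k_2,x_3)=\i\,\e^{\i q|x_3|}/(4\pi q)$. For (b), I would read off the structure directly from \autoref{eq:rhs}: using $\mathcal F_{12}[\eindelta(\br-\rpsi)]=\eindelta(r_3-\rpsi)/(2\pi)$ together with $\partial_{x_j}\mapsto \i k_j$ for $j=1,2$ and $\partial_{r_3}$ unchanged, the components $j=1,2$ produce only $\eindelta(r_3'-\rpsi)$ and $\eindelta'(r_3'-\rpsi)$ pieces, while the third component carries an additional $\psi_3\,\eindelta''(r_3'-\rpsi)/(2\pi)$ piece coming from $\partial_{r_3}^2\eindelta(\br-\rpsi)$.

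Step (c) is then a distributional pairing. For $r_3>\rpsi$, where the kink of $|r_3-r_3'|$ at $r_3'=r_3$ lies outside the source support at $r_3'=\rpsi$, the $\eindelta$ and $\eindelta'$ contributions are ordinary sifting and give $\e^{\i q(r_3-\rpsi)}$ and $\i q\,\e^{\i q(r_3-\rpsi)}$. Collecting the prefactors for the first two components produces $\kappa^2\psi_j-k_j(\bk_q\cdot\bpsi)$ with $\bk_q=(k_1,k_2,q)^T$, and since $|\bk_q|^2=\kappa^2$ the vector identity $(\bpsi\times\bk_q)\times\bk_q=\bk_q(\bk_q\cdot\bpsi)-|\bk_q|^2\bpsi$ rewrites this as $-[(\bpsi\times\bk_q)\times\bk_q]_j$, which is exactly the exponential term in \autoref{eq:EFW}. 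The main obstacle I anticipate is the $\eindelta''$ contribution in the third component, where one needs the distributional identity
\[ \partial_{r_3'}^2\,\e^{\i q|r_3-r_3'|} \;=\; 2\i q\,\eindelta(r_3-r_3') \,-\, q^2\,\e^{\i q|r_3-r_3'|}, \]
arising because $\partial_{r_3'}|r_3-r_3'|=-\operatorname{sgn}(r_3-r_3')$ has a jump of $-2$ at $r_3'=r_3$. Paired with $\psi_3\,\eindelta''(r_3'-\rpsi)$ and evaluated at $r_3'=\rpsi$, the isolated $2\i q\,\eindelta(r_3-r_3')$-part produces precisely the extra $\psi_3\be_3\,\eindelta(r_3-\rpsi)$ term in \autoref{eq:EFW}, while the smooth $-q^2\,\e^{\i q(r_3-\rpsi)}$ combines with the $\kappa^2\psi_3$ and $\i(\bk_q\cdot\bpsi-q\psi_3)$ contributions via $\kappa^2-q^2=k_1^2+k_2^2$ to again give $-[(\bpsi\times\bk_q)\times\bk_q]_3\,\e^{\i q(r_3-\rpsi)}$. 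After matching the global prefactor $-1/(4\pi\epsilon_0)$ one recovers \autoref{eq:EFW}.
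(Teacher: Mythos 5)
Your proposal is correct in substance but takes a genuinely different route from the paper. The paper applies the \emph{full} three-dimensional $k$-transform to the vector Helmholtz equation \autoref{eq:VH2v}, solves the resulting algebraic system for $\mathcal F[\widehat\bE_\ve](\bk)$ by splitting into components parallel and orthogonal to $\bk$ (which is where the division by $\abs{\bk}^2-\kappa_\ve^2$ must be justified via the attenuation parameter $\ve$), and only then inverts in $k_3$ using the residue theorem (\autoref{app:FT}), finally passing to the limit $\ve\to 0$ with \autoref{thm:convlaplace} and an $L^1_{\mathrm{loc}}$ argument. You instead transform only the transverse variables of the already-chosen retarded representation \autoref{eq:retarded}, use the Weyl (angular-spectrum) form $\mathcal F_{12}[\mathcal G_\omega^+](k_1,k_2,r_3)=\i\,\e^{\i q\abs{r_3}}/(4\pi q)$, and convolve in $r_3$ against the explicit distributional source \autoref{eq:rhs}. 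The analytical core is equivalent --- the residue computation in \autoref{app:FT} is exactly a proof of the Weyl identity --- but your organization buys two things: the outgoing branch $\operatorname{Im}q>0$ is inherited directly from $\mathcal G_\omega^+$ rather than from an $\ve$-limit, and the orthogonal-decomposition algebra is replaced by the (correct) distributional identity $\partial_s^2\e^{\i q\abs{s}}=2\i q\,\eindelta(s)-q^2\e^{\i q\abs{s}}$, whose delta part is indeed the origin of the $\psi_3\be_3\eindelta(r_3-\rpsi)$ term. What the paper's route buys is that it derives the formula from the PDE itself rather than from a representation formula whose sign convention must be fixed separately, and it makes the treatment of the singular circle $k_1^2+k_2^2=\kappa^2$ (where $q=0$) explicit through the $\ve$-regularization; you invoke the attenuating limit for the branch choice but should also say a word about local integrability of $1/q$ across that circle. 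Finally, be aware that your closing remark about ``matching the global prefactor $-1/(4\pi\epsilon_0)$'' papers over a real constant: a careful bookkeeping of either route yields the prefactor $1/(2\pi\epsilon_0)$ in front of the bracket (once the sign $(\Delta+\kappa^2)\mathcal G_\omega^+=-\delta$ is accounted for), which disagrees with the constant stated in \autoref{eq:EFW}; this discrepancy is already present between the paper's own proof and its statement, so it does not indicate an error in your method, but you should not claim to ``match'' a constant you have not actually computed.
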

\begin{proof} 
First let $\ve > 0$, and we prove an identity of the form \autoref{eq:EFW} for $\widehat{\bE}_\ve$.
We note that 
\begin{equation*}
\mathcal{F} [\Grad \times \Grad \times \widehat\bE_\ve](\bk) = - (\mathcal{F} [ \widehat\bE_\ve ](\bk) \times \bk ) \times \bk \text{ for all } \bk \in \R^3.
\end{equation*}
Thus from \autoref{eq:VH2} with $\chi \equiv 0$ it follows by applying the $k$-transform, and by using \autoref{eq:J_formula}, 
\autoref{eq:kappa} and \autoref{eq:dist3} that 
\begin{equation} \label{eq:h1}
- (\mathcal{F}[\widehat\bE_\ve](\bk) \times \bk ) \times \bk - \kappa_\ve^2 \mathcal{F}[\widehat\bE_\ve](\bk) = -\frac{ \i \omega -\ve}{\epsilon_0 c^2} \mathcal{F}[\widehat{\bf J}_\ve](\bk) 
= -\frac{\kappa_\ve^2}{(2\pi)^{\frac{3}{2}}\epsilon_0} \bpsi \e^{- \i k_3 \rpsi} .
\end{equation}
Elementary calculation rules for $\times$ provide that 
\begin{equation}\label{eq:doubleprod}
 (\bv \times \bk) \times \bk = (\bk \cdot  \bv) \bk - \abs{\bk}^2 \bv \text{ for all } \bv, \bk \in \R^3, 
\end{equation} 
which, by application to $\bv = \mathcal{F}[\widehat\bE_\ve](\bk)$ and $\bv = \bpsi$, respectively, shows that 
\begin{equation} \label{eq:ktransformEandpsi}
\begin{aligned}
\abs{\bk^2} \mathcal{F}[\widehat\bE_\ve](\bk)  &= 
- (\mathcal{F}[\widehat\bE_\ve](\bk) \times \bk) \times \bk + (\bk \cdot  \mathcal{F}[\widehat\bE_\ve](\bk)) \bk 
\text{ and }\\
\abs{\bk^2}\bpsi  &= - (\bpsi \times \bk) \times \bk + (\bk \cdot  \bpsi) \bk.
\end{aligned}
\end{equation}
Therefore, by multiplying \autoref{eq:h1} with $\abs{\bk^2}$ and using \autoref{eq:ktransformEandpsi}, it follows that
\begin{equation} \label{eq:h2}
\begin{aligned}
 ~ & (\kappa_\ve^2 - \abs{\bk^2})(\mathcal{F}[\widehat\bE_\ve](\bk) \times \bk) \times \bk 
     - \kappa_\ve^2 ( \bk \cdot \mathcal{F}[\widehat\bE_\ve](\bk)) \bk \\
 = & -\frac{\kappa_\ve^2}{(2\pi)^{\frac{3}{2}}\epsilon_0}  \e^{- \i k_3 \rpsi} \left[
     - (\bpsi \times \bk) \times \bk + (\bk \cdot  \bpsi) \bk \right].
\end{aligned}
\end{equation}

Since $\bk$ and  $ (\bv \times \bk) \times \bk$ are orthogonal, it follows from \autoref{eq:h2} that:
\begin{equation*}
 \begin{aligned}
 (\mathcal{F}[\widehat\bE_\ve](\bk) \cdot \bk)\bk &= 
 \frac{1}{(2\pi)^{\frac{3}{2}}}\frac{1}{\epsilon_0}  \e^{-\i k_3 \rpsi} (\bpsi \cdot \bk)\bk = 
 \frac{1}{(2\pi)^{\frac{3}{2}}}\frac{1}{\epsilon_0}  \e^{-\i k_3 \rpsi} \left( \abs{\bk}^2 \bpsi + (\bpsi \times \bk ) \times \bk \right),\\
 (\abs{\bk}^2 - \kappa_\ve^2) (\mathcal{F}[\widehat\bE_\ve](\bk) \times \bk ) \times \bk &= 
   -\frac{\kappa_\ve^2}{(2\pi)^{\frac{3}{2}}\epsilon_0} \e^{-\i k_3 \rpsi}
  (\bpsi \times \bk) \times \bk .
 \end{aligned}
\end{equation*}
Inserting these two identities into \autoref{eq:ktransformEandpsi} and noting that since $\kappa_\ve$ is not real, one can divide by $|\bk|^2 - \kappa_\ve$, yields
\begin{equation*}
\begin{aligned}
 \abs{\bk}^2 \mathcal{F}[\widehat\bE_\ve](\bk) &= 
\frac{1}{(2\pi)^{\frac{3}{2}}}\frac{1}{\epsilon_0} \e^{-\i k_3 \rpsi} 
\left(  \abs{\bk}^2  \bpsi + (\bpsi \times \bk ) \times \bk + \frac{\kappa_\ve^2}{\abs{\bk}^2- \kappa_\ve^2}(\bpsi \times \bk ) \times \bk\right)\\&= 
\frac{1}{(2\pi)^{\frac{3}{2}}}\frac{1}{\epsilon_0}  \e^{-\i k_3 \rpsi} 
\left( \abs{\bk}^2 \bpsi + \frac{\abs{\bk}^2}{\abs{\bk}^2 - \kappa_\ve^2}(\bpsi \times \bk ) \times \bk\right),
\end{aligned}
\end{equation*}
such that 
\begin{equation*}
\begin{aligned}
\mathcal{F}[\widehat\bE_\ve](\bk) &= 
\frac{1}{(2\pi)^{\frac{3}{2}}}\frac{1}{\epsilon_0}  \e^{-\i k_3 \rpsi} 
\left( \bpsi + \frac{(\bpsi \times \bk ) \times \bk}{\abs{\bk}^2 - \kappa_\ve^2} \right).
\end{aligned}
\end{equation*}
Therefore 
\begin{equation*}
\widehat\bE_\ve({\bf r}) = \frac{1}{(2\pi)^{\frac{3}{2}}}\frac{1}{\epsilon_0} 
\mathcal{F}_{12}^{-1} \left[ \mathcal{F}_3^{-1} 
\left[ \left(\bpsi  +   \frac{(\bpsi \times \bk) \times \bk}{\abs{\bk}^2 - \kappa_\ve^2} \right)  \e^{-\i k_3 \rpsi}\right](r_3)\right] (r_1,r_2).
\end{equation*}

In order to prove \autoref{eq:EFW} for $\widehat{\bE}_\ve$, it remains to show that 
\begin{equation*}
\frac{1}{\sqrt{2\pi}}
\mathcal{F}_3^{-1} 
\left[k_3 \to  \left(\bpsi  +   \frac{(\bpsi \times \bk) \times \bk}{\abs{\bk}^2 - \kappa_\ve^2} \right)\e^{-\i k_3 \rpsi}\right](r_3) =
 \psi_3 \be_3\eindelta(r_3 - \rpsi) + \frac{\i \e^{\i q_\ve  (r_3-\rpsi)}}{ 2q_\ve  } (\bpsi \times \bk_{q_\ve}) \times \bk_{q_\ve},
\end{equation*}
which is done by standard, but quite lengthy computations, which are presented in \autoref{app:FT}.

Now, we consider $\ve \to 0$. \autoref{thm:convlaplace} combined with the continuity of the inverse Fourier transform 
$\mathcal F_{12}^{-1}$ in $\mathcal S'(\R^2,\R^2)$
which implies that 
\begin{equation*}
\widehat{\bE}(\br) = -\frac{1}{4\pi}\frac{1}{\epsilon_0} \mathcal F_{12}^{-1} \left[(k_1,k_2) \to \psi_3 \be_3 \eindelta(r_3 - \rpsi) +\lim_{\ve \to 0}  \frac{\i \e^{\i q_\ve  (r_3-\rpsi)}}{ 2q_\ve  } (\bpsi \times \bk_{q_\ve}) \times \bk_{q_\ve} \right](r_1,r_2).
\end{equation*}
To prove the assertion, we simply need to check that, in $\mathcal S'$
\[ \lim_{\ve \to 0} \frac{\i \e^{\i q_\ve  (r_3-\rpsi)}}{ 2q_\ve  } (\bpsi \times \bk_{q_\ve}) \times \bk_{q_\ve}=    \frac{\i \e^{\i q (r_3-\rpsi)}}{2q} 
 (\bpsi \times \bk_q) \times \bk_q. \]
These two quantities being $L^1_{\mathrm{loc}}$ functions, it is enough to show that the limit holds in $L^1_{\mathrm{loc}} (\R \times (\R^2\times\R))$. The $L^1_{\mathrm{loc}}$ convergence is then obtained noticing that 
\[ \e^{\i q_\ve  (r_3-\rpsi)}(\bpsi \times \bk_{q_\ve}) \times \bk_{q_\ve} -  \e^{\i q (r_3-\rpsi)}(\bpsi \times \bk_q) \times \bk_q \xrightarrow{L^\infty} 0\]
and that 
\[\frac{1}{q_\ve}- \frac{1}{q} = 
\frac{\kappa_\ve^2 -  \kappa^2}{(\kappa_\ve^2 - k_1^2 -k_2^2)\sqrt{\kappa^2 - k_1^2 -k_2^2} +(\kappa^2 - k_1^2 -k_2^2)\sqrt{\kappa_\ve^2 - k_1^2 -k_2^2}}  
\]
converges to zero in $L^1_{\mathrm{loc}}$. Note that this would imply only a convergence in $\mathcal D'$, but the two functions are actually uniformly $L^\infty$ outside the compact set $\{k_1^2 + k_2^2 \gs |\kappa|^2+1\}$, so the convergence holds in $\mathcal S'$ as well.
\end{proof}

 Moreover, we make the assumption that the dipole can be \emph{rotating}.
 \begin{definition}[Rotating Dipole]
 The emitting dipole is considered wobbling uniformly distributed around the dipole orientation 
 $\frac{\bpsi_m}{|\bpsi_m|} = (\theta_m,\varphi_m) \in \sphere$ in a cone of semi-angle $\alpha_m$.  
 Assuming a dipole-emission from a oscillating source we get after averaging an \emph{source} represented as the indicator function 
 \begin{equation} \label{eq:dye}
  \mathds{1}_m = \frac{1}{\abs{C(\bpsi_m,\alpha_m)}} \mathds{1}_{C(\bpsi_m,\alpha_m)}, 
 \end{equation}
 where 
 \begin{equation} \label{eq:wob}
 C(\bpsi_m,\alpha_m) = \set{\tau \bphi \in \sphere : \abs{\measuredangle \bphi \bpsi_m} \leq \alpha_m, 0 \leq \tau \leq |\bpsi_m|}.
\end{equation}
Note that $\abs{C(\bpsi_m,\alpha_m)} = \frac{1}{3} \pi |\bpsi_m|^3 \tan^2(\alpha_m)$.
Taking into account \autoref{eq:J_formula} and \autoref{eq:dens} the according charge density and current of dye $m$ are given by 
\begin{equation} \label{eq:J_wobbel}
\begin{aligned}
 \fouriercurrent_{m}^o(\br;\omega) &= -\i \omega \mathds{1}_m, \quad \widehat{\rho}_{m}^o(\br;\omega) = \frac{\i}{\omega} \Div \fouriercurrent_{m}^o(\br;\omega) \text{ and } \\
\widehat{\bf R}_{m}^o(\br;\omega) &:= \frac{\i \omega}{c^2} \fouriercurrent_{m}^o(\br;\omega) + \Grad \widehat{\rho}_{m}^o (\br;\omega).
\end{aligned}
\end{equation}

\end{definition}

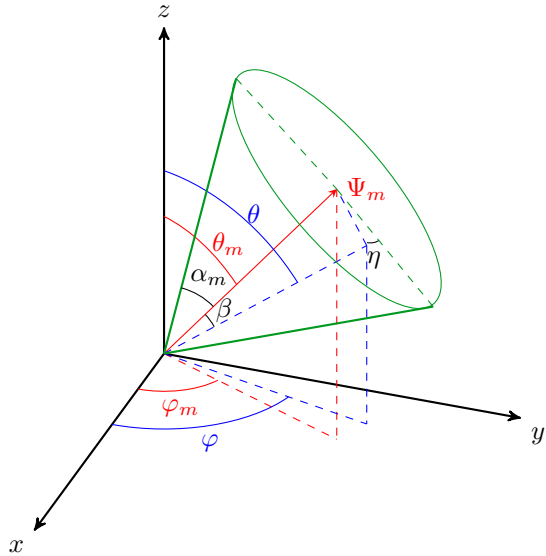
\begin{figure}[h] 
\centering

\tdplotsetmaincoords{60}{110}
\pgfmathsetmacro{\rvec}{1.0}
\pgfmathsetmacro{\thetavec}{40}
\pgfmathsetmacro{\phivec}{65}
\pgfmathsetmacro{\srvec}{0.85}
\pgfmathsetmacro{\sthetavec}{50}
\pgfmathsetmacro{\sphivec}{75}
\pgfmathsetmacro{\pthreervec}{0.85}
\pgfmathsetmacro{\pthreethetavec}{66.83}
\pgfmathsetmacro{\pthreephivec}{85}
\pgfmathsetmacro{\prvec}{1.1}
\pgfmathsetmacro{\pthetavec}{20.2}
\pgfmathsetmacro{\pphivec}{50}

\pgfmathsetmacro{\tilttheta}{-77}
\pgfmathsetmacro{\tiltphi}{30}
\pgfmathsetmacro{\tiltpsi}{55}

\begin{tikzpicture}[scale=5,tdplot_main_coords]
    \coordinate (O) at (0,0,0);
    \draw[thick,->] (0,0,0) -- (1,0,0) node[anchor=north east]{$x$};
    \draw[thick,->] (0,0,0) -- (0,1,0) node[anchor=north west]{$y$};
    \draw[thick,->] (0,0,0) -- (0,0,1) node[anchor=south]{$z$};
    
    \coordinate (ez) at (0,0,1);

    \tdplotsetcoord{P}{\rvec}{\thetavec}{\phivec}
    \draw[-stealth,color=red] (O) -- (P) node[ right ] {$\Psi_m$};
    \draw[dashed, color=red] (O) -- (Pxy);
    \draw[dashed, color=red] (P) -- (Pxy);
    \tdplotdrawarc[color=red]{(O)}{0.2}{0}{\phivec}{anchor=north,color=red}{$\varphi_m$}
    \tdplotsetthetaplanecoords{\phivec}
    \tdplotdrawarc[tdplot_rotated_coords,color=red]{(0,0,0)}{0.42}{0}
        {\thetavec}{anchor=west, color=red}{$\theta_m$}
    
    \tdplotsetcoord{P1}{\srvec}{\sthetavec}{\sphivec}
    \draw[dashed, color=blue] (O) -- (P1) node[above right] {};
    \draw[dashed,  color=blue] (O) -- (P1xy);
    \draw[dashed, color=blue] (P1) -- (P1xy);
    \tdplotdrawarc[color=blue]{(O)}{0.4}{0}{\sphivec}{anchor=north,color=blue}{$\varphi$}
    \tdplotsetthetaplanecoords{\sphivec}
    \tdplotdrawarc[tdplot_rotated_coords,color=blue]{(0,0,0)}{0.56}{0.1}%
        {\sthetavec}{anchor=west,color=blue}{$\theta$}    
    
    
    \tdplotsetcoord{P2}{\prvec}{\pthetavec}{\pphivec}
    \draw[thick,color={blue!30!black!40!green}] (O) -- (P2) node[above right] {};
    
    \tdplotsetcoord{Pmid}{1}{49}{66}
    \tdplotdrawarc{(Pmid)}{0.025}{0.02}{85}{anchor=north}{$\eta$}

    \tdplotsetcoord{P3}{\pthreervec}{\pthreethetavec}{\pthreephivec}
    \draw[thick,color={blue!30!black!40!green}] (O) -- (P3) node[above right] {};
    
    \draw[dashed, color={blue!30!black!40!green}] (P2) -- (P);
    \draw[dashed, color={blue!30!black!40!green}] (P3) -- (P);
    \draw[dashed, color={blue}] (P1) -- (P);
    
    \pic[draw,-,angle radius=.75cm,angle eccentricity=1.3,"$\beta$"] {angle=P1--O--P};
    \pic[draw,-,angle radius=.9cm,angle eccentricity=1.3,"$\alpha_m$"] {angle=P--O--P2};
    
    \tdplotsetthetaplanecoords{20}
    \tikzset{xyplane/.estyle={cm={
      cos(\tilttheta),sin(\tilttheta)*sin(\tiltphi),sin(\tiltpsi)*sin(\tilttheta),
      cos(\tiltpsi)*cos(\tiltphi)-sin(\tiltpsi)*cos(\tilttheta)*sin(\tiltphi),(P)}
      }}
    \draw[blue!30!black!40!green,xyplane] (P) circle (0.6); 
\end{tikzpicture}
 \caption{The axis of cone has angular coordinates $\theta_m$ and $\varphi_m$ in the coordinate system.  A general orientation within the cone has coordinates $\theta$ and $\varphi$ in the coordinate system, and axial coordinate $\beta$, and azimuthal coordinate $\eta$ with respect to the cone axis. The outer limit of motion in the cone is given by $\beta = \alpha_m$.}
	\label{fig:Cone}
\end{figure}

\section{The Forward Problem} 
In the following we present mathematical models describing the emission and propagation of light caused by dyes, which are 
exposed to strong laser light illumination. See \autoref{fig:cell} for a schematic representation of the experiment.
In single molecule localization microscopy two-dimensional images are recorded after exposing the probe subsequently 
to strong laser illuminations, such that the dyes appear in dark (``off'') and light (``on'') state. This allows to separate the fluorescent emission of individual dyes in time, allowing for high resolution images. In order to minimize the notational effort we consider 
recording of a single image frame first. 
The mathematical model of consecutive recordings of multiple frames is analogous and requires one additional parameter representing 
numbering of frames (a virtual time).

\begin{figure}[h] 
	\centering
  \pgfmathsetmacro{\cellRadius}{3}
  \pgfmathsetmacro{\cellSpacing}{5}
  \pgfmathsetmacro{\delradius}{\cellRadius/\cellSpacing}
\begin{tikzpicture}[baseline=(current bounding box.north)]
  \path
    (-\cellRadius, 0) coordinate (A) 
    -- coordinate (M)
    (\cellRadius, 0) coordinate (B)
    (M) +(60:\cellRadius) coordinate (C)
    +(120:\cellRadius) coordinate (D)
  ;

  \coordinate (A1) at (-\cellRadius-0.5, 0);
  \coordinate (A2) at (-\cellRadius-0.5, -0.5);
  \coordinate (B1) at (\cellRadius+0.5, 0);
  \coordinate (B2) at (\cellRadius+0.5, -0.5);
  
  \coordinate (D) at (\cellRadius+0.5, -1.5);
  
%
  \draw (A1) -- (A2)--(B2) --(B1) --(A1);
 
  \foreach \i in {1,...,9}
  {
      \pgfmathtruncatemacro{\rr}{ -\cellRadius + \i * \delradius}; 
      \filldraw[black] (\rr,0.05) circle(1.4pt);
      \pgfmathsetmacro{\ar}{ \rr + 0.5  * \delradius};
      \draw[thick, ->,color={blue}] (\ar,-2) -- (\ar,-1); 
  }   
  
  \node at (D) [below=.1em] {$\textcolor{blue}{\text{Illumination}}$};
  
\end{tikzpicture}	
\caption{\label{fig:cell} Illustration of the experiment: Biomolecular structures are placed on the glass surface at position $r_3^\bpsi \ls 0$ and illuminated from the bottom.
 The glass plate has a thickness $r_3^\bpsi$. 
 }
\end{figure}
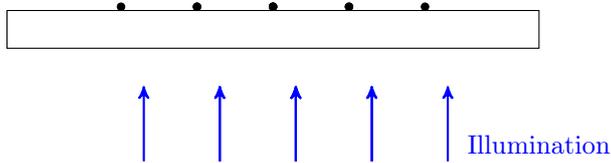

In the following we state a series of assumptions, which are used throughout the remainder of the paper: 
\begin{assumption}[Medium, Monochromatic Source and Response] \label{ass:simple2} 
 In the following we assume that
 \begin{itemize}
  \item The incident light is a \emph{monochromatic} \emph{plane wave} of frequency $\omega_{inc}$ and orientation $\bv$.
  \item The medium is assumed to be isotropic, non magnetizable, homogeneous and has a linear polarization response.
  \item Moreover, we assume that a dye can be modeled as an \emph{absorbing dipole} $\Psi_a$, which emits monochromatic waves of frequency 
   $\omega \neq \omega_{inc}$ resulting in an \emph{emitting dipole}
    \begin{equation} \label{eq:plane_wave}
     \Psi = (\bv \cdot \Psi_a) \bv.
    \end{equation}
    Indeed what we will measure is the electric field at frequency $\omega$, which is not affected by the incident field 
    at frequency $\omega_{inc}$. As a consequence we only have to consider the electric field at the frequency $\omega \in \R$.   
 \item In what follows we assume that the considered dipole $\bpsi =\vp$ is located at position 
  $\br^\bpsi = \begin{pmatrix} 0 & 0 & \rpsi \end{pmatrix}^T$ with $\rpsi \ls 0$. Unless stated otherwise 
  $\br \in \R^3$ with $r_3 > 0$. 
  The sign assumptions on $r_3$ and $\rpsi$ are in accordance with the experiment: the object is assumed left of the lens 
  system (see \autoref{fig:Scheme}) and $\br$ is a point of the measurement system.
 \item The dyes absorb light, which can result in fluoresence emission. We describe the states of an absorbing dye 
  with index $m$ via a time indicator function:
  The \emph{on-off} indicator 
  \begin{equation} \label{ass:simple}
    \mathbf{I}_m \in \set{0,1},
  \end{equation}
  tells us whether the $m$-th dye is an emitting state or not. 
 \end{itemize}
 \end{assumption}
 
The complete experimental setup of the optical experiment of single molecule localization microscopy is represented in \autoref{fig:Scheme}. For the mathematical 
modeling we are considering the propagation light at different locations of the optical system. The dyes are considered at 
positions $\br^{\bpsi_m}$ with $r_3^{\bpsi_m} \ls 0$ and the \emph{focal plane} (which contains the focal point of the objective) corresponds to the bottom of the glass plate, which is not mathematically modeled, that is the focal plane is at position $r_3=0$. 
Note that in particular that the dipole \emph{is not located at the focal plane}, unless if $r_3^{\bpsi_m} = 0$.
For the sake of simplicity of presentation we consider only a single dye, 
and leave the subscript $m$ whenever appropriate.

The mathematical modeling of the experimental setup follows \cite{Axe12}, however it is adapted to our notation: 
\begin{itemize}
 \item In \autoref{ss:ff} we describe the propagation of the electric field in the medium, that is from the bottom of the cell 
       (the assumption is that only molecule labeled with a dye at the bottom of the cell emit light) 
       up to the objective (see \autoref{fig:Scheme}). 
       This domain will be denoted by $\Omega$. 
       Since the objective is far away from the molecule (relative to the size of the molecule) the 
       electric field can be approximated well by its far field, which is calculated below. 
       The 3-dimensional $k$-transformed coordinate system is denoted by $\bk \in \R^3$ (see \autoref{ss:ff}).
       
 \item In \autoref{ss:cyl} we present in mathematical terms the propagation of the emitted light when it passes 
       through the objective; that is after passing through the medium. In fact the light rays are aligned parallel by the objective in $r_3$ direction. 
       The objective has a focal length $\fobj$ and it is positioned orthogonal to the $r_3$ axis with left distance to the 
       focal plane (glass plate) $r_3^\obj = \fobj$. 
       Indeed the lens system is complicated and a detailed mathematical modeling is not possible. A simplified model 
       assumes that the objective is big compared to the wavelength, such that the intensity law of \emph{Geometric Optics} applies 
       (see \autoref{fig:Area_dA12} and \cite{BorWol99}), and phase shifts due to the curvature of the lenses can be neglected. 
              
 \item In \autoref{ss:between} we calculate the propagation of the light after passing through the back focal plane of the objective, that is in between 
       $r_3^{\texttt{bfp}} + \ttd_\obj$ and $r_3^{\texttt{tl}} - \texttt{d}_0 $, 
       from knowledge of the field at the plane with third coordinate $r_3^{\texttt{bfp}}$. 
       Here $\texttt{d}_0$ denotes the maximal width of the lens (see \autoref{fig:Scheme}).
       This is achieved by solving the Helmholtz equation in air between the back focal plane of the objective and the 
       incident plane of the tube lens.
       
 \item We assume that the lens is a circular \emph{tube lens} with maximal thickness $\texttt{d}_0$. The thickness is described as a function $\texttt{d}$.
       Moreover, we assume that the lens has a \emph{focal length} $\flens$ and that its pupil function is given by $P_\L:\R^2 \to \R$,  
       \begin{equation} \label{eq:pupil}
        P_\L(\bx) = \left\{
        \begin{array}{lcl}
        1 & \text{ for } & \abs{\bx} \leq R \\
        0 & \text{ for} &\abs{\bx} > R\\
        \end{array} 
        \right. .
       \end{equation} 
       The lens is assumed to be \emph{converging}, such that the \emph{paraxial approximation} holds, that is we can assume that the wave vector of the wave is almost aligned with the optical axis \cite[Sec. 4.2.3]{Goo05}. The adequate formulas are derived in 
       \autoref{ss:tube}.
 \item Finally the light is bundled to the image plane, which provides an image described by coordinates 
       $\bx_\focus \in \R^2$ (see \autoref{ss:final}).
\end{itemize}

\begin{figure}[ht] 
\centering

  \pgfmathsetmacro{\lensRadius}{4}
  \pgfmathsetmacro{\lensHeight}{3}
  \pgfmathsetmacro{\startAngle}{asin(\lensHeight/\lensRadius)}
  \pgfmathsetmacro{\lensCenter}{cos(\startAngle)*\lensRadius}
  \pgfmathsetmacro{\ptAngle}{\startAngle/2}
  \pgfmathsetmacro{\xptlens}{cos(\ptAngle)*\lensRadius}
  \pgfmathsetmacro{\yptlens}{sin(\ptAngle)*\lensRadius} 
  
  \pgfmathsetmacro{\tubelensRadius}{4.25}
  \pgfmathsetmacro{\tubelensHeight}{3.25}
  \pgfmathsetmacro{\tubestartAngle}{asin(\tubelensHeight/\tubelensRadius)}
  \pgfmathsetmacro{\lensCenter}{cos(\tubestartAngle)*\tubelensRadius}

\begin{tikzpicture}[scale=0.8]

   \coordinate (D) at (0,0,0);
   \filldraw[black] (D) circle(1.2pt);
   \coordinate[label=below:$r_3^{\bpsi}$] (dipole) at (D);
  
   \coordinate (Dip) at (\lensRadius/5,\lensHeight/2,0);  
   \draw[thick, ->,color={blue!30!black!40!green}] (D) -- (Dip) node[left] {$\frac{\bpsi}{\abs{\bpsi}}$};
  
   \coordinate[label=below:$r_3^{\texttt{f}}$] (rImage) at (4*\lensRadius,0,0);
   \filldraw[black] (rImage) circle(1.2pt);
   \draw[thick] (D) -- (rImage);

   \coordinate[label=below:$\be_3$] (E) at (1.4*\lensRadius,0,0);
   \draw[thick,->] (D) -- (E);
  
   \coordinate (O) at (\lensRadius/3,0,0);
   \filldraw[black] (O) circle(1.2pt);
   \coordinate[label=below:$0$] (origin) at (O);
   \coordinate[label=below:$\Omega$] (origin) at (\lensRadius/3+0.5,-1,0);
  
   \coordinate[label=above:$\be_p$] (Ep) at (\lensRadius/3,\lensHeight-0.5,0);
   \coordinate[label=below:$\be_s$] (Es) at (\lensRadius/3,0,\lensHeight+1); 
  
   \draw[dashed] (\lensRadius/3,-0.35*\lensHeight,0) -- (O);
   \draw[thick,->] (O) -- (\lensRadius/3,\lensHeight-0.5,0);
   \coordinate[label=above:$\text{Focal Plane}$] (fplane) at (\lensRadius/3,\lensHeight+0.9,0);
   \coordinate[label=above:\autoref{ss:ff}] (fplane) at (\lensRadius/3-1,\lensHeight-7.9,0);
   
   \draw[thick,->] (O) -- (\lensRadius/3,0,\lensHeight+1);
   \draw[dashed] (\lensRadius/3,0,-0.5*\lensHeight) -- (O);
  
   \coordinate (P) at (\xptlens,\yptlens);
  
   \tikzset{middlearrow/.style={
        decoration={markings,
            mark= at position 0.5 with {\arrow{#1}} ,
        },
        postaction={decorate}  }
    }
    
   \draw[thick,middlearrow={>}] (D) --  (P);
  
   \draw[thick,middlearrow={>}] (O) --  (P);

   \draw (\lensCenter-0.099,\lensHeight) arc[start angle=\startAngle,delta angle=-2*\startAngle,radius=\lensRadius];
  
   \coordinate[label=above:$\text{Objective}$] (obj) at (\lensCenter+1,\lensHeight+0.8);
   \coordinate[label=above:\autoref{ss:cyl}] (objSec) at (\lensCenter+1,\lensHeight-7.9,0);
   \draw (\lensCenter-0.099,\lensHeight) -- (\lensCenter+1.5,\lensHeight);
   \draw (\lensCenter-0.099,-1*\lensHeight) -- (\lensCenter+1.5,-1*\lensHeight); 
   \draw(\lensCenter+1.5,\lensHeight) -- (\lensCenter+1.5,-1*\lensHeight); 
  
    \tikzset{
	position label/.style={
	  above = 3pt,
	  text height = 1.5ex,
	  text depth = 1ex
	},
      brace/.style={
	decoration={brace,mirror},
	decorate
      }
    }  
  \coordinate (dobj) at  (\lensRadius,0);
  \draw [brace,decoration={raise=0.5ex}] [brace]  (P.north) -- (dipole.north) node [position label, pos=0.4, rotate = 25,scale=0.8] {$r=\abs{\br}$};
  \draw [brace,decoration={raise=0.5ex}] [brace]  (O.south) -- (dobj.south) node [position label, below=0.1, pos=0.6, rotate = 0,scale=0.8] {$\fobj$};
  \pic[draw,->,angle radius=.7cm,angle eccentricity=1.3,"$\theta$"] {angle=E--O--P}; 


  \coordinate (ptBFP) at  (2*\lensRadius,\yptlens);
  \coordinate (pPlane) at  (2*\lensRadius-1,\yptlens);
  \coordinate[label=below:$r_3^{\texttt{bfp}}$] (rBFP) at  (2*\lensRadius,0);
  \filldraw[black] (rBFP) circle(1.2pt);
  \draw[thick,middlearrow={>}] (P) --  (ptBFP);

  \draw (2*\lensRadius -1,-1*\lensHeight) -- (2*\lensRadius -1,1*\lensHeight+1);
  \draw (2*\lensRadius -1,1*\lensHeight+1) -- (2*\lensRadius +1,1*\lensHeight);
  \draw (2*\lensRadius +1,-1*\lensHeight-1) -- (2*\lensRadius +1,1*\lensHeight);
  \draw (2*\lensRadius +1,-1*\lensHeight-1) -- (2*\lensRadius -1,-1*\lensHeight);
  
  \draw[dashed] (2*\lensRadius,-1.3*\lensHeight,0) -- (2*\lensRadius,1.3*\lensHeight,0);
  \coordinate[label=above:$\text{Back focal plane}$] (BFPtext) at (2*\lensRadius,1.3*\lensHeight,0);
  \coordinate[label=above:\autoref{ss:between}] (BFPtext) at (2*\lensRadius,\lensHeight-7.9,0);
  
  \draw[thick] (rBFP) -- (pPlane);
  \pic[draw,->,angle radius=.7cm,angle eccentricity=1.3,"$\varphi$"] {angle=ptBFP--rBFP--pPlane}; 
 
  \draw[dashed] (\lensRadius,-1.3*\lensHeight,0) -- (\lensRadius,0,0);
  \coordinate (obj1) at (1*\lensRadius,-1.3*\lensHeight,0);
  \coordinate (bfp1) at (2*\lensRadius,-1.3*\lensHeight,0);
  \tikzset{
    position label/.style={
       below = 3pt,
       text height = 1.5ex,
       text depth = 1ex
    },
   brace/.style={
     decoration={brace,mirror},
     decorate
    }
  }
  \draw [brace,decoration={raise=0.5ex}] [brace] (obj1.south) -- (bfp1.south) node [position label, pos=0.5, rotate = 0,scale=1.0] {$\fobj$};
  
  \coordinate (ptTube) at  (3*\lensRadius,\yptlens);
  \coordinate[label=below:$r_3^{\texttt{tl}}$] (rtl) at  (3*\lensRadius,0);
  \filldraw[black] (rtl) circle(1.2pt);
  \draw[thick,middlearrow={>}] (ptBFP) --  (ptTube);
  

  \coordinate (tarc1) at  (3*\lensRadius,\lensHeight);
  \coordinate (tarc2) at  (3*\lensRadius,-1*\lensHeight);
  \draw  (tarc1) to[in=110,out=250] (tarc2)  to[in=-70,out=70] (tarc1)  -- cycle;

  \coordinate[label=below:$\ttd_0$] (d0) at (3*\lensRadius+0.4,0.5);
  \coordinate[label=below:$\ttd$] (d) at  (3*\lensRadius+0.25,2.0);

  \draw[dashed] (3*\lensRadius,-1.3*\lensHeight,0) -- (3*\lensRadius,1.3*\lensHeight,0);
  \coordinate[label=above:$\text{Tube Lens}$] (Tubetext) at (3*\lensRadius,1.3*\lensHeight,0);
  \coordinate[label=above:\autoref{ss:tube}] (Tubetext) at (3*\lensRadius,\lensHeight-7.9,0);
  
  \draw[thick,middlearrow={>}] (ptTube) --  (rImage);

  \draw (4*\lensRadius -1,-1*\lensHeight) -- (4*\lensRadius -1,1*\lensHeight+1);
  \draw (4*\lensRadius -1,1*\lensHeight+1) -- (4*\lensRadius +1,1*\lensHeight);
  \draw (4*\lensRadius +1,-1*\lensHeight-1) -- (4*\lensRadius +1,1*\lensHeight);
  \draw (4*\lensRadius +1,-1*\lensHeight-1) -- (4*\lensRadius -1,-1*\lensHeight);
  
  \draw[dashed] (4*\lensRadius,-1.3*\lensHeight,0) -- (4*\lensRadius,1.3*\lensHeight,0);
  \coordinate[label=above:$\text{Image Plane}$] (Imagetext) at (4*\lensRadius,1.3*\lensHeight,0);
  \coordinate[label=above:\autoref{ss:final}] (Imagetext) at (4*\lensRadius,\lensHeight-7.9,0);
  
  \coordinate (t2) at (3*\lensRadius,-1.3*\lensHeight,0);
  \coordinate (im2) at (4*\lensRadius,-1.3*\lensHeight,0);
  \tikzset{
    position label/.style={
       below = 3pt,
       text height = 1.5ex,
       text depth = 1ex
    },
   brace/.style={
     decoration={brace,mirror},
     decorate
    }
  }  
  \draw [brace,decoration={raise=0.5ex}] [brace] (t2.south) -- (im2.south) node [position label, pos=0.5, rotate = 0,scale=1.0] {$\flens$};
  
  \coordinate[label=below:$\mathcal{I}$] (rImage) at (4*\lensRadius,2,0);
\end{tikzpicture}
  \caption{The \emph{plane of observation} is defined as the plane containing the dipole $\bpsi$, 
        the $\be_3$-axis, and the path of a particular ray through the objective, the \emph{back focal plane} and the tube lens 
        (with focal length $\fobj$).}
  \label{fig:Scheme}
\end{figure}

We summarize the different coordinate systems used below in a table:
\begin{table}[h]
\begin{center}
 \begin{tabular}{|r||c|c|}
 \hline
  Position & Coordinates & Fourier \\
  \hline
  Medium $\Omega$ & ${\br} \in \R^3$, $(r,\theta,\varphi) \in \R_+ \times \sphere$ &$\bk \in \R^3$\\
  Back focal plane (BFP) & $\bx \in \R^2$, $(\rho,\varphi) \in \R_+ \times [0,2\pi)$ & $\bu \in \R^2$, $(\xi,\nu) \in \R_+ \times [0,2\pi)$\\
  Tube Lens  & $\by \in \R^2$, $(\varrho,\sigma) \in \R_+ \times [0,2\pi)$ & $\bv \in \R^2$, $(\varkappa ,\vartheta) \in \R_+ \times [0,2\pi)$\\
  Image plane (IP) $\mathcal{I}$ & $\bx_\focus \in \R^2$ & $\bu_\focus \in \R^2$ \\
  Between BFP and IP & $(\bx,r_3) \in \R^3$ & \\
  \hline
 \end{tabular}
 \begin{tabular}{|r|l|}
  \hline
  General notation &  $\bk_{12} = (k_1,k_2)^T \in \R^2$ $\bk = (k_1,k_2,k_3)^T \in \R^3$\\ 
  & $\bk_z = (k_1, k_2, z)^T$  $z \in \C$, $k_1,k_2 \in \R$\\
  \hline
  \end{tabular}
\caption{\label{ta:abb} Some abbreviation to look after in \autoref{lemmaE} and its proof, as well as in \autoref{app:FT} and \autoref{app:farField}.}
\end{center}
\end{table}

\subsection{Far Field Approximation in the Medium} \label{ss:ff}
In this subsection we derive the far field approximation of the Fourier-transform of the electric field, $\widehat\bE$, \emph{in the medium}.
The derivation expands \cite{ForWeb84}.

First, we give the definition of the far field: 
\begin{definition}\label{de:far_field}
The \emph{far field} $F_\infty: \sphere \to \C^3$ of a function $F: \R^3 \to \C^3$ satisfies: 
There exists $\hat{C} > 0$ and a function $C :[0,\infty) \to [0,\infty)$ such that
 \begin{equation} \label{eq:farfield}
 \lim_{r \to \infty} \abs{F(r,\theta,\varphi)  -  C(r) F_\infty(\theta,\varphi) } = 0 \quad \text{ with } \quad
 \abs{ r C(r) } \leq \hat{C} \text{ for all } r \in [0,\infty).
\end{equation}
\end{definition}


\begin{lemma}\label{lemmaEfar}
Let the medium be isotropic, non magnetizable, homogeneous and have a linear polarization response.
We assume that the considered dipole $\bpsi$ is located at position $\br^\bpsi = \begin{pmatrix} 0 \\ 0 \\ \rpsi \end{pmatrix}$
with $\rpsi < 0$. Moreover, let $\br = \vr \in \R^3$ with $r_3 > \rpsi$; 
The later assumption means that we are considering only light rays, which are propagating into the lens system (see \autoref{fig:Scheme}).

Then the \emph{far field} of $\widehat{\bE}$ \emph{in the medium} is given by
\begin{equation}\label{eq:Efarfield1}\boxed{
\begin{aligned}
 \e^{\i \kappa  \rpsi\cos(\theta)} \widehat{\bE}_\infty(\theta,\varphi) = 
 \cos(\theta) 
\biggl(-\Psi_p \cos(\theta)+\Psi_3 \sin(\theta) \biggr) \be_p - \Psi_s \be_s + 
 \sin(\theta) \biggl(\Psi_p \cos(\theta) - \Psi_3 \sin(\theta)\biggr) \be_3
\end{aligned}}
 \end{equation} 
 and 
\begin{equation} \label{eq:C}
\boxed{
C(r) = \frac{\kappa^2}{8\pi \ve_0} \frac{\e^{\i \kappa r }}{r}.}
\end{equation}
where $\Psi_j = \inner{\bpsi}{\be_j}$, $j=p,s,3$ are the coefficients of $\bpsi$ with respect to the 
orthonormal basis
\begin{equation} \label{eq:be}
   \be_p := \begin{pmatrix} 
                  \cos(\varphi) & \sin(\varphi) & 0
                 \end{pmatrix}^T, \quad 
  \be_s := \begin{pmatrix}
                 -\sin(\varphi) & \cos(\varphi) & 0
               \end{pmatrix}^T, \quad
  \be_3,
 \end{equation}
 that is 
\begin{equation} \label{eq:ps3}
   \bpsi = \Psi_p \be_p + \Psi_s \be_s + \Psi_3 \be_3.
\end{equation}
\end{lemma}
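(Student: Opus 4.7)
The proof will be a stationary phase / method of steepest descent analysis applied to the inverse Fourier representation of $\widehat{\bE}$ furnished by \autoref{lemmaE}. Since $r_3 > 0 > \rpsi$, the $\delta(r_3 - \rpsi)$ contribution vanishes, and it suffices to analyse
\[
\widehat{\bE}(\br) = -\frac{\i}{16\pi^2 \epsilon_0}\int_{\R^2} \frac{e^{\i[k_1 r_1 + k_2 r_2 + q(r_3-\rpsi)]}}{q}\,(\bpsi \times \bk_q) \times \bk_q \, dk_1\,dk_2,
\]
where $q = \sqrt{\kappa^2 - k_1^2 - k_2^2}$ with the convention $\operatorname{Im} q \geq 0$ (as in \autoref{eq:qk1}). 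The integrand decays exponentially in the evanescent regime $k_1^2 + k_2^2 > \kappa^2$, so only the propagating region contributes to the far field.

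\textbf{Stationary phase.} Parametrising $\br = r\hat\br$ with $\hat\br = (\sin\theta\cos\varphi,\sin\theta\sin\varphi,\cos\theta)$, I write the exponent as $r\,\Phi_0(k_1,k_2) - q\rpsi$, where
\[
\Phi_0(k_1,k_2) = k_1 \sin\theta\cos\varphi + k_2 \sin\theta\sin\varphi + q\cos\theta.
\]
Setting $\nabla \Phi_0 = 0$ yields the unique stationary point $k_1^\ast = \kappa\sin\theta\cos\varphi$, $k_2^\ast = \kappa\sin\theta\sin\varphi$, at which $q^\ast = \kappa\cos\theta$, so $\bk_{q^\ast} = \kappa\hat\br$, and $\Phi_0(k^\ast) = \kappa$. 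A direct computation of the Hessian gives diagonal entries $-\cos\theta\,(q^2 + k_i^2)/q^3$ and off-diagonal $-\cos\theta\, k_1 k_2/q^3$; the determinant simplifies (using $k_1^2+k_2^2+q^2 = \kappa^2$ at $k^\ast$) to $1/(\kappa^2\cos^2\theta)$. Since $\cos\theta > 0$ (i.e.\ $r_3 > 0$), both eigenvalues are negative, so the Morse signature is $\sigma = -2$, giving a phase factor $e^{-\i\pi/2} = -\i$. The standard two-dimensional stationary phase formula then produces
\[
\int_{\R^2} e^{\i r\Phi_0}\,h(k)\,dk_1\,dk_2 \;\sim\; -\frac{2\pi\i\kappa\cos\theta}{r}\,e^{\i\kappa r}\,h(k^\ast).
\]
Applying this with $h(k) = e^{-\i q\rpsi}(\bpsi\times\bk_q)\times\bk_q / q$ and evaluating at $k^\ast$ (where the $O(1)$ phase contributes $e^{-\i\kappa\rpsi\cos\theta}$ and $(\bpsi \times \bk_{q^\ast}) \times \bk_{q^\ast} = \kappa^2 (\bpsi\times\hat\br)\times\hat\br$), I arrive at
\[
\widehat{\bE}(\br) \;\sim\; \frac{\kappa^2}{8\pi\epsilon_0}\,\frac{e^{\i\kappa r}}{r}\,e^{-\i\kappa\rpsi\cos\theta}\bigl[(\bpsi\times\hat\br)\times\hat\br\bigr],
\]
which identifies $C(r) = \frac{\kappa^2}{8\pi\epsilon_0}\,\frac{e^{\i\kappa r}}{r}$ and the far-field pattern up to sign (to be reconciled with the sign conventions of \autoref{de:far_field}).

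\textbf{Decomposition in the $(\be_p,\be_s,\be_3)$ frame.} Using $\hat\br = \sin\theta\,\be_p + \cos\theta\,\be_3$ and the double cross product identity $(\bpsi \times \hat\br) \times \hat\br = (\hat\br \cdot \bpsi)\hat\br - \bpsi$ from \autoref{eq:doubleprod}, I compute $\hat\br \cdot \bpsi = \Psi_p \sin\theta + \Psi_3 \cos\theta$ and then expand component-by-component in the orthonormal basis defined in \autoref{eq:be}. The $\be_p$ component becomes $\cos\theta(-\Psi_p\cos\theta + \Psi_3\sin\theta)$, the $\be_s$ component is $-\Psi_s$, and the $\be_3$ component is $\sin\theta(\Psi_p\cos\theta - \Psi_3\sin\theta)$, matching \autoref{eq:Efarfield1} after multiplication by $e^{\i\kappa\rpsi\cos\theta}$.

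\textbf{Main obstacle.} The technical hurdle is the rigorous justification of the stationary phase expansion: the integrand is only conditionally integrable on the critical circle $k_1^2 + k_2^2 = \kappa^2$ (where $q$ vanishes) and must be regularised via the attenuating solution $\widehat\bE_\ve$ from \autoref{de:causal_s}, passing to the limit $\ve \to 0^+$ using \autoref{thm:convlaplace}. For $\ve > 0$ the integrand is smooth and rapidly decaying, so classical stationary phase applies; the $\ve \to 0$ limit is then taken in $\mathcal S'$ as in the end of the proof of \autoref{lemmaE}. Additionally, the grazing case $\cos\theta \to 0$ must either be excluded (consistent with the physical setup of the microscope, since the objective captures only a cone $\theta < \tmax < \pi/2$) or treated separately, since the leading-order stationary phase amplitude $1/\sqrt{|\det D^2\Phi_0|} = \kappa\cos\theta$ degenerates there.
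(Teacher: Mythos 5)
Your proposal is correct and follows essentially the same route as the paper's proof: both apply the two-dimensional stationary phase method to the Fourier integral representation from \autoref{lemmaE} (the paper uses $r_3$ as the large parameter with phase $\zeta(\bk_{12})=\bk_{12}\cdot\bv+q$, which equals your $r\,\Phi_0$ identically), locate the same critical point $\hat{\bk}_{12}=\kappa\sin\theta(\cos\varphi,\sin\varphi)$, and finish with the identical double-cross-product decomposition in the $(\be_p,\be_s,\be_3)$ frame. Your explicit Morse-index bookkeeping and your remarks on the $\ve$-regularisation and the degenerate grazing case $\cos\theta\to 0$ are consistent with (and slightly more careful than) the paper's citation of H\"ormander's theorem, and your sign in fact reconciles exactly with \autoref{eq:app_farfield}.
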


\begin{proof}
Taking into account the assumption that $r_3-\rpsi > 0$, and by representing the vector 
$\br \in \R^3$ as 
\begin{equation} \label{eq:v}
 \br = r_3 \be_3 + r_3 \begin{pmatrix}
                                      \bv & 0
                                     \end{pmatrix}^T = r_3 \begin{pmatrix}
                                      v_1 & v_2 & 1
                                     \end{pmatrix}^T,
\end{equation}
with a (non-unit) vector $\begin{pmatrix}
                                      v_1 & v_2
                                     \end{pmatrix}^T = \bv \in \R^2$ in the plane spanned by  
$\be_1$ and $\be_2$, it follows from \autoref{eq:EFW} that
\begin{equation} \label{eq:fern}
\widehat{\bf E}(\br) = -\frac{1}{16\pi^2}\frac{1}{\epsilon_0} 
 \int_{\bk_{12} \in \R^2} \e^{ \i r_3 \bv \cdot \bk_{12}} \left( \frac{\i \e^{\i q (r_3-\rpsi)}}{q} (\bpsi \times \bk_q) \times \bk_q\right) \dd \bk_{12}, 
\end{equation} 

where $q$ and $\bk_q$ are as defined in \autoref{eq:qk1}. 
Note that in \autoref{eq:fern} $q=q(\bk_{12})$ is as defined in \autoref{eq:bv3}, and therefore the integral 
on the right hand side is of the form (neglecting the factor $-\frac{\i}{16\pi^2}\frac{1}{\epsilon_0}$)
\begin{equation*} 
  \int_{\bk_{12} \in \R^2} \e^{\i r_3 \zeta(\bk_{12})} \beta(\bk_{12}) \dd \bk_{12}
\end{equation*}
with 
\begin{equation} \label{eq:alpha}
\zeta(\bk_{12}) = \bk_{12} \cdot \bv +  q    \quad \text{ and } \quad
\beta(\bk_{12}) = \frac{\e^{-\i \rpsi q}}{q} (\bpsi \times \bk_q) \times \bk_q .
\end{equation}
The \emph{stationary phase method}, \cite[Th.~7.7.5]{Hoe03}, states that if $\hat{\bk}$ is a critical point of $\zeta$, which has been calculated 
in \autoref{eq:bv}, then 
\begin{equation*} 
 \begin{aligned}
  \int_{\bk_{12} \in \R^2} \e^{\i r_3 \zeta(\bk_{12})} \beta(\bk_{12}) \dd \bk_{12} &=
  \e^{\i r_3 \zeta(\hat{\bk})} \left( \det \begin{pmatrix}
                                             r_3 H(\zeta) (\hat{\bk})/(2\pi \i)
                                            \end{pmatrix} \right)^{-1/2} \beta(\hat{\bk}) + o\left( \frac{1}{r_3} \right).
 \end{aligned}
\end{equation*}
Taking into account \autoref{eq:crit} in \autoref{le:alpha}, and $\hat{\bk}_{12}$ of $\zeta$ as defined in \autoref{eq:bv}, and 
being aware that $q=q(\bk_{12})$ (that is $q$ is a function of $\bk_{12}$), we apply \autoref{eq:bv2}, \autoref{eq:bv3} and get 
\begin{align*}
  & \int_{\bk_{12} \in \R^2} \frac{\e^{\i r_3 (\bk_{12} \cdot \bv + q)}}{q} \e^{-\i q \rpsi} (\bpsi \times \bk_q) \times \bk_q \dd \bk_{12} \\
 =& 2 \i \pi \kappa^2 
    \frac{\e^{\i r_3 \kappa \sqrt{1 + \abs{\bv}^2} }}{r_3}  
     \e^{-\i \frac{\kappa\rpsi}{\sqrt{1 + \abs{\bv}^2}} }
     \left(\bpsi \times \frac{\br}{\abs{\br}}\right) \times \frac{\br}{\abs{\br}} + o\left(\frac{1}{r_3}\right) \text{ for } r_3 \to \infty.
\end{align*}
Now, we recall \autoref{eq:br} and \autoref{eq:v}, which imply that 
\begin{equation*}
\begin{aligned}
  ~ & \sqrt{1+\abs{\bv}^2} = \frac{1}{\cos(\theta)}, \quad \frac{\br}{\abs{\br}} = \begin{pmatrix}
   \sin(\theta) \cos(\phi) \\ \sin(\theta) \sin(\phi) \\ \cos(\theta)
                                                                       \end{pmatrix} = \sin(\theta) \be_p + \cos(\theta) \be_3\\
 & \text{ and } \abs{\br} \cos(\theta) = r_3,
\end{aligned}
\end{equation*}
such that we get
\begin{align*}
 \int_{\bk_{12} \in \R^2} & 
 \frac{\e^{\i r_3 (\bk_{12} \cdot \bv + q)}}{q} \e^{-\i q \rpsi} (\bpsi \times \bk_q) \times \bk_q \dd \bk_{12} \\
 &= 
 2 \i \pi \kappa^2 \e^{-\i \kappa  \rpsi\cos(\theta)} \frac{\e^{\i \kappa \abs{\br}}}{\abs{\br}}  \left(\bpsi \times \frac{\br}{\abs{\br}}\right) \times \frac{\br}{\abs{\br}} + o\left(\frac{1}{r_3}\right).
\end{align*} 
This shows that
\begin{equation}\label{eq:app_farfield}
 \widehat{\bf E}(\br) = 
   \e^{-\i \kappa \rpsi\cos(\theta)} \frac{\kappa^2}{8\pi \epsilon_0} \frac{\e^{\i \kappa \abs{\br}}}{\abs{\br}} 
  \left(\bpsi \times \frac{\br}{\abs{\br}}\right) \times \frac{\br}{\abs{\br}} + o\left(\frac{1}{r_3}\right)
  = 
 C(r) \widehat{\bf E}_\infty(\br) + o\left(\frac{1}{r_3}\right).
\end{equation}
It remains to compute the second identity of \autoref{eq:Efarfield1}.
Expressing $\displaystyle \frac{\br}{\abs{\br}}$ and $\bpsi$ in terms of the associated basis $\be_p$, $\be_s$, $\be_3$ from \autoref{de:basis}, 
and using \autoref{eq:doubleprod}, we get from \autoref{eq:ps3}
\begin{equation*} \begin{aligned}
 \left(\bpsi \times \frac{\br}{\abs{\br}}\right) \times \frac{\br}{\abs{\br}} =&
 \left(\bpsi \cdot \frac{\br}{\abs{\br}}\right) \frac{\br}{\abs{\br}} - \bpsi \\
 = & \biggl( (\sin(\theta) \be_p + \cos(\theta) \be_3) \cdot (\Psi_p \be_p + 
      \Psi_s \be_s + \Psi_3 \be_3)\biggr) (\sin(\theta) \be_p + \cos(\theta) \be_3) \\
   & - \Psi_p \be_p - \Psi_s \be_s - \Psi_3 \be_3 \\
 = & \biggl( \sin(\theta) \Psi_p + \cos(\theta) \Psi_3 \biggr) (\sin(\theta) \be_p + \cos(\theta) \be_3) - \Psi_p \be_p - \Psi_s \be_s - \Psi_3 \be_3,
\end{aligned} \end{equation*}
which after rearrangement proves the second identity.
\end{proof}

In the imaging system the calculation of the electric field is not done at once but in different sections (\autoref{ss:ff}, \autoref{ss:cyl} and \autoref{ss:between}, \autoref{ss:tube} and \autoref{ss:final}). In each of these sections the electric field is calculated by transmission from the electric field computed at the previous section. In addition, we assume that the light which hits the objective from $\Omega$ can be approximated by its far field expansion $C(\focus_\obj)\widehat{\bE}_\infty$, which we will use instead of $\widehat{\bE}$.

\subsection{Propagation of the electric field through the objective} \label{ss:cyl}
In the following we calculate the electric field in the objective.
Assuming that the electric field (light) emitted from the dipoles travels along straight lines in the medium to the 
objective, the objective aligns the emitted rays from the dipole parallel to the $r_3$-axis in such a way that the electric 
field between the incidence surface of the objective and the back focal plane undergoes a phase shift that does not depend on the distance to the optical axis.
In the ideal situation, where the wavelength is assumed to be infinitely small compared to the length parameters of the optical system, the electric field can be computed via the 
\emph{intensity law of geometrical optics} (see \autoref{fig:Area_dA12} and \cite[Sec. 3.1.2]{BorWol99} for a derivation). 

\begin{assumption} \label{ass:farfield}
The objective consists of a set of optical elements (lenses and mirrors) which are not modelled here (see some examples in \cite[Sec. 6.6]{BorWol99}). Its aim is to transform spherical waves originated at its focal point into waves which propagate along the optical axis. In what follows, the computations are made ignoring a constant (independent on the point in the back focal plane) phase shift which is underwent by the wave through the objective.
\end{assumption}

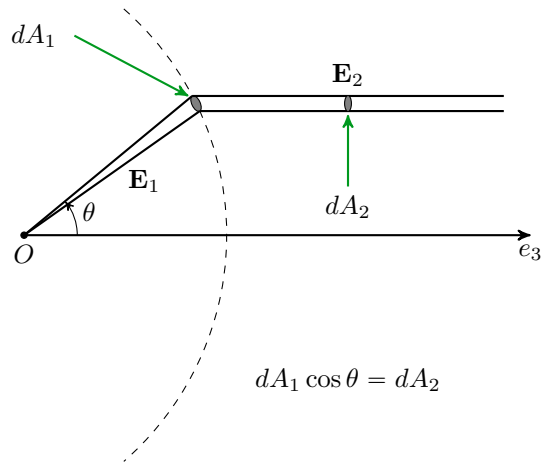
\begin{figure}[ht] 
\centering

  \pgfmathsetmacro{\lensRadius}{4}
  \pgfmathsetmacro{\lensHeight}{3}
  \pgfmathsetmacro{\startAngle}{asin(\lensHeight/\lensRadius)}
  \pgfmathsetmacro{\lensCenter}{cos(\startAngle)*\lensRadius}
  \pgfmathsetmacro{\ptAngle}{\startAngle/2}
  \pgfmathsetmacro{\xptlens}{cos(\ptAngle)*\lensRadius}
  \pgfmathsetmacro{\yptlens}{sin(\ptAngle)*\lensRadius}

\begin{tikzpicture}[scale=1]
  \coordinate (O) at (\lensRadius/3,0,0);
  \filldraw[black] (O) circle(1.2pt);
  \coordinate[label=below:$O$] (origin) at (O);
    
  \coordinate[label=below:$e_3$] (E) at (2*\lensRadius,0,0);
  \draw[thick,->] (O) -- (E);
  
  \coordinate (D) at (0,0,0);
  
  \draw[dashed] (\lensCenter,\lensHeight) arc[start angle=\startAngle,delta angle=-2*\startAngle,radius=\lensRadius]; 
  
  \coordinate (P) at (\xptlens,\yptlens);
  \coordinate (P1) at (\xptlens-0.1,\yptlens+0.2);
  
  \coordinate (M) at (\xptlens-0.05,\yptlens+0.1); 
  \draw [black,fill=gray, rotate = -65] (M) ellipse (1.12mm and 0.55mm);
  
  \draw[thick] (O) -- (P);
  \draw[thick] (O) -- (P1);

  \coordinate (B) at (\xptlens + \lensRadius,\yptlens);
  \coordinate (B1) at (\xptlens +\lensRadius-0.1,\yptlens+0.2);
  
  \coordinate (M1) at (\xptlens + 0.5*\lensRadius-0.05,\yptlens+0.1);
  
  \draw[thick] (P) -- (B);
  \draw[thick] (P1) -- (\xptlens +\lensRadius,\yptlens+0.2);
  \draw [black,fill=gray, rotate = -90] (M1) ellipse (1.05mm and 0.450mm);
   
  \draw[thick, ->,color={blue!30!black!40!green}] (\xptlens + 0.5*\lensRadius-0.05,\yptlens-1) -- (\xptlens + 0.5*\lensRadius-0.05,\yptlens-0.05);
  \node at (\xptlens + 0.5*\lensRadius-0.05,\yptlens-1) [below] {$dA_2$};
  \node at (\xptlens + 0.5*\lensRadius-0.05,\yptlens+0.75) [below] {$\bE_2$};
  
  \draw[thick, ->,color={blue!30!black!40!green}] (\xptlens - 0.4*\lensRadius-0.05,\yptlens+1) -- (\xptlens -0.15,\yptlens+0.2);
  \node at (\xptlens - 0.4*\lensRadius-0.05,\yptlens+1) [left=.3em] {$dA_1$}; 
  \node at (0.8*\xptlens ,0.6*\yptlens) [below] {$\bE_1$}; 
  
  \node at (\xptlens + 0.5*\lensRadius-0.05,-\yptlens) [below] {$dA_1 \cos\theta = dA_2$};
  
  \tikzset{
    position label/.style={
       above = 3pt,
       text height = 1.5ex,
       text depth = 1ex
    },
   brace/.style={
     decoration={brace,mirror},
     decorate
   }
  } 
  \pic[draw,->,angle radius=.7cm,angle eccentricity=1.3,"$\theta$"] {angle=E--O--P1}; 
  
\end{tikzpicture}
  \caption{Intensity law of Geometrical Optics: The energy carried along a ray must remain constant.
  The power transported by a ray is proportional to $\abs{\bE}^2 dA$,
  where $dA$ is an infinitesimal cross-section perpendicular to the ray propagation.
  Thus, the fields must satisfy $\abs{\bE_2} = \abs{\bE_1} \frac{1}{\sqrt{\cos(\theta)}} $. 
  }
  \label{fig:Area_dA12}
\end{figure}

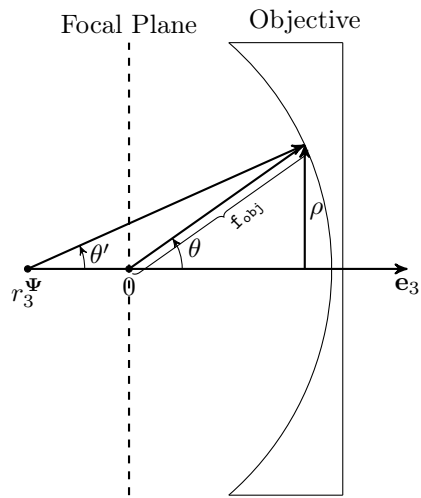
\begin{figure}[ht] 
\centering

  \pgfmathsetmacro{\lensRadius}{4}
  \pgfmathsetmacro{\lensHeight}{3}
  \pgfmathsetmacro{\startAngle}{asin(\lensHeight/\lensRadius)}
  \pgfmathsetmacro{\lensCenter}{cos(\startAngle)*\lensRadius}
  \pgfmathsetmacro{\ptAngle}{\startAngle/2}
  \pgfmathsetmacro{\xptlens}{cos(\ptAngle)*\lensRadius}
  \pgfmathsetmacro{\yptlens}{sin(\ptAngle)*\lensRadius} 

\begin{tikzpicture}[scale=1]

  \coordinate (D) at (0,0,0);
  \filldraw[black] (D) circle(1.2pt);
  \coordinate[label=below:$r_3^{\bpsi}$] (dipole) at (D);
  
  \draw[thick,->] (D) -- (\lensRadius+1,0,0);
  
  \coordinate[label=below:$\be_3$] (E) at (\lensRadius+1,0,0);

  \coordinate[label=below:$\rho$] (R) at (\lensRadius-0.2,1.0,0);
  \draw[thick,->] (\xptlens,0.0,0) --  (\xptlens,\yptlens);
  
  \coordinate (O) at (\lensRadius/3,0,0);
  \filldraw[black] (O) circle(1.2pt);
  \coordinate[label=below:$0$] (origin) at (O);
  
  \draw[thick,dashed] (\lensRadius/3,-1*\lensHeight,0) -- (\lensRadius/3,\lensHeight,0);
  \coordinate[label=above:$\text{Focal Plane}$] (fplane) at (\lensRadius/3,\lensHeight,0); 
  
  \coordinate (P) at (\xptlens,\yptlens);
  
  \draw[thick,->] (D) --  (\xptlens,\yptlens);
  
  \draw[thick] (O) -- (\xptlens,\yptlens);

  \draw (\lensCenter,\lensHeight) arc[start angle=\startAngle,delta angle=-2*\startAngle,radius=\lensRadius];
  
  \draw (\lensCenter,\lensHeight) -- (\lensCenter+1.5,\lensHeight);
  \coordinate[label=above:$\text{Objective}$] (obj) at (\lensCenter+1,\lensHeight);
  \draw (\lensCenter,-1*\lensHeight) -- (\lensCenter+1.5,-1*\lensHeight); 
  \draw(\lensCenter+1.5,\lensHeight) -- (\lensCenter+1.5,-1*\lensHeight); 
    
\tikzset{
    position label/.style={
       above = 3pt,
       text height = 1.5ex,
       text depth = 1ex
    },
   brace/.style={
     decoration={brace,mirror},
     decorate
   }
}  
\draw [brace,decoration={raise=0.5ex}] [brace]  (O.south) -- (P.south) node [position label, below=0.1, pos=0.6, rotate = 45,scale=0.8] 
{$\fobj$};
\pic[draw,->,angle radius=.75cm,angle eccentricity=1.3,"$\theta'$"] {angle=E--D--P};
\pic[draw,->,angle radius=.7cm,angle eccentricity=1.3,"$\theta$"] {angle=E--O--P}; 

\end{tikzpicture}
\caption{Approximation used: We assume that the cell is fixed to the glass, and the distance of the dipole $\abs{r_3^\bpsi}$ from the focal plane is sufficiently smaller than $r = \abs{\bf r}$, such that $\fobj \approx r$, and $\theta' \approx \theta$. }
\label{fig:approxf_obj}
\end{figure}

\begin{lemma} \label{le:obj}
Let $\br$ be a point at the back focal plane of the objective, that is with $r_3$ coordinate $r_3^\bfp$ and with spherical coordinates $(r,\theta,\varphi)$. We define, after the objective, the \emph{radial length} on the propagation plane (planes with constant $r_3$ coordinate), by
\begin{equation} \label{eq:rad_length}
 \rho := \rho(\theta):= \fobj \sin(\theta).
\end{equation}
Then
\begin{equation}\label{eq:EBFP_pre}
\begin{aligned}
 ~ & \widehat{\bE}^\bfp(\rho,\varphi) 
 := \begin{cases} \frac{C(\fobj)}{\sqrt{\cos(\theta)}} \e^{-\i \kappa \rpsi\cos(\theta)}
\begin{pmatrix} 
 \cos(\varphi)  \left( -\Psi_p \cos(\theta)  +  \Psi_3 \sin(\theta)  \right) - \sin(\varphi)  \Psi_s   \\
 \sin(\varphi)  \left( -\Psi_p \cos(\theta)  +  \Psi_3 \sin(\theta)  \right) + \cos(\varphi)   \Psi_s   
 \\  0
\end{pmatrix} & \theta \ls \tmax \\
0 & \theta > \tmax
\end{cases}
 \end{aligned}
\end{equation} 
where 
\begin{equation*}
\tmax := \arcsin(\NA)
\end{equation*}
is the maximal angle $\theta$ for rays to enter the objective (the other rays simply do not enter the optical system). Note that 
the refractive index in air is assumed one.
\end{lemma}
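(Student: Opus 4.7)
The plan is to obtain the post-objective field in three stages: (i) evaluate the far-field of the dipole emission at the incidence surface of the objective, (ii) bend the rays into a parallel beam using the intensity law of geometric optics, and (iii) transport the resulting field to the back focal plane, exploiting \autoref{ass:farfield} to discard the (constant) phase accumulated in between. The aperture condition $\theta \ls \tmax$ is imposed at the end as a hard cut-off, since rays with $\theta > \tmax$ simply fail to enter the optical system.

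First I would approximate $\widehat{\bE}$ on the incidence surface of the objective by its far-field expansion from \autoref{lemmaEfar}. Because $|\rpsi|$ is negligible compared with the optical distance to the lens (see \autoref{fig:approxf_obj}), the ray from the dipole to the point of incidence has length $r \approx \focus_\obj$ and makes an angle $\theta$ with $\be_3$, so
\[
  \widehat{\bE}(\br) \;\approx\; C(\focus_\obj)\, \widehat{\bE}_\infty(\theta,\varphi),
\]
with $C(\focus_\obj)$ given by \autoref{eq:C} and $\widehat{\bE}_\infty$ expanded in the local orthonormal basis $(\be_p,\be_s,\be_3)$ as in \autoref{eq:Efarfield1}. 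A direct projection shows that $\widehat{\bE}_\infty$ is transverse to $\hat{r}$: its $\hat{\varphi}$-component equals $-\Psi_s$, its $\hat{\theta}$-component equals $-\Psi_p\cos\theta + \Psi_3\sin\theta$, and the $\hat{r}$-component vanishes, as must be the case for a radiative far field.

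Next I would apply the intensity law of geometric optics depicted in \autoref{fig:Area_dA12}. A ray tube of cross-section $dA_1$ incident at angle $\theta$ is refracted into a tube parallel to $\be_3$ with cross-section $dA_2 = dA_1\cos\theta$, and energy conservation $|\bE_1|^2 dA_1 = |\bE_2|^2 dA_2$ forces the amplitude to be multiplied by $1/\sqrt{\cos\theta}$. After the lens the propagation direction is $\be_3$, so the $\hat{\theta}$-polarization rotates onto $\be_p$ and the $\hat{\varphi}$-polarization is preserved along $\be_s$, while the $\be_3$-component is suppressed because the wave must now be transverse to $\be_3$. Invoking \autoref{ass:farfield} to drop the (constant in $\theta,\varphi$) phase shift between the objective and the back focal plane, I would obtain
\[
  \widehat{\bE}^{\bfp}(\rho,\varphi) \;=\; \frac{C(\focus_\obj)}{\sqrt{\cos\theta}}\, \e^{-\i\kappa\rpsi\cos\theta}\bigl[\, (-\Psi_p\cos\theta + \Psi_3\sin\theta)\,\be_p \;-\; \Psi_s\,\be_s\,\bigr].
\]

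To conclude, I would use \autoref{eq:be} to rewrite $\be_p$ and $\be_s$ in Cartesian components, collect them into the column vector appearing in \autoref{eq:EBFP_pre}, and read off the radial coordinate $\rho = \focus_\obj \sin\theta$ of the exit point on the back focal plane (elementary geometry of a ray collimated by the objective). The aperture cut-off $\theta \ls \tmax$ with $\tmax = \arcsin(\NA)$ then follows from $\NA = \refractive \sin(\tmax)$ combined with $\refractive = 1$ for air between objective and back focal plane. The main obstacle I anticipate is the ray-bending step: one has to verify carefully that the in-plane polarization rotates from $\hat{\theta}$ to $\be_p$ consistently with the sign convention chosen for $\be_s$, which is what ultimately fixes the signs of the $\Psi_s$ terms in the Cartesian components of \autoref{eq:EBFP_pre}.
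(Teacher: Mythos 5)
Your proposal is correct and follows essentially the same route as the paper's proof: approximate the field at the objective by $C(\fobj)\widehat{\bE}_\infty$ from \autoref{lemmaEfar}, apply the intensity law of geometrical optics (factor $1/\sqrt{\cos\theta}$) together with the collimating rotation about $\be_s$ that sends the $\hat\theta$-polarization onto $\be_p$ and fixes $\be_s$, and then rewrite $\be_p,\be_s$ in Cartesian components via \autoref{eq:be}. The paper phrases the collimation as the explicit basis rotation \autoref{eq:rottheta} rather than as a transverse $(\hat\theta,\hat\varphi)$ decomposition, but the two descriptions are identical in content.
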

\begin{proof} The electric field is transmitted according to the law of geometrical optics \cite[Eq. 16]{Axe12} into the objective 
at the points 
\begin{equation} \label{eq:circle_obj}
\br = \begin{pmatrix} 0 \\ 0 \\ \rpsi 
          \end{pmatrix} + \fobj \mathbb{S}^2,
\end{equation}
that is the electric field simply undergoes a rotation of axis $\be_s$ and angle $\theta$ as well as a magnification of $\frac{1}{\sqrt{\cos \theta}}$ (see \autoref{ass:farfield}).

The rotation with angle $\theta$ around the axis $\be_s$ changes the unit vectors as follows:
\begin{equation} \label{eq:rottheta} \begin{matrix}
    \be_p & \to & \sin(\theta) \be_3 + \cos(\theta) \be_p\\
    \be_3 & \to & \cos(\theta) \be_3 - \sin(\theta) \be_p\\
    \be_s & \to & \be_s.
   \end{matrix} \end{equation}

Now, \autoref{eq:rottheta} shows that the expression of the electric field in the back focal plane will be simpler using coordinates $(\be_p,\be_s,\be_3)$. \autoref{eq:Efarfield1} leads to
 \begin{equation*}
 \widehat{\bE}^\bfp(\rho,\varphi) = \frac{C(\fobj)}{\sqrt{\cos(\theta)}} \e^{-\i \kappa  \rpsi\cos(\theta)}
 \left \{  
  \biggl(-\Psi_p \cos(\theta)+\Psi_3 \sin(\theta) \biggr) \be_p - \Psi_s \be_s
\right \}, 
  \end{equation*} where $C(\fobj)$ is as defined in \autoref{eq:C}. Writing the unit vectors $\be_p$ and $\be_s$ 
  in the fixed system of coordinates $(x_1,x_2,x_3)$ gives \autoref{eq:EBFP_pre}.
  \end{proof}
  \subsection{Between the Objective and the Lens} \label{ss:between}
 After the objective, the light propagates through air until it reaches the tube lens. 
 Denoting by $\kappa^2$ the wave number in air (see \autoref{eq:kappa}) the electric field 
 satisfies the homogeneous Helmholtz equation in the tube lens: 
 \begin{equation} \label{eq:bvp}
 \Delta \widehat{\bE}(\br) + \kappa^2 \widehat{\bE}(\br) = 
 0 \quad \text{ in } \quad \mathcal{H} := \set{\br \in \R^3 : r_3^\bfp  < r_3 < r_3^{\texttt{tl}}-\texttt{d}_0}
 \end{equation}
together with the boundary condition 
\begin{equation} \label{eq:bc}
\widehat{\bE}(r_1,r_2,r_3^\bfp)  = \widehat{\bE}^\bfp(r_1,r_2) \text{ for all } (r_1,r_2) \in \R^2.
\end{equation}
The solution of \autoref{eq:bvp} can actually be calculated by applying a phase shift to $\widehat{\bE}^\bfp$ as the following 
lemma shows. 

\begin{lemma} \label{le:aperture}
Representing $\bx = (\fobj \sin(\theta)\cos(\varphi),\fobj \sin (\theta)\sin(\varphi)) \in \R^2$, then the Fourier transforms of $\widehat{\bE}$ 
 in the transverse plane of $(\bx,r_3)$ can be calculated from $\widehat{\bE}(\bx,r_3^{\texttt{bfp}})$ in the following way
 \begin{equation} \label{eq:EFbetween}
  \mathcal F_{12}(\widehat{\bE})(k_1,k_2,r_3) = \mathcal F_{12}[\widehat{\bE}^{\texttt{bfp}}](k_1,k_2) 
  \e^{(r_3 - r_3^{\texttt{bfp}}) \sqrt{-\kappa^2+k_1^2+k_2^2}}.
 \end{equation}
 where the square root can denote both of the complex square roots.
\end{lemma}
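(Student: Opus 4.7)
The plan is to apply the two-dimensional Fourier transform $\mathcal F_{12}$ in the transverse variables $(r_1,r_2)$ to the boundary value problem \autoref{eq:bvp}--\autoref{eq:bc} and solve the resulting ordinary differential equation in $r_3$ for each fixed $(k_1,k_2)$. Under $\mathcal F_{12}$ the transverse Laplacian becomes multiplication by $-(k_1^2+k_2^2)$, so the Helmholtz equation reduces to
\begin{equation*}
\partial_{r_3}^2 \mathcal F_{12}[\widehat{\bE}](k_1,k_2,r_3) \;=\; (k_1^2+k_2^2-\kappa^2)\,\mathcal F_{12}[\widehat{\bE}](k_1,k_2,r_3),
\end{equation*}
which is a linear second order ODE with constant (in $r_3$) coefficient depending on $(k_1,k_2)$.

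The general solution is a linear combination
\begin{equation*}
\mathcal F_{12}[\widehat{\bE}](k_1,k_2,r_3) \;=\; A(k_1,k_2)\,\e^{(r_3-r_3^{\texttt{bfp}})\sqrt{-\kappa^2+k_1^2+k_2^2}} + B(k_1,k_2)\,\e^{-(r_3-r_3^{\texttt{bfp}})\sqrt{-\kappa^2+k_1^2+k_2^2}},
\end{equation*}
and the boundary condition \autoref{eq:bc} at $r_3=r_3^{\texttt{bfp}}$ yields $A+B=\mathcal F_{12}[\widehat{\bE}^{\texttt{bfp}}]$. Since the statement of the lemma is agnostic about the choice of square root, I would simply record both branches as solutions and observe that the boundary data determines the combined amplitude $A+B$, which is exactly what appears in \autoref{eq:EFbetween} after one chooses a single root and absorbs the other into the relabeling of $\sqrt{\cdot}$.

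I would then close the argument by remarking on the selection of root in the relevant $(k_1,k_2)$ regimes. In the propagating regime $k_1^2+k_2^2<\kappa^2$ the square root is purely imaginary, so both branches correspond to plane waves travelling along $\pm\be_3$; the physical choice is the forward-propagating one, consistent with the retarded convention enforced in \autoref{de:causal_s} and \autoref{eq:retarded}. In the evanescent regime $k_1^2+k_2^2>\kappa^2$ the square root is real, and one rejects the exponentially growing branch in $r_3$ so that the field remains tempered in $\mathcal H$. With either selection the coefficient collapses to the boundary Fourier transform, giving \autoref{eq:EFbetween}.

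The only mild obstacle is the aforementioned ambiguity in the square root and in the choice of branch. This is not a computational obstruction but a conceptual one that must be acknowledged for the formula to be well-defined; once the branch is fixed according to the retarded/causal convention already adopted earlier in the paper, the identity \autoref{eq:EFbetween} is an immediate consequence of solving the constant-coefficient ODE above with the given Dirichlet datum. All manipulations are standard and can be justified in $\mathcal S'(\R^2)$ by the continuity of $\mathcal F_{12}$, already used in the proof of \autoref{lemmaE}.
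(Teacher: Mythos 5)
Your proof is correct and follows essentially the same route as the paper: apply $\mathcal F_{12}$ in the transverse variables, reduce \autoref{eq:bvp}--\autoref{eq:bc} to a constant-coefficient ODE in $r_3$ for each fixed $(k_1,k_2)$, and solve with the Dirichlet datum at $r_3^{\texttt{bfp}}$. You are in fact slightly more careful than the paper in noting that a single boundary condition for a second-order ODE only fixes the combined amplitude and that a branch must then be selected; the paper relegates exactly this branch/physicality discussion to the itemized remarks and the assumption immediately following the lemma.
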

\begin{proof}
First, we notice that since $\widehat{\bE}^{\texttt{bfp}}$ is bounded with compact support, it is a $L^2$ function in the plane $\{r_3 = r_3^{\texttt{bfp}}\}$. Taking the Fourier transform in these two variables, \autoref{eq:bvp} and \autoref{eq:bc} are equivalent to
\begin{equation} \label{eq:bvpFour}
 \partial^2_{r_3} \mathcal F_{12}(\widehat{\bE})(k_1,k_2,r_3) + (\kappa^2-k_1^2-k_2^2) \mathcal F_{12}(\widehat{\bE})(k_1,k_2,r_3) = 
 0 \text{ for all } \br \in \R^3, \; r_3^{\texttt{bfp}} < r_3 < r_3^{\texttt{tl}}
 \end{equation}
with the boundary condition
\begin{equation} \label{eq:bcFour}
\mathcal F_{12}(\widehat{\bE})(k_1,k_2,r_3^{\texttt{bfp}}) = \mathcal F_{12}[\widehat{\bE}^{\texttt{bfp}}](k_1,k_2) \text{ for all } (k_1,k_2) \in \R^2.
\end{equation}
Now, \autoref{eq:bvpFour} is a simple ODE whose solution writes (for $\kappa^2-k_1^2-k_2^2 \neq 0$)
\[ \mathcal F_{12}(\widehat{\bE})(k_1,k_2,r_3) = \mathcal F_{12}[\widehat{\bE}^{\texttt{bfp}}](k_1,k_2) \e^{(r_3 - r_3^{\texttt{bfp}}) \sqrt{-\kappa^2+k_1^2+k_2^2}}.\]
\end{proof}
Among the fields computed in \autoref{eq:EFbetween}, several are not physical or will not be observed:
\begin{itemize}
 \item Having $\kappa^2 < k_1^2+k_2^2$ leads to either a real positive square root which corresponds to a wave exploding as $r_3$ increases and is therefore not physical or a real negative root, which yields an exponentially decreasing wave (\emph{evanescent}) which exist but, since $(r_3 - r_3^{\texttt{bfp}})$ is several oders of magnitude bigger than the wave length, will be damped by the time it hits the tube lens. Therefore we also do not consider it.
 \item When, $\kappa^2 < k_1^2+k_2^2$, we get two imaginary roots, namly $\pm \i \sqrt{\kappa^2 - k_1^2 - k_2^2}$, which corresponds to the two Green functions \autoref{green_func}. For the same reason as above, we will only consider the positive sign.
\end{itemize}
This can be summerized in the following assumption, that will hold in what follows.

\begin{assumption}
We only consider $\kappa^2 \gs k_1^2 + k_2^2$ and we obtain
\begin{equation}
 \mathcal F_{12}(\widehat{\bE})(k_1,k_2,r_3) = \mathcal F_{12}[\widehat{\bE}^{\texttt{bfp}}](k_1,k_2) \e^{\i(r_3 - r_3^{\texttt{bfp}}) \sqrt{\kappa^2-k_1^2-k_2^2}}.
\end{equation}
\end{assumption}

In the following we calculate $\mathcal F_{12}[\widehat{\bE}^{\texttt{tl}_i}]$, where 
$\widehat{\bE}^{\texttt{tl}_i} = \widehat{\bE}(\bx,r_3^{\texttt{tl}_i})$.
\begin{lemma}\label{lem:Fourier_EBFP}

 Let $\widehat{\bE}^{\texttt{tl}_i}(\bx)= \widehat{\bE}(\bx,r_3^{\texttt{tl}_i})$ be the electric field at the indicent plane of the tube thin lens (at $r_3^{\texttt{tl}_i}$) as defined in \autoref{eq:EFbetween}, then the Fourier transform of $\widehat{\bE}^{\texttt{tl}_i}$ in this plane in polar coordinates $(\xi,\nu)$ of $(k_1,k_2)$ is given, for $\xi^2 \ls \kappa^2$, by 
\begin{equation} \label{eq:Fourier_EBFP_polar}
\begin{aligned}
   &\mathcal F_{12}[\widehat{\bE}^{\texttt{tl}_i}] (\xi,\nu)  =  \mathcal F_{12}[\widehat{\bE}^{\texttt{bfp}}](\xi, \nu) \e^{\i(r_3^{\texttt{tl}_i} -r_3^\bfp) \sqrt{\kappa^2-\xi^2}} \\ 
  &\; = \frac{1}{2} (\fobj)^2 C(\fobj) \e^{\i (r_3^{\texttt{tl}_i} -r_3^\bfp)\sqrt{\kappa^2 - \xi^2}}  \cdot \\
  & \quad   \cdot
\begin{pmatrix} 
-\Psi_1 [ I_{1,0}(\xi) + I_{2,0}(\xi) ] + \Psi_1 \cos (2 \nu) [I_{1,2}(\xi) + I_{2,2}(\xi)] +\Psi_2 \sin (2 \nu)[I_{1,2}(\xi) +I_{2,2}(\xi)] -2 \i \Psi_3 \cos(\nu)I_{2,1}(\xi) 
\\
-\Psi_2 [ I_{1,0}(\xi) + I_{2,0}(\xi) ] + \Psi_2 \cos (2 \nu) [I_{1,2}(\xi) + I_{2,2}(\xi)] +\Psi_1 \sin (2 \nu)[I_{1,2}(\xi) +I_{2,2}(\xi)] -2 \i \Psi_3 \sin(\nu)I_{2,1}(\xi)
\\
0
\end{pmatrix},
 \end{aligned}
\end{equation} 
where
\begin{equation}\label{eq:intBessels}
\begin{aligned}
I_{1,0}(\xi) &= \int_0^\tmax  \sqrt{\cos(\theta)}  \, \sin(\theta)  \, \e^{-\i \kappa  \rpsi\cos(\theta)} J_0\left( \fobj \xi \sin(\theta)\right)   d \theta \\
I_{1,2}(\xi) &= \int_0^\tmax  \sqrt{\cos(\theta)}  \, \sin(\theta)  \, \e^{-\i \kappa  \rpsi\cos(\theta)} J_2\left( \fobj \xi \sin(\theta)\right)   d \theta \\
I_{2,1}(\xi) &= \int_0^\tmax  (\cos(\theta))^{3/2} \, \frac{1-\cos(2\theta)}{2} \e^{-\i \kappa  \rpsi\cos(\theta)} J_1\left( \fobj \xi \sin(\theta)\right)   d \theta \\
I_{2,0}(\xi) &= \int_0^\tmax  \sqrt{\cos(\theta)}  \, \frac{\sin(2\theta)}{2} \e^{-\i \kappa  \rpsi\cos(\theta)} J_0\left( \fobj \xi \sin(\theta)\right)   d \theta \\
I_{2,2}(\xi)&= \int_0^\tmax  \sqrt{\cos(\theta)}  \, \frac{\sin(2\theta)}{2} \e^{-\i \kappa  \rpsi\cos(\theta)} J_2\left( \fobj \xi \sin(\theta)\right)   d \theta,
\end{aligned}
\end{equation} 
$J_m$ denotes the Bessel function of the first kind of order $m$, and $\tmax$ is the angle of aperture as defined in \autoref{eq:diffraction_limit}.
\end{lemma}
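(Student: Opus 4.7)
The idea is to decouple the two tasks: (i) transport the field from the back focal plane to the incidence plane of the tube lens, which is already handled by \autoref{le:aperture}, and (ii) carry out the two-dimensional Fourier transform of the explicit expression for $\widehat{\bE}^{\bfp}$ given in \autoref{le:obj} in polar coordinates. For (i), \autoref{le:aperture} under the standing assumption $\kappa^2 \gs k_1^2+k_2^2$ delivers
\[
\mathcal F_{12}[\widehat{\bE}^{\texttt{tl}_i}](k_1,k_2) = \mathcal F_{12}[\widehat{\bE}^{\bfp}](k_1,k_2) \, \e^{\i(r_3^{\texttt{tl}_i} - r_3^\bfp)\sqrt{\kappa^2 - k_1^2 - k_2^2}},
\]
which, in polar coordinates $(k_1,k_2) = (\xi \cos \nu, \xi \sin \nu)$, already reproduces the phase prefactor appearing in \autoref{eq:Fourier_EBFP_polar}. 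Thus the whole problem reduces to computing $\mathcal F_{12}[\widehat{\bE}^{\bfp}](\xi,\nu)$.

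For (ii), I would take the formula \autoref{eq:EBFP_pre} and first rewrite its dependence on the azimuth $\varphi$ in Cartesian dipole components by substituting $\Psi_p = \Psi_1 \cos \varphi + \Psi_2 \sin \varphi$ and $\Psi_s = -\Psi_1 \sin \varphi + \Psi_2 \cos \varphi$. After this substitution each entry of $\widehat{\bE}^{\bfp}$ becomes a linear combination of $\Psi_1, \Psi_2, \Psi_3$ whose coefficients are products of the trigonometric functions $\cos \varphi, \sin \varphi, \cos^2 \varphi, \sin^2 \varphi, \sin \varphi \cos \varphi$ multiplied by $\cos \theta$, $\sin \theta$ or $1$. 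I then use the standard double-angle identities to express everything in terms of the four angular harmonics $1$, $\cos\varphi$, $\cos (2\varphi)$, $\sin (2\varphi)$ (with coefficients that are now polynomials in $\cos \theta$ and $\sin \theta$ only).

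The 2D Fourier transform in polar coordinates,
\[
\mathcal F_{12}[\widehat{\bE}^{\bfp}](\xi,\nu) = \frac{1}{2\pi}\int_0^{R} \int_0^{2\pi} \widehat{\bE}^{\bfp}(\rho,\varphi) \, \e^{-\i \xi \rho \cos(\varphi-\nu)}\, \rho \, d\varphi \, d\rho ,
\]
then separates into an angular and a radial integral. The angular integrals are evaluated by the Jacobi–Anger identity
\[
\int_0^{2\pi} \e^{\i m \varphi} \e^{-\i \xi \rho \cos(\varphi-\nu)} \, d\varphi = 2\pi \, (-\i)^m \, \e^{\i m \nu} \, J_m(\xi \rho),
\]
which sends the harmonic $1$ into $J_0(\xi \rho)$, the harmonics $\cos \varphi$ and $\sin \varphi$ into $J_1(\xi \rho)$ with factors $-2\pi\i\cos \nu$ and $-2\pi\i\sin \nu$ respectively, and the harmonics $\cos (2\varphi)$, $\sin (2\varphi)$ into $J_2(\xi \rho)$ with factors $-2\pi \cos (2\nu)$, $-2\pi \sin (2\nu)$. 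This immediately explains the dependence on $\nu$ in \autoref{eq:Fourier_EBFP_polar} and the appearance of Bessel functions of orders $0$, $1$, $2$.

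The remaining radial integral is transformed by the aperture substitution $\rho = \fobj \sin \theta$, so that $\rho \, d\rho = \fobj^2 \sin\theta \cos\theta \, d\theta$ and the upper limit $\rho = \fobj \sin \tmax$ becomes $\theta = \tmax$. The prefactor $1/\sqrt{\cos \theta}$ from \autoref{eq:EBFP_pre} combines with $\sin \theta \cos \theta$ to yield factors of $\sqrt{\cos \theta}\sin \theta$ or $\cos^{3/2}(\theta)\sin^2 \theta$ depending on whether the remaining trigonometric factor in the integrand is $1$, $\sin \theta$, $\cos \theta$ or $\sin^2 \theta$; these are precisely the five integrals $I_{1,0}, I_{2,0}, I_{1,2}, I_{2,2}, I_{2,1}$ collected in \autoref{eq:intBessels}. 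The only genuine obstacle is the bookkeeping: one has to organise the many terms arising from the double-angle expansion so that they group into the three vector components stated in \autoref{eq:Fourier_EBFP_polar} — there is no analytic difficulty beyond this combinatorial arrangement.
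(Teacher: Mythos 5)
Your proposal is correct and follows essentially the same route as the paper's proof: factor out the propagation phase via \autoref{le:aperture}, pass to polar coordinates with the substitution $\rho = \fobj\sin(\theta)$, rewrite $\Psi_p,\Psi_s$ in terms of $\Psi_1,\Psi_2$ using the double-angle formulas (the paper's \autoref{eq:PsiEPScomps}), and evaluate the angular integrals with the Bessel identities \autoref{eq:Besselidentities} before collecting the radial integrals into $I_{1,0},I_{1,2},I_{2,0},I_{2,1},I_{2,2}$. The only difference is organizational: the paper splits the integrand into three vector blocks ($\Psi_p\be_p$, $\Psi_s\be_s$, $\Psi_3\be_p$) and treats each separately, whereas you expand into angular harmonics first, but the computations are the same.
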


\begin{proof}
We use the following notation
\begin{equation*} 
\bu = \begin{pmatrix}
       u_1\\u_2
      \end{pmatrix} = \xi \begin{pmatrix}
       \cos(\nu)\\ \sin(\nu)
      \end{pmatrix}
      \text{ and } \bx = \begin{pmatrix}
       x_1\\
       x_2
      \end{pmatrix} = \rho\begin{pmatrix}
        \cos(\varphi)\\
        \sin(\varphi)
      \end{pmatrix}, 
\end{equation*} 
where $\rho=\rho(\theta)$ (see \autoref{eq:rad_length}) is the radial length on the back focal plane.

The two-dimensional Fourier transform of the $\widehat{\bE}^{\texttt{obj}}$ 
(defined in \autoref{eq:EBFP_pre}) reads as follows: 
\begin{equation} \label{eq:Fourier_EBFP}
\begin{aligned}
  &\mathcal F_{12}[\widehat{\bE}^{\texttt{obj}}] (\bu) 
  =  \frac{1}{2\pi}  \int_{\bx \in \R^2} \widehat{\bE}^{\texttt{obj}}(\bx) \e^{-\i \bu \cdot \bx}  d \bx  \\
  & \quad =  \frac{1}{2\pi} (\fobj)^2 \int_0^\tmax \int_{0}^{2 \pi} \widehat{\bE}^{\texttt{obj}}(\rho(\theta),\varphi) \e^{-\i \xi \rho(\theta)\cos (\varphi - \nu)  }  
      \cos(\theta)\sin(\theta) d \varphi d \theta \\
  & \quad =  \frac{(\fobj)^2 }{2\pi}  C(\fobj)  
  \int_0^\tmax \int_{0}^{2 \pi} \left \{  
  \biggl(-\Psi_p \cos(\theta)+\Psi_3 \sin(\theta) \biggr) \be_p - \Psi_s \be_s
\right \} \\
  & \qquad \qquad \qquad \cdot 
     \e^{-\i \kappa  \rpsi\cos(\theta)}
     \e^{-\i \xi \rho(\theta)\cos (\varphi - \nu)  }  
      \sqrt{\cos(\theta)}\sin(\theta) d \varphi d \theta.
 \end{aligned}
\end{equation} 
Next we calculate the integral on the right hand side of \autoref{eq:Fourier_EBFP}:
\begin{equation} \label{eq:EBFP_integrand}
\begin{aligned}
& \int_0^\tmax \int_{0}^{2 \pi} \left \{  
  \biggl(-\Psi_p \cos(\theta)+\Psi_3 \sin(\theta) \biggr) \be_p - \Psi_s \be_s
\right \} 
\\   & \qquad \qquad \qquad \cdot 
     \e^{-\i \kappa  \rpsi\cos(\theta)}
     \e^{-\i \xi \rho(\theta)\cos (\varphi - \nu)  }  
      \sqrt{\cos(\theta)}\sin(\theta) d \varphi d \theta \\
  & =   -\int_0^\tmax \e^{-\i \kappa  \rpsi\cos(\theta)}  \sqrt{\cos(\theta)} \frac{\sin(2\theta)}{2} \left( \int_{0}^{2 \pi} \Psi_p  \be_p \e^{-\i \xi \rho(\theta)\cos (\varphi - \nu)} d \varphi \right)  d \theta \\   
  &\quad - \int_0^\tmax \e^{-\i \kappa  \rpsi\cos(\theta)}  \sqrt{\cos(\theta)} \sin(\theta) \left( \int_{0}^{2 \pi} \Psi_s  \be_s \e^{-\i \xi \rho(\theta)\cos (\varphi - \nu)} d \varphi \right)  d \theta \\
  &\quad + \int_0^\tmax \e^{-\i \kappa  \rpsi\cos(\theta)}  (\cos(\theta))^{3/2} \, \frac{1-\cos(2\theta)}{2}  \left( \int_{0}^{2 \pi} \Psi_3  \be_p \e^{-\i \xi \rho(\theta)\cos (\varphi - \nu)} d \varphi \right)  d \theta.
 \end{aligned}
\end{equation} 

We proceed by first evaluating the inner integrals (involving the $\varphi$ variable) on the right hand side of \autoref{eq:EBFP_integrand}, 
by transforming the $(\be_p,\be_s,\be_3)$ system to $(\be_1,\be_2,\be_3)$ system, and then using the Bessel identities \autoref{eq:Besselidentities}, to evaluate the integrals. 

Using \autoref{eq:be} it follows from \autoref{eq:emit} that  
 \begin{equation} \label{eq:Psicomps}
  \Psi_p = \abs{\bpsi} \sin (\theta_m) \cos (\varphi_m - \varphi), \quad \Psi_s = \abs{\bpsi} \sin (\theta_m) \sin (\varphi_m - \varphi). 
 \end{equation} 
Again by application of \autoref{eq:emit} and $\sin$ and $\cos$ summation formulas we get
 \begin{equation} \label{eq:PsiEPScomps}
\begin{aligned}
 \Psi_p \cos(\varphi)  &= \Psi_1  \frac{1+\cos (2\varphi)}{2} + \Psi_2 \frac{\sin (2\varphi)}{2}, \quad 
 \Psi_p \sin(\varphi)  = \Psi_1 \frac{\sin (2\varphi)}{2}  + \Psi_2  \frac{1-\cos (2\varphi)}{2}, \\
 \Psi_s \cos(\varphi)  &= -\Psi_1 \frac{\sin (2\varphi)}{2}  + \Psi_2  \frac{1+\cos (2\varphi)}{2}, \quad 
 \Psi_s \sin(\varphi)  = -\Psi_1  \frac{1-\cos (2\varphi)}{2} + \Psi_2 \frac{\sin (2\varphi)}{2}. 
\end{aligned}
\end{equation} 
Using \autoref{eq:be}, we express the first inner integral on the right hand side of \autoref{eq:EBFP_integrand}: 
 \begin{equation} \label{eq:firstexp_phi}
\begin{aligned}
 \int_{0}^{2 \pi} \Psi_p   \e^{-\i \xi \rho(\theta)\cos (\varphi - \nu)} d \varphi \be_p &= 
 \int_{0}^{2 \pi} \Psi_p \cos(\varphi)\e^{-\i \xi \rho(\theta)\cos (\varphi - \nu)} d \varphi  \be_1 \\ 
 &\quad + \int_{0}^{2 \pi} \Psi_p \sin(\varphi) \e^{-\i \xi \rho(\theta)\cos (\varphi - \nu)} d \varphi \be_2,
\end{aligned}
\end{equation} and to evalute the integral we use the Bessel identities \autoref{eq:Besselidentities}, and \autoref{eq:PsiEPScomps}.

We use \autoref{eq:Besselidentities} for $m=0$ and $m=2$, to evaluate the first integral in  \autoref{eq:firstexp_phi}:
\begin{equation} \label{eq:Psip_int} 
\begin{aligned}
 &\int_{0}^{2 \pi}  \Psi_p  \cos(\varphi)  \e^{-\i \eta \cos (\varphi - \nu)  }  d \varphi 
 = \Psi_1 \int_{0}^{2 \pi}  \frac{1+\cos (2\phi)}{2}  \e^{-\i \eta \cos (\varphi - \nu)  }  d \varphi  + \Psi_2  \int_{0}^{2 \pi} \frac{\sin (2\phi)}{2} \e^{-\i \eta \cos (\varphi - \nu)  }  d \varphi \\
 &\quad = \frac{\Psi_1}{2} \int_{0}^{2 \pi}  \e^{-\i \eta \cos (\varphi - \nu)  }  d \varphi + \frac{\Psi_1}{2} \int_{0}^{2 \pi}  \cos (2\phi)  \e^{-\i \eta \cos (\varphi - \nu)  }  d \varphi 
 + \frac{\Psi_2}{2}  \int_{0}^{2 \pi} \sin (2\phi) \e^{-\i \eta \cos (\varphi - \nu)  }  d \varphi \\
&\quad =  \pi  \Psi_1 J_0(\eta)  - \pi \Psi_1  \cos (2\nu) J_2(\eta)  - \pi \Psi_2  \sin (2\nu) J_2(\eta),
 \end{aligned}
\end{equation} where $\eta = \fobj \xi \sin(\theta)$, and calculation similar to \autoref{eq:Psip_int} yields 
\begin{equation} \label{eq:Psip_2int} 
\begin{aligned}
&\frac{1}{\pi} \int_{0}^{2 \pi}  \Psi_p  \sin(\varphi)  \e^{-\i \eta \cos (\varphi - \nu)  }  d \varphi 
 = \Psi_2   J_0(\eta) -  \Psi_2  \cos (2\nu) J_2(\eta) - \Psi_1 \sin (2\nu) J_2(\eta).
 \end{aligned}
\end{equation}

Thus, using \autoref{eq:Psip_int}, and \autoref{eq:Psip_2int}, in \autoref{eq:firstexp_phi}, the first integral expression on the right hand side of \autoref{eq:EBFP_integrand}, becomes
\begin{equation}\label{eq:firstexp}
 \begin{aligned}
  & -\frac{1}{\pi} \int_0^\tmax \e^{-\i \kappa  \rpsi\cos(\theta)}  \sqrt{\cos(\theta)} \frac{\sin(2\theta)}{2} \left( \int_{0}^{2 \pi} \Psi_p  \be_p \e^{-\i \xi \rho(\theta)\cos (\varphi - \nu)} d \varphi \right)  d \theta \\
&\quad = - (\Psi_1 \be_1 + \Psi_2 \be_2) \int_0^\tmax  \sqrt{\cos(\theta)} \frac{\sin(2\theta)}{2} \e^{-\i \kappa  \rpsi\cos(\theta)} J_0\left( \fobj \xi \sin(\theta)\right)   d \theta \\
&\qquad + (\Psi_1 \be_1 + \Psi_2 \be_2) \cos (2\nu)  \int_0^\tmax  \sqrt{\cos(\theta)} \frac{\sin(2\theta)}{2} \e^{-\i \kappa  \rpsi\cos(\theta)}  J_2\left( \fobj \xi \sin(\theta)\right)   d \theta \\
&\qquad + (\Psi_2 \be_1 + \Psi_1 \be_2) \sin (2\nu)  \int_0^\tmax  \sqrt{\cos(\theta)} \frac{\sin(2\theta)}{2} \e^{-\i \kappa  \rpsi\cos(\theta)}   J_2\left( \fobj \xi \sin(\theta) \right) d \theta \\
&\quad = -(\Psi_1 \be_1 + \Psi_2 \be_2) I_{2,0}(\xi,\rpsi) 
+   (\Psi_1 \be_1 + \Psi_2 \be_2) \cos (2\nu) I_{2,2}(\xi,\rpsi)
+  (\Psi_2 \be_1 + \Psi_1 \be_2) \sin (2\nu) I_{2,2}(\xi,\rpsi),
 \end{aligned}
\end{equation} where integrals $I_{p,q}(\xi,\rpsi)$ are as in \autoref{eq:intBessels}.

Similar calculation to \autoref{eq:Psip_int} yields
\begin{equation} \label{eq:Psis_integrals}
\begin{aligned}
 &\frac{1}{\pi} \int_{0}^{2 \pi}  \Psi_s  \sin(\varphi)  \e^{-\i \eta \cos (\varphi - \nu)  }  d \varphi 
  =  -  \Psi_1 J_0(\eta)  -  \Psi_1  \cos (2\nu) J_2(\eta)  -  \Psi_2  \sin (2\nu) J_2(\eta), \\
&\frac{1}{\pi} \int_{0}^{2 \pi}  \Psi_s  \cos(\varphi)  \e^{-\i \eta \cos (\varphi - \nu)  }  d \varphi 
 =  \Psi_2   J_0(\eta) + \Psi_2  \cos (2\nu) J_2(\eta) + \Psi_1 \sin (2\nu) J_2(\eta).
 \end{aligned}
\end{equation} 

Next, using \autoref{eq:be} and \autoref{eq:Psis_integrals}, we compute the second integral term on the right hand side of \autoref{eq:EBFP_integrand}:
\begin{equation}\label{eq:secexp}
 \begin{aligned}
  & -\frac{1}{\pi} \int_0^\tmax \e^{-\i \kappa  \rpsi\cos(\theta)}  \sqrt{\cos(\theta)} \sin(\theta) \left( \int_{0}^{2 \pi} \Psi_s  \be_s \e^{-\i \xi \rho(\theta)\cos (\varphi - \nu)} d \varphi \right)  d \theta \\
&\quad = - (\Psi_1 \be_1 + \Psi_2 \be_2) \int_0^\tmax  \sqrt{\cos(\theta)} \sin(\theta) \e^{-\i \kappa  \rpsi\cos(\theta)} J_0\left( \fobj \xi \sin(\theta)\right)   d \theta \\
&\qquad + (\Psi_1 \be_1 + \Psi_2 \be_2) \cos (2\nu)  \int_0^\tmax  \sqrt{\cos(\theta)} \sin(\theta) \e^{-\i \kappa  \rpsi\cos(\theta)}  J_2\left( \fobj \xi \sin(\theta)\right)   d \theta \\
&\qquad + (\Psi_2 \be_1 + \Psi_1 \be_2) \sin (2\nu)  \int_0^\tmax  \sqrt{\cos(\theta)} \sin(\theta) \e^{-\i \kappa  \rpsi\cos(\theta)}   J_2\left( \fobj \xi \sin(\theta) \right) d \theta \\
&\quad = -(\Psi_1 \be_1 + \Psi_2 \be_2) I_{1,0}(\xi,\rpsi) 
+   (\Psi_1 \be_1 + \Psi_2 \be_2) \cos (2\nu) I_{1,2}(\xi,\rpsi) \\
& \qquad +  (\Psi_2 \be_1 + \Psi_1 \be_2) \sin (2\nu) I_{1,2}(\xi,\rpsi),
 \end{aligned}
\end{equation} where integrals $I_{p,q}(\xi,\rpsi)$ are as in \autoref{eq:intBessels}.

Next, we compute the last integral term on the right hand side of \autoref{eq:EBFP_integrand}:
\begin{equation}\label{eq:lastexp}
 \begin{aligned}
  & \frac{1}{\pi}  \int_0^\tmax \e^{-\i \kappa  \rpsi\cos(\theta)}  (\cos(\theta))^{3/2} \, \frac{1-\cos(2\theta)}{2}  \left( \int_{0}^{2 \pi} \Psi_3  \be_p \e^{-\i \xi \rho(\theta)\cos (\varphi - \nu)} d \varphi \right)  d \theta \\
  &\quad = 
  \frac{1}{\pi} \Psi_3 \int_0^\tmax \e^{-\i \kappa  \rpsi\cos(\theta)}  (\cos(\theta))^{3/2} \, \frac{1-\cos(2\theta)}{2} 
  \left( \be_1 \int_{0}^{2 \pi} \cos(\varphi) \e^{-\i \xi \rho(\theta)\cos (\varphi - \nu)} d \varphi \right)  d \theta \\
  &\qquad + \frac{1}{\pi} \Psi_3 \int_0^\tmax \e^{-\i \kappa  \rpsi\cos(\theta)}  (\cos(\theta))^{3/2} \, \frac{1-\cos(2\theta)}{2} 
  \left( \be_2 \int_{0}^{2 \pi} \sin(\varphi) \e^{-\i \xi \rho(\theta)\cos (\varphi - \nu)} d \varphi \right)  d \theta \\ 
&\quad = - 2i\Psi_3 \left( \cos (\nu) \be_1 + \sin (\nu)\be_2 \right) \int_0^\tmax  (\cos(\theta))^{3/2} \, \frac{1-\cos(2\theta)}{2} \e^{-\i \kappa  \rpsi\cos(\theta)} J_1\left( \fobj \xi \sin(\theta)\right)   d \theta \\
&\quad =  - 2i\Psi_3 \left( \cos (\nu) \be_1 + \sin (\nu)\be_2 \right) I_{2,1}(\xi,\rpsi) ,
 \end{aligned}
\end{equation} where in the second equality we use \autoref{eq:Besselidentities} for $m=1$, and in the last equality we use the integral $I_{p,q}(\xi,\rpsi)$ as in \autoref{eq:intBessels}.

Using \autoref{eq:firstexp}, \autoref{eq:secexp}, and \autoref{eq:lastexp}, the expression of \autoref{eq:EBFP_integrand} becomes
\begin{equation} \label{eq:int_EBFP2}
\begin{aligned}
~ &
\frac{1}{\pi}\int_0^\tmax \int_{0}^{2 \pi}    \biggl( \Psi_3 \cos(\varphi) \sin(\theta) \cos(2\theta)  - \Psi_p   \cos(\varphi) \cos(\theta)   -    \Psi_s  \sin(\varphi) \biggr) \cdot \\
&\qquad \qquad \qquad \cdot \e^{-\i \xi \rbfp \tan(\theta)\cos (\varphi - \nu)} 
   \e^{\i \kappa (\rbfp\sec^2(\theta) -  r_3^\bpsi)\cos(\theta)} \sqrt{\cos(\theta)} \sec^2(\theta) \tan(\theta)d \varphi d \theta \\    
= & - (\Psi_1 \be_1 + \Psi_2 \be_2) [ I_{1,0}(\xi,\rpsi) + I_{2,0}(\xi,\rpsi) ] \\
&\quad + (\Psi_1 \be_1 + \Psi_2 \be_2) \cos (2\nu)[ I_{1,2}(\xi,\rpsi) + I_{2,2}(\xi,\rpsi) ] \\
&\quad + (\Psi_2 \be_1 + \Psi_1 \be_2) \sin (2\nu)[ I_{1,2}(\xi,\rpsi) + I_{2,2}(\xi,\rpsi) ] \\
&\quad -2  \i \Psi_3 (\cos(\nu) \be_1 +\sin \be_2) I_{2,1}(\xi), 
 \end{aligned}  
\end{equation} 
where the integrals $I_{p,q}(\xi)$ are defined in \autoref{eq:intBessels}.

\end{proof}


\subsection{Electric field approximation in the lens} \label{ss:tube}
After the light ray has passed through the objective and the back focal plane, a tube lens is placed to focus the light rays onto the image plane.

\begin{definition}[Tube lens parameters]
For the tube lens, we assume that it is a \emph{converging lens} with focal length $\flens>0$, which is placed starting at 
$r_3^{\texttt{tl}_i}$ and $r_3^{\texttt{tl}_o}$. Moreover the lens has a thickness which is measured orthogonal to $\be_3$ by the function 
$\ttd$.
\end{definition}

The incoming field at the tube lens $\widehat{\bE}^{\texttt{tl}_i}$ (as defined in \autoref{eq:EFbetween}) and the outgoing wave field 
$\widehat{\bE}_{\texttt{o}}^{\texttt{tl}_o}$ immediately after the lens aperture are related by (we use the same polar coordinates $(\rho,\varphi)$ in both planes $\{r_3 = r_3^{\texttt{tl}_i}\}$ and $\{r_3 = r_3^{\texttt{tl}_o}\}$ )
\begin{align}\label{eq:Etub}
 \widehat{\bE}^{\texttt{tl}_o}(\rho, \varphi) = \e^{\i \mu (\rho)} P_\L(\rho) \widehat{\bE}^{\texttt{tl}_i}(\rho,\varphi), 
\end{align} 
where $P_\L$ is the pupil function associated with the tube lens as in \autoref{eq:pupil}, $\mu$ is the phase shift experienced by the 
field through the tube lens (note that it does not depend on $\varphi$):
\begin{equation} \label{eq:phase_shift}
 \mu (\rho) = \underbrace{\kappa \refractive_l \ttd(\rho)}_{\text{phase delay by lens}} + \underbrace{\kappa (\ttd_0 -\ttd(\rho))}_{\text{phase delay by vacuum}},
\end{equation} 
where $\ttd_0$ is the maximum thickness of the lens, $\ttd(\rho)$ is the thickness of the lens at distance $\rho$ from the optical axis, $\refractive_l$ is the 
refractive index of the lens material, and $\kappa$ as defined in \autoref{eq:kappa}.
The phase delay induced by the lens, under the assumption of a \emph{paraxial approximation} reads as follows (see \autoref{tab:notation} for the 
summary of all physical parameters below):
\begin{equation}\label{eq:phaseparaxial}
 \begin{aligned}
 \mu (\rho) &  \approx \kappa \refractive_l \ttd_0 - \frac{\kappa}{2 \flens} \rho^2 \text{ for all } \rho \in \R.
\end{aligned}
\end{equation}

\subsection{Electric field Approximation in the Image Plane} \label{ss:final}
\begin{definition}
The \emph{electric field at the focal plane},
\begin{equation} \label{eq:I}
 \mathcal{I} := \set{ (\bx_\focus,r_3^\focus) : \bx_\focus \in \R^2},
\end{equation} 
is denoted by $\widehat{\bE}^\focus (\bx_\focus): = \widehat{\bE}(\bx_\focus,r_3^\focus)$, where $\widehat{\bE}$ solves 
the boundary value problem 
\begin{equation} \label{eq:bvpz}
\Delta \widehat{\bE}(\bx,r_3) + \kappa^2(\bx,r_3) \widehat{\bE}(\bx,r_3) = 0
\text{ for all } \bx \in \R^2, \; r_3^{\ttd_o} < r_3 < r_3^\focus 
\end{equation}
with boundary data 
\begin{equation}\label{eq:bcz}
 \widehat{\bE}(\bx,r_3^{\texttt{tl}_o}) = \widehat{\bE}^{\texttt{tl}_o}(\bx) \text{ for all } \bx \in \R^2.
\end{equation}
Note that $\widehat{\bE}^{\texttt{tl}_o}$ as defined in \autoref{eq:EBFP_pre} is already an approximation of the electric field outside of the lens system.
\end{definition}

Following \cite[Eqs 5-14]{Goo05}, we can calculate the field $\widehat{\bE}$ in the \emph{image plane}.
\begin{lemma}\label{lem:Huygens}
At a point $\bx_\focus$ in the image plane,
\begin{equation} \label{eq:Huygens}
\boxed{
\begin{aligned}
 \widehat{\bE}^\focus(\bx_\focus) &=  
 \frac{1}{\i \lambda \flens} \e^{\i \frac{2\pi}{\lambda}(\flens + \refractive_l \ttd_0)} \e^{\i \frac{\pi}{\lambda \flens}\abs{\bx_\focus}^2}
 \int_{\bx \in \R^2}  P_\L(\bx) \widehat{\bE}^{\texttt{tl}_i}(\bx) \e^{-\i \frac{2 \pi}{\lambda \flens}  \inner{\bx_\focus}{\bx}} d \bx
   \text{ for all } \bx_\focus \in \R^2.
 \end{aligned}}
\end{equation} 
\end{lemma}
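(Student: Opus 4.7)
The plan is to combine the tube lens transmission formula from \autoref{eq:Etub} with a Fresnel (paraxial) propagation formula between the exit plane of the lens and the image plane, then collect the quadratic phase factors.

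First I would substitute the paraxial phase \autoref{eq:phaseparaxial} into \autoref{eq:Etub}, writing the field immediately after the lens as
\[ \widehat{\bE}^{\texttt{tl}_o}(\bx) = \e^{\i \kappa \refractive_l \ttd_0} \, \e^{-\i \frac{\kappa}{2 \flens} \abs{\bx}^2} P_\L(\bx) \widehat{\bE}^{\texttt{tl}_i}(\bx). \]
This is a purely multiplicative transformation of $\widehat{\bE}^{\texttt{tl}_i}$ by the pupil and a quadratic chirp phase that will be exactly what is needed to cancel a corresponding quadratic term in the free-space propagator at distance $\flens$.

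Next I would propagate this field through the homogeneous region between $r_3^{\texttt{tl}_o}$ and $r_3^\focus = r_3^{\texttt{tl}_o}+\flens$ by solving the boundary value problem \autoref{eq:bvpz}--\autoref{eq:bcz}. Rather than invoking the exact Rayleigh-Sommerfeld formula, I would use the Fresnel (paraxial) approximation, which follows from the representation already used in \autoref{le:aperture}: in $(k_1,k_2)$-Fourier space, propagation multiplies by $\e^{\i \flens \sqrt{\kappa^2-k_1^2-k_2^2}}$, which under the paraxial assumption $k_1^2+k_2^2 \ll \kappa^2$ is approximated by $\e^{\i \kappa \flens}\e^{-\i (k_1^2+k_2^2)\flens/(2\kappa)}$. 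Taking the inverse $\mathcal F_{12}$ of this product yields the Fresnel convolution kernel
\[ \widehat{\bE}^\focus(\bx_\focus) = \frac{\e^{\i \kappa \flens}}{\i \lambda \flens} \int_{\bx \in \R^2} \e^{\i \frac{\kappa}{2\flens}\abs{\bx_\focus-\bx}^2} \widehat{\bE}^{\texttt{tl}_o}(\bx) \, d\bx, \]
which is precisely \cite[Eq.~(4-17)]{Goo05} in the context already set up here.

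Finally I would substitute the expression for $\widehat{\bE}^{\texttt{tl}_o}$ into this convolution, expand $\abs{\bx_\focus-\bx}^2 = \abs{\bx_\focus}^2 - 2\inner{\bx_\focus}{\bx} + \abs{\bx}^2$, and observe the key algebraic cancellation: the quadratic term $\e^{\i \frac{\kappa}{2\flens}\abs{\bx}^2}$ coming from the Fresnel kernel is annihilated by the lens's quadratic chirp $\e^{-\i \frac{\kappa}{2\flens}\abs{\bx}^2}$. Pulling $\e^{\i \frac{\kappa}{2\flens}\abs{\bx_\focus}^2}$ out of the integral and collecting the constant phases $\e^{\i \kappa \flens}\e^{\i \kappa \refractive_l \ttd_0}$, together with the substitution $\kappa = 2\pi/\lambda$ in the remaining oscillatory integrand $\e^{-\i(\kappa/\flens)\inner{\bx_\focus}{\bx}}$, produces exactly \autoref{eq:Huygens}.

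The main obstacle is really conceptual rather than computational: it is the justification of the Fresnel/paraxial approximation for the free-space propagator, since the exact solution from \autoref{le:aperture} carries the full symbol $\sqrt{\kappa^2-k_1^2-k_2^2}$. Once one accepts the paraxial regime (already built into \autoref{eq:phaseparaxial}), the remainder is bookkeeping of quadratic phases, which is the standard ``lens = Fourier transformer'' calculation that yields a Fourier transform of the pupil-gated incident field, evaluated at the scaled frequency $\bx_\focus/(\lambda \flens)$, with the residual quadratic phase $\e^{\i \pi \abs{\bx_\focus}^2/(\lambda \flens)}$ in front.
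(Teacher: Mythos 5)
Your proposal is correct and follows essentially the same route as the paper: the Huygens--Fresnel (paraxial) propagation integral over distance $\flens$, substitution of \autoref{eq:Etub} with the paraxial lens phase \autoref{eq:phaseparaxial}, and cancellation of the two quadratic chirps when the image plane sits at distance $\flens$. The only cosmetic difference is that you derive the Fresnel kernel from the angular-spectrum propagator of \autoref{le:aperture}, whereas the paper simply cites \cite[Eqs 4-17]{Goo05} for it.
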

\begin{proof}
We apply the Huygens-Fresnel principle (see \cite[Eqs 4-17]{Goo05}) to compute the field in the \emph{image plane}:
\begin{align*}
 \widehat{\bE}^\focus(\bx_\focus) &= 
 \frac{1}{\i \lambda \ttd} \e^{\i \kappa  \ttd} \e^{\frac{\i \kappa }{2\ttd}\abs{\bx_\focus}^2} \int_{\bx  \in \R^2} \widehat{\bE}^{\texttt{tl}_o}(\bx )  \e^{\i \frac{\kappa }{2\ttd}\abs{\bx }^2} \e^{-\i \frac{ \kappa }{\ttd}  \inner{\bx_\focus}{\bx }} d \bx  \\
 &= \frac{1}{\i \lambda \ttd} \e^{\i \kappa  \ttd} \e^{\frac{i\kappa }{2\ttd}\abs{\bx_\focus}^2} 
    \int_{\bx  \in \R^2}  P_\L(\bx ) \widehat{\bE}^{\texttt{tl}_i}(\bx ) \e^{\i \mu (\bx )} \e^{\i \frac{\kappa }{2\ttd}\abs{\bx }^2} \e^{-\frac{\i \kappa }{\ttd}  \inner{\bx_\focus}{\bx }} d \bx  \\
 &= \frac{1}{\i \lambda \ttd} \e^{\i \kappa  \ttd} \e^{\frac{\i \kappa }{2\ttd}\abs{\bx_\focus}^2} \int_{\bx  \in \R^2}  P_\L(\bx ) \widehat{\bE}^{\texttt{tl}_i}(\bx ) \e^{\i \kappa  \refractive_l \ttd_0} \e^{-\i \frac{\kappa }{2f} \abs{\bx }^2} \e^{\i \frac{\kappa }{2\ttd} \abs{\bx }^2} \e^{-\i \frac{\kappa }{\ttd}  \inner{\bx_\focus}{\bx }} d \bx  \\
 &= \frac{1}{\i \lambda \ttd} \e^{\i \frac{2\pi}{\lambda}(\ttd + \refractive_l \ttd_0)} \e^{\i \frac{\pi}{\lambda \ttd }\abs{\bx_\focus}^2} \int_{\bx  \in \R^2}  P_\L(\bx ) \widehat{\bE}^{\texttt{tl}_i}(\bx ) \e^{-\i \frac{\pi}{\lambda f} (1-\frac{f}{\ttd}) \abs{\bx }^2} \e^{-\i \frac{2\pi}{\lambda \ttd}  \inner{\bx_\focus}{\bx }} d \bx ,
\end{align*}  
where in the second equality we use \autoref{eq:Etub}, in the third equality we use the paraxial approximation \autoref{eq:phaseparaxial}, and in the last equality we use $\kappa  = \frac{2\pi}{\lambda}$.
Now, if the image plane is at the distance $\ttd=\flens$, the quadratic phase factor term within the integrand exactly cancel, leaving 
\begin{align*}
 \boxed{\widehat{\bE}^\focus(\bx_\focus) =  \frac{1}{\i \lambda \flens} \e^{\i \frac{2\pi}{\lambda}(\flens + \refractive_l \ttd_0)} \e^{\i \frac{\pi}{\lambda \flens}\abs{\bx_\focus}^2} \int_{\bx  \in \R^2}  P_\L(\bx ) \widehat{\bE}^{\texttt{tl}_i}(\bx ) \e^{-\i \frac{2 \pi}{\lambda \flens}  \inner{\bx_\focus}{\bx }} d \bx ,}
\end{align*} where the term $e^{\i \frac{2\pi}{\lambda}(\flens + \refractive_l \ttd_0)}$ is a constant amplitude, and the term $\e^{\i \frac{\pi}{\lambda \flens}\abs{\bx_\focus}^2}$ describes a spherical phase curvature in the focal plane.
\end{proof}
\begin{remark}
From \autoref{eq:Huygens} it follows by the convolution theorem for the Fourier transform that
\begin{equation} \label{eq:Huygens_one}
\boxed{
\begin{aligned}
 \widehat{\bE}^\focus(\bx_\focus) &=
 \frac{2\pi}{\lambda \flens} \e^{\i \pi \left( -\frac{1}{2} + \frac{2}{\lambda}(\flens + \refractive_l \ttd_0) + \frac{1}{\lambda \flens}\abs{\bx_\focus}^2 \right)}
 \mathcal{F}_{12}[P_\L \widehat{\bE}^{\texttt{tl}_i}]\left( 2\pi \frac{\bx_\focus}{\lambda \flens} \right)\\
 &= C_\focus \Phi_\focus  
 \left( \mathcal{F}_{12}[P_\L] \ast \mathcal{F}_{12}[\widehat{\bE}^{\texttt{tl}_i}]\right) \left( \frac{2\pi}{\lambda \flens} \bx_\focus\right)
   \text{ for all } \bx_\focus \in \R^2,
 \end{aligned}}
\end{equation} 
where 
\begin{equation}\label{eq:phase}
 C_\focus = \frac{1}{\lambda \flens} \text{ and } 
 \Phi_\focus = \e^{\i \pi \left( -\frac{1}{2} + \frac{2}{\lambda}(\flens + \refractive_l \ttd_0) + \frac{1}{\lambda \flens}\abs{\bx_\focus}^2 \right)}.
\end{equation}
\end{remark}

In the next step we calculate the Fourier-transform of a circular pupil function:
\begin{lemma}\label{lem:Fourier_pupil}
 Let $P:\R^2 \to \R$ be the \emph{circular pupil function} with radius $R$, as defined in \autoref{eq:pupil}, then 
\begin{equation} \label{eq:Fourier_pupil_polar}
  \mathcal F_{12}[P] (\bv) = R^2\frac{J_1(R\abs{\bv})}{R\abs{\bv}} \text{ for all } \bv \in \R^2.
\end{equation} 
\end{lemma}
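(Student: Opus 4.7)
The plan is to compute the two-dimensional Fourier transform of the radially symmetric indicator function of the disk of radius $R$ by passing to polar coordinates and reducing the resulting two-dimensional integral to a one-dimensional Bessel integral, which can be evaluated in closed form.

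First I would write $\bx = \rho(\cos\varphi,\sin\varphi)$ and $\bv = |\bv|(\cos\nu,\sin\nu)$, so that $\bv\cdot\bx = \rho|\bv|\cos(\varphi-\nu)$, and use the definition of $\mathcal{F}_{12}$ from the paper together with the fact that $P_\L(\bx)$ only depends on $|\bx|$ to obtain
\begin{equation*}
\mathcal{F}_{12}[P](\bv) = \frac{1}{2\pi}\int_0^R \rho \left(\int_0^{2\pi} \e^{-\i\rho|\bv|\cos(\varphi-\nu)}\, d\varphi\right) d\rho.
\end{equation*}
The inner angular integral is independent of $\nu$ (by $2\pi$-periodicity) and, by the Bessel identity for $m=0$ from \autoref{eq:Besselidentities}, equals $2\pi\, J_0(\rho|\bv|)$. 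This collapses the Fourier transform to the Hankel-type integral
\begin{equation*}
\mathcal{F}_{12}[P](\bv) = \int_0^R \rho\, J_0(\rho|\bv|)\, d\rho.
\end{equation*}

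Next I would evaluate this one-dimensional integral using the well-known antiderivative identity $\frac{d}{dx}\bigl(xJ_1(x)\bigr) = xJ_0(x)$. The substitution $u = \rho|\bv|$ gives
\begin{equation*}
\int_0^R \rho\, J_0(\rho|\bv|)\, d\rho = \frac{1}{|\bv|^2}\int_0^{R|\bv|} u\, J_0(u)\, du = \frac{R|\bv|\, J_1(R|\bv|)}{|\bv|^2} = R^2\,\frac{J_1(R|\bv|)}{R|\bv|},
\end{equation*}
which is exactly the claimed identity \autoref{eq:Fourier_pupil_polar}.

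There is no real obstacle here; the only subtleties are keeping track of the $(2\pi)^{-1}$ normalization in the convention for $\mathcal{F}_{12}$ chosen in the paper, and handling the removable singularity of $J_1(R|\bv|)/(R|\bv|)$ at $\bv=0$, where the right-hand side is interpreted as the limit $\tfrac{1}{2}R^2$ (consistent with the value $\pi R^2/(2\pi) = R^2/2$ one directly reads off from the integral). Both Bessel identities needed are listed earlier in the paper, so the proof reduces to two substitutions and a reference.
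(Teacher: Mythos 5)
Your proposal is correct and follows essentially the same route as the paper's proof: polar coordinates, the $m=0$ Bessel identity from \autoref{eq:Besselidentities} to reduce the angular integral to $2\pi J_0(\rho\abs{\bv})$, and the antiderivative identity for $xJ_1(x)$ to evaluate the remaining radial integral. Your additional remarks on the $(2\pi)^{-1}$ normalization and the removable singularity at $\bv=0$ are sound but not needed beyond what the paper records.
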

\begin{proof} 
We use the following polar coordinates: 
\begin{equation*} 
\bv  = \varkappa \begin{pmatrix}
       \cos(\vartheta)\\ \sin(\vartheta)
      \end{pmatrix}
      \text{ and } \by = \varrho\begin{pmatrix}
        \cos(\sigma)\\
        \sin(\sigma)
      \end{pmatrix}, 
\end{equation*} 
then the Fourier transform of the pupil function 
\begin{equation} \label{eq:Fourier_pupil}
  \begin{aligned}
  \mathcal F_{12}[P] (\bv) &= \frac{1}{2\pi} \int_{\by \in \R^2} P(\by) \e^{-\i\by \cdot \bv} d \by = \frac{1}{2\pi}  \int_{0}^{R} \varrho \int_{0}^{2\pi} \e^{-\i \varkappa \varrho \cos(\sigma-\vartheta)}   d\sigma d \varrho \\
  &= \int_{0}^{R} \varrho J_0(\varkappa \varrho) d \varrho = \frac{R}{\varkappa} J_1(R \varkappa) = R^2\frac{J_1(R\abs{\bv})}{R\abs{\bv}},
 \end{aligned}
  \end{equation} where we use \autoref{eq:Besselidentities} for $m=0$ in the third equality.
\end{proof}

Now, we calculate the $k$-transform of $\widehat{\bE}^\bfp$ in this approximation.

\subsection{Small Aperature}
\label{Sec:sa}
In what follows, we are interested in a small aperture. Of course small aperture have the problem that only little light of the emitted 
dipols passes through the lens and thus these considerations are more of theoretical nature.

In case the numerical aperture $\NA$ is small, that is if $\tmax$ is small, it follows from \autoref{eq:EBFP_pre} that
\begin{equation} \label{eq:EBFP_sa} \widehat{\bE}_{\mathrm{small}}^\bfp(\rho,\varphi) = C(\fobj) \begin{pmatrix} \Psi_1 \\ \Psi_2 \\ 0 \end{pmatrix} \chi_{\rho \ls \NA}.\end{equation}
In what follows, we denote by $P_\NA$ the pupil function $\chi_{\rho \ls \NA}$. We emphasize that the left hand side of \autoref{eq:EBFP_sa} is actually an approximation of the right hand side of \autoref{eq:EBFP_pre}.
Next we calculate the $k$-transform of $\widehat{\bE}^\bfp$  in this approximation. Noting that the objective acts again as a pupil function with disc 
radius $\NA$ we get analogously to \autoref{lem:Fourier_pupil} and by using \autoref{eq:Fourier_EBFP_polar}
\begin{equation} \label{eq:Fourier_EBFP_polar_III}
\begin{aligned}
   \mathcal F_{12}[\widehat{\bE}_{\mathrm{small}}^{\texttt{tl}_i}] (\xi,\nu)  
   = C(\texttt{f}_\obj) 
   \e^{\i(r_3^{\texttt{tl}_i} -r_3^\bfp) \sqrt{\kappa^2-\xi^2}} 
  \begin{pmatrix} \psi_1 \\ \psi_2 \\ 0 \end{pmatrix}  \mathcal F_{12}[P_\NA](\xi,\nu), \quad  \xi^2 \leq \kappa^2, \nu \in [0,2\pi).
 \end{aligned}
\end{equation}

 
 \section{single molecule localization microscopy Experiments and Inverse Problems} \label{s:ip}
 We consider two experiments in two different settings: 
 \begin{experiment}\label{exp:two}  
  \begin{description} 
   \item{For \emph{setting $1$} we assume $n$ static emitting dipoles}: 
     Several monochromatic plane waves of the same frequency $\omega_{inc}$ but with different orientations 
     $\bv^{(j)}$, $j=1,2,\ldots,M$ with $M > 1$ are used to illuminate the cell.
     Every emitting dipole emits light in an orientation $\bpsi^{(j)}(k)$, $k=1,2,\ldots,n$, $j=1,2,\ldots,M$ 
     (depending on the incident field according to \autoref{eq:plane_wave}). 

    Therefore, for each experiment $j=1,2,\ldots,M$ the current 
    $$
     \fouriercurrent^{(j)}(\br) = -\i \omega \sum_{k=1}^n \Psi^{(j)}(k) \delta(\br-\br^{\bpsi(k)})
    $$ 
    and the density 
    $$\widehat{\rho}^{(j)} (\br) = \sum_{k=1}^n \abs{\Psi^{(j)}(k)} \delta'(\br-\br^{\bpsi(k)}) = \sum_{k=1}^n \delta'(\br-\br^{\bpsi(k)})$$ 
    are given via \autoref{eq:J_formula}, \autoref{eq:dens} as the superposition of all dipoles. Note, that we assume that all dipols are unit vectors.
    
    Consequently, the electric field $\widehat{\bE}:=\widehat{\bE}^{(j)}$ solves \autoref{modeleq1}, 
    \begin{equation*} 
     \begin{aligned}
      \Delta_\br \widehat\bE(\br;\omega)+ \kappa^2(\omega) \widehat\bE(\br;\omega) 
       & = \frac{\i \omega}{\epsilon_0 c^2} \fouriercurrent^{(j)}(\br) + \frac{1}{\epsilon_0} \nabla_\br \widehat{\rho}^{(j)} (\br) 
       \text{ for all } \br \in \Omega.
     \end{aligned}
    \end{equation*} 
 \begin{enumerate}
  \item The measurements recorded in a \emph{static experiment} are the energies of the electric field in the image plane, after 
    the light has passed through the imaging system. That is, for each experiment $j=1,2,\ldots,M$ the data 
  \begin{equation} \label{eq:measurement_c}
   \boxed{
    m^{(j)}_i(\bx_\focus;t) = \abs{(\bE^\focus_i)^{(j)}}^2(\bx_\focus;t) \text{ for all } \bx_\focus \in \R^2, t >0 \text{ and for } i=1,2}
  \end{equation}
  are recorded. 
  \item In the \emph{dynamic experiment} setting the static experiment is repeated. We denote the experiment repetitions with 
  the parameter $s$: This experimental setup is used in practice because it makes use of blinking dyes, which allows for 
  better localization of the dyes, and thus molecules.
  That is, the measurements are  
   \begin{equation} \label{eq:measurement_d}
    \boxed{
    m^{(j)}(\bx_\focus;t;s) = \abs{(\bE^\focus_i)^{(j)}}^2(\bx_\focus;t;s) \text{ for all } \bx_\focus \in \R^2,\;t,s > 0 \text{ for } i=1,2.}
  \end{equation}
 \end{enumerate}
\item{For \emph{setting $2$} we assume rotating dipoles}, which are modelled via \autoref{eq:J_wobbel} - here in particular we assume that the cone 
     becomes the ball. That is, $\theta= \pi$. 
    Note that in this case several monochromatic excitations do not provide an asset, so we can constrain ourselves to the case $M=1$.
    As a consequence of \autoref{eq:J_wobbel}, the emitted currents of all rotating dipoles of an single molecule localization microscopy experiment are given by 
    \begin{equation} \label{eq:special_source}
     \fouriercurrent(\br;\omega)  = -\i \omega  \sum_{k=1}^n \mathbf{I}_k a_k \mathds{1}_k(\br),
    \end{equation}
   and according to \autoref{modeleq1} the electric field satisfies the equation
   \begin{equation} \label{modeleq1a}  
   \begin{aligned}
    \Delta_\br \widehat\bE(\br;\omega)+ \kappa^2(\omega) \widehat\bE(\br;\omega) 
     & = \sum_{m} \mathbf{I}_m a_m \widehat{\bf R}_m(\br;\omega) \text{ for all } \br \in \Omega.
   \end{aligned}
  \end{equation} 
 \begin{enumerate}
  \item The measurements recorded in a single molecule localization microscopy are the energies of the electric field in the focal plane (see \autoref{eq:Huygens}), after 
    the light has passed through the imaging system. That is, the data 
  \begin{equation} \label{eq:measurement_c2}
   \boxed{
    m(\bx_\focus;t) = \abs{\bE^\focus}^2(\bx_\focus;t) \text{ for all } \bx_\focus \in \R^2, \quad t > 0}
  \end{equation}
  are recorded. 
  \item In the \emph{dynamic experiment} setting the static experiment is repeated, and we denote every repetition experiment with 
  the parameter $s$: 
  That is the measurements are  
   \begin{equation} \label{eq:measurement_d2}
    \boxed{
    m(\bx_\focus;t;s) = \abs{\bE^\focus}^2(\bx_\focus;t;s) \text{ for all } \bx_\focus \in \R^2, \quad t,s > 0.}
  \end{equation}
 \end{enumerate} 
 Note the difference between setting 1 and 2. In the former it is much easier to identify dipoles because the orientation can be resolved and is not changing over time.
 \end{description}
 \end{experiment}

\subsection{The limit} \label{ss:limit}
Using the small aperture limit of \autoref{Sec:sa} in combination with the formula for the electric field on the image plane \autoref{eq:Huygens_one}, we can compute the electric field in the image plane from \autoref{eq:Fourier_EBFP_polar_III}.  We first make use of a linear approximation of the function $\rho \in [0,\kappa]  \to \sqrt{\kappa^2-\rho^2} \simeq \kappa$ -- that is we assume that between the back focal plane of the objective and the lens, the electric field only undergoes a phase shift which does not depend on the distance to the optical axis.
It follows then from \autoref{eq:Fourier_EBFP_polar_III} that
\begin{equation} \label{eq:Fourier_EBFP_polar_IV}
\mathcal F_{12}[\widehat{\bE}_{\mathrm{small}}^{\texttt{tl}_i}] (\xi,\nu) \simeq C(\focus_\obj)  
\e^{\i(r_3^{\texttt{tl}_i} -r_3^\bfp) \kappa} \begin{pmatrix} \psi_1 \\ \psi_2 \\ 0 \end{pmatrix} \mathcal F_{12}[ P_\NA] (\xi,\nu).
\end{equation}
Applying \autoref{eq:Huygens_one} where we replace $\mathcal F_{12}[\widehat{\bE}^{\texttt{tl}_i}]$ by 
$\mathcal F_{12}[\widehat{\bE}_{\mathrm{small}}^{\texttt{tl}_i}]$ 
and inserting \autoref{eq:Fourier_EBFP_polar_IV}, we obtain
\begin{equation*}
 \begin{aligned}
 \widehat{\bE}_{\mathrm{small}}^\focus(\bx_\focus) & = C_\focus \Phi_\focus  
 \left( \mathcal{F}_{12}[P_\L] \ast \mathcal{F}_{12}[\widehat{\bE}_{\mathrm{small}}^{\texttt{tl}_i}]\right) \left( \frac{2\pi}{\lambda \focus} \bx_\focus\right) \\
&= 
  \begin{pmatrix} \psi_1 \\ \psi_2 \\ 0 \end{pmatrix} \left(\mathcal F_{12}[ P_\L] \ast \mathcal F_{12}[P_\NA]\right) (\xi,\nu)\\
&= 
  \begin{pmatrix} \psi_1 \\ \psi_2 \\ 0 \end{pmatrix} \mathcal F_{12}[ P_\L P_\NA] (\xi,\nu).
 \end{aligned}
\end{equation*}

Now, assuming that $R \gs \NA$ (the lens is bigger than the objective), we have $P_\L P_\NA = P_\NA$ and \autoref{eq:Fourier_pupil} provides,  
\[\widehat{\bE}_{\mathrm{small}}^\focus(\bx_\focus)  = C(\focus_\obj) C_\focus \Phi_\focus  \e^{\i(r_3^{\texttt{tl}_i} -r_3^\bfp) \kappa }
    \begin{pmatrix} \psi_1 \\ \psi_2\\ 0 \end{pmatrix}\NA^2\frac{J_1(\NA \frac{2\pi}{\lambda \flens} \abs{\bx_\focus})}{\NA \frac{2\pi}{\lambda \flens} \abs{\bx_\focus}}   =: \Gamma(\omega)  \begin{pmatrix} \psi_1 \\ \psi_2\\ 0 \end{pmatrix}\NA^2\frac{J_1(\NA \frac{2\pi}{\lambda \flens} \abs{\bx_\focus})}{\NA \frac{2\pi}{\lambda \flens} \abs{\bx_\focus}} .\]
   
Finally, what is actually measured in experiments is the intensity $m(\bx_\focus,t) = |\bE^\focus(\bx_\focus,t)|^2$ 
averaged in time. If we assume that the signal $\bE$ is compactly supported in time and that what is measured by the detector contains this whole support, we can use that the time Fourier transform is a unitary operator and write
\begin{equation} \label{eq:Huygens_two}
 \begin{aligned}
  \bar{m}(\bx_\focus) &= \int_{t \in \R} |\bE^\focus(\bx_\focus)|^2 \dd t \\
  &= \int_{\omega \in \R} |\widehat{\bE}^\focus(\bx_\focus)|^2 \dd \omega \\
  &= \int_{\omega \in \R} \abs{\Gamma(\omega)}^2 \dd \omega 
     \left( \NA^2\frac{J_1(\NA \frac{2\pi}{\lambda \flens} 
     \abs{\bx_\focus})}{\NA \frac{2\pi}{\lambda \flens} \abs{\bx_\focus}} \right)^2  \left( \psi_1^2 + \psi_2^2\right),
     \end{aligned}
\end{equation}
which means that under these approximations (small aperture), what is observed is an \emph{Airy pattern} whose intensity depends on the optical system but is also proportional to squared norm of the component of the dipole which lies orthogonally to the optical axis. Making use of the notation introduced before for the dipole orientation, we obtain that \emph{the measured intensity $m$ is proportional to $\cos^2(\theta_m)$.}
Since the Airy function can be well approximated by a Gaussian function, the measured signals of the emitted dyes very much looks like superposition of Gaussian functions.

Now, having summarized the mathematical modeling of single molecule localization microscopy we can state the associated inverse problems:

\begin{definition}[Inverse Problem]
The inverse problem of single molecule localization microscopy in the two different settings
environment consists in calculating 
\begin{equation*}
 (a_m, \Psi_m, {\bf r}_m, \chi_m(s))_{m=1,\ldots,M},
 \text{ from the measurements } m. 
\end{equation*}
Indeed the inverse problem could also be generalized to reconstruct \autoref{modeleq1a} in addition, which would result in an 
\emph{inverse scattering problem} \cite{CakCol14}. However, this complex problem is not considered here further.
\end{definition}

In current practice of single molecule localization microscopy the simplified formulas \autoref{eq:Huygens_two} 
are used for reconstruction of the center of gravities $(r_m^1,r_m^2)$ in the measurement data $m$ induced by the point spread function $\psf_\omega$. 

\section*{Conclusion}
The main objective of this work was to model mathematically the propagation of light emitted from dyes 
in a superresolution imaging experiment. We formulated basic inverse problems related to single molecule 
superresolution microscopy, with the goal to have a basis for computational and quantitative single molecule 
superresolution imaging. The derivation of the according equations follows the physical and chemical 
literature of supperresolution microscopy, in particular \cite{HelAxe87, Axe12}, which are combined with the 
mathematical theory of distributions to translate physical and chemical terminologies into a mathematical framework. 

\appendix

\section{Derivation of Particular Fourier transforms}
\label{app:FT}

Before reading through the appendix it might be useful to recall the notation of \autoref{ta:abb}.
The derivation in \autoref{lemmaE} uses the \emph{residual theorem}.
\begin{theorem}[Residual Theorem] \label{th:residual}
  Let the function $z \in \C \to \tilde{f}(z):=f(z) \e^{\i a z}$ with $a > 0$ satisfy the following properties:
 \begin{enumerate}
  \item $f$ is analytic with at most finitely many poles $p_i$, $i=1,\ldots,m$, which do not lie on the real axis.
  \item There exists $M,R>0$ such that for every $z \in \C$ satisfying $\Im(z) \geq 0$ and $\abs{z} \geq R$
        \begin{equation} \label{eq:growth}
         \abs{f(z)} \leq \frac{M}{\abs{z}}.
        \end{equation}
 \end{enumerate}
 Then 
 \begin{equation*}
  \int_\R \tilde{f}(x) dx = 2 \pi \i \sum_{i=1}^m  \text{Res}(\tilde{f};p_i).
 \end{equation*}
\end{theorem}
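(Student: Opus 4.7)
The plan is to apply the classical residue theorem on a well-chosen closed contour, then let the contour degenerate to the real line and show that the contribution of the non-real part vanishes. Given the assumption $a>0$ and the growth estimate in the upper half-plane, the natural choice is to close the contour in the upper half-plane. Concretely, for $R' > 0$ large enough that every pole $p_i$ lies inside the open disk of radius $R'$ (and in the upper half-plane — I read the statement as implicitly restricting to those poles, consistent with hypothesis~(ii) being one-sided), let $C_{R'}$ denote the positively oriented boundary of the upper half-disk, i.e.\ the segment $[-R',R']$ on the real axis followed by the semicircle $\Gamma_{R'} = \{R' \e^{\i \theta} : \theta \in [0,\pi]\}$.

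The first step is to invoke the standard residue theorem from complex analysis on the closed contour $C_{R'}$: since $\tilde f(z) = f(z) \e^{\i a z}$ is meromorphic with the stated poles and none of them lies on $C_{R'}$ (poles are off the real axis and inside the disk by choice of $R'$), one obtains
\[
 \int_{-R'}^{R'} \tilde f(x) \, dx + \int_{\Gamma_{R'}} \tilde f(z) \, dz = 2 \pi \i \sum_{i=1}^m \text{Res}(\tilde f; p_i).
\]
The second step is to pass to the limit $R' \to \infty$ in this identity. The left segment yields $\int_\R \tilde f(x) \, dx$ provided the integrand is integrable on $\R$, which follows from hypothesis~(ii) applied on the real axis, giving $|\tilde f(x)| = |f(x)| \ls M/|x|$ for $|x| \gs R$; strictly speaking, we should interpret the integral as a principal value limit, and this is fine here since we control both tails uniformly.

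The heart of the argument, and the only nontrivial step, is to prove that
\[
 \lim_{R' \to \infty} \int_{\Gamma_{R'}} \tilde f(z) \, dz = 0.
\]
This is a form of Jordan's lemma. Parametrize $z = R' \e^{\i \theta}$, so $\Im(z) = R' \sin \theta$ and $|\e^{\i a z}| = \e^{-a R' \sin \theta}$. Using hypothesis~(ii), for $R' \gs R$,
\[
 \left| \int_{\Gamma_{R'}} \tilde f(z) \, dz \right| \ls \int_0^\pi \frac{M}{R'} \, \e^{-a R' \sin \theta} \, R' \, d\theta = M \int_0^\pi \e^{-a R' \sin \theta} \, d\theta.
\]
By symmetry the last integral equals $2 \int_0^{\pi/2} \e^{-a R' \sin \theta} \, d\theta$, and on $[0,\pi/2]$ one uses Jordan's inequality $\sin \theta \gs 2\theta/\pi$ to bound it by $2 \int_0^{\pi/2} \e^{-2 a R' \theta/\pi} \, d\theta \ls \pi/(a R')$. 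Thus the semicircle contribution is at most $\pi M/(a R')$, which tends to $0$ as $R' \to \infty$. The main subtlety here is precisely that $|\e^{\i a z}|$ does not decay on its own as $|z|\to \infty$ in the upper half-plane — it merely fails to blow up — so the $1/|z|$ decay of $f$ is in itself insufficient and one truly needs the interplay with the oscillatory factor, which is why the use of $a>0$ together with closing in the upper half-plane is essential.

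Combining the two limits yields the stated identity. I would also remark that if a pole $p_i$ lay on the real axis, the argument would need a keyhole indentation and yield a $\pi \i$ (half-residue) contribution, which is why hypothesis~(i) explicitly excludes this case; and that if the same reasoning were attempted with $a<0$, one would be forced to close in the lower half-plane, changing signs and flipping which poles contribute.
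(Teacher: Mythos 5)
Your proof is correct: it is the standard argument (classical residue theorem on the closed upper half-disk plus Jordan's lemma to kill the arc contribution), and your two side remarks are both apt — the hypothesis $\abs{f(z)}\leq M/\abs{z}$ only guarantees the real-line integral as a symmetric (principal-value) limit, and the residue sum in the statement should be read as running over the poles in the upper half-plane only, since hypothesis \textit{(ii)} is one-sided and closing the contour above is forced by $a>0$. The paper itself gives no proof of this theorem — it is stated as a known result and immediately applied — so there is nothing to compare against; your write-up supplies exactly the argument the authors are implicitly relying on, with the Jordan inequality $\sin\theta\geq 2\theta/\pi$ yielding the bound $\pi M/(aR')$ on the semicircle, which suffices.
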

Using this lemma we are able to prove the following result used in \autoref{lemmaE}:
\begin{lemma} 
Let the assumptions and notation of \autoref{lemmaE} hold (in particular this means that $r_3, r_3^\bpsi \in \R$ 
satisfy $r_3-r_3^\bpsi > 0$), then 
 \begin{equation} \label{eq:FTinz}
\frac{1}{\sqrt{2\pi}}
\mathcal{F}_3^{-1} 
\left[k_3 \to  \left(\bpsi  +   \frac{(\bpsi \times \bk) \times \bk}{\abs{\bk}^2 - \kappa_\ve^2} \right)\e^{-\i k_3 r_3^\bpsi}\right](r_3) =
 \psi_3 \be_3\eindelta(r_3 - r_3^\bpsi) + \frac{\i \e^{\i q_{\ve}  (r_3-r_3^\bpsi)}}{ 2q_{\ve}  } (\bpsi \times \bk_{q_{\ve}}) \times \bk_{q_{\ve}}.
\end{equation}
\end{lemma}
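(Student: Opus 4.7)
The proof strategy is to evaluate the $k_3$-integral by residue calculus. With the normalization convention of the paper, $\frac{1}{\sqrt{2\pi}}\mathcal{F}_3^{-1}[f](r_3) = \frac{1}{2\pi}\int_\R f(k_3)\,\e^{\i k_3 r_3}\,dk_3$, so the left-hand side of \autoref{eq:FTinz} equals $\frac{1}{2\pi}\int_\R \e^{\i k_3 (r_3 - r_3^\bpsi)}\,G(k_3)\,dk_3$ where
\[ G(k_3) := \bpsi + \frac{(\bpsi\times\bk)\times\bk}{\abs{\bk}^2 - \kappa_\ve^2}, \quad \bk = (k_1,k_2,k_3). \]
I would first apply the triple product identity $(\bpsi\times\bk)\times\bk = (\bk\cdot\bpsi)\bk - \abs{\bk}^2\bpsi$ and use that $\abs{\bk}^2 - \kappa_\ve^2 = k_3^2 - q_\ve^2$ (by definition of $q_\ve$ in \autoref{eq:qk1}) to telescope the two summands into
\[ G(k_3) = \frac{(\bk\cdot\bpsi)\bk - \kappa_\ve^2\bpsi}{k_3^2 - q_\ve^2}. \]

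The next step is to isolate the piece of $G$ that does not decay at infinity in $k_3$. The first two components have numerators that are affine in $k_3$, hence decay as $1/k_3$. The third component is quadratic over quadratic, and long division (using $q_\ve^2 - \kappa_\ve^2 = -(k_1^2 + k_2^2)$) yields
\[ G_3(k_3) = \psi_3 + \frac{(k_1\psi_1 + k_2\psi_2)\,k_3 - (k_1^2 + k_2^2)\psi_3}{k_3^2 - q_\ve^2}. \]
The constant $\psi_3\be_3$ contribution is the inverse Fourier transform of a pure exponential, giving exactly $\psi_3\be_3\,\eindelta(r_3 - r_3^\bpsi)$, which matches the first term on the right-hand side of \autoref{eq:FTinz}. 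What remains is a vector-valued rational function $R(k_3)/(k_3^2 - q_\ve^2)$ with numerator of degree at most one in each component.

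For this remainder I would invoke the residue theorem (Jordan's lemma supplies the arc bound thanks to the decay of the numerator). Since $r_3 - r_3^\bpsi > 0$ and $\Im q_\ve > 0$, I close the contour in the upper half plane, which encloses only the simple pole at $k_3 = q_\ve$, with $-q_\ve$ safely in the lower half plane. The residue equals $R(q_\ve)/(2q_\ve)$, and because the polynomial subtraction from $G_3$ vanishes at $k_3 = q_\ve$, we have
\[ R(q_\ve) = (\bk_{q_\ve}\cdot\bpsi)\bk_{q_\ve} - \kappa_\ve^2\bpsi. \]
Using $\abs{\bk_{q_\ve}}^2 = k_1^2 + k_2^2 + q_\ve^2 = \kappa_\ve^2$ in the triple product identity recognizes this as $(\bpsi\times\bk_{q_\ve})\times\bk_{q_\ve}$. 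The residue theorem then produces the prefactor $\i\,\e^{\i q_\ve(r_3 - r_3^\bpsi)}/(2q_\ve)$, reproducing the second term of \autoref{eq:FTinz}.

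The main obstacle is bookkeeping rather than analysis: one must spot that the third component is the only one in which the numerator and denominator share the same degree, perform the correct Euclidean division so that the delta part appears naturally (and not spuriously in the other components), and then verify that the residue formula at $k_3 = q_\ve$ algebraically reassembles into $(\bpsi\times\bk_{q_\ve})\times\bk_{q_\ve}$ via the on-shell relation $\abs{\bk_{q_\ve}}^2 = \kappa_\ve^2$. The direction of contour closure is forced by the sign assumption $r_3 - r_3^\bpsi > 0$ from \autoref{lemmaE}, which is exactly the physical condition that the observation point lies to the right of the emitting dipole.
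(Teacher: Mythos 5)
Your proof is correct and follows essentially the same route as the paper: extract the constant $\psi_3\be_3$ (which produces the $\eindelta$ term), verify the remainder decays like $1/k_3$, and close the contour in the upper half plane where the only pole is $k_3=q_\ve$ (since $\Im q_\ve>0$ and $r_3-r_3^\bpsi>0$), the residue reassembling into $(\bpsi\times\bk_{q_\ve})\times\bk_{q_\ve}$ via $\abs{\bk_{q_\ve}}^2=\kappa_\ve^2$. The only difference is bookkeeping: the paper splits off $\psi_3\be_3$ first and exhibits an explicit partial-fraction decomposition into $\frac{1}{k_3\mp q_\ve}$ terms, whereas you combine everything into one rational function and do Euclidean division on the third component — an equivalent and slightly cleaner organization.
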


First, we note that 
\begin{equation*}
 \Psi +  \frac{(\bpsi \times \bk) \times \bk}{k_3^2 - q_{\ve}^2} = 
 \left(\Psi - \psi_3 \be_3 + \frac{(\bpsi \times \bk) \times \bk}{ k_3^2 - q_{\ve}^2} \right) +
 \psi_3 \be_3.
\end{equation*}
Now, we calculate the Fourier-transform of each of the two terms on the right hand side. 
The second term can be calculated from \autoref{eq:dist3} and is given by 
\begin{align}
 \mathcal F^{-1}_3[k_3 \mapsto \e^{-\i k_3 r_3^\bpsi}](r_3) &=  \sqrt{2\pi} \eindelta(r_3 - r_3^\bpsi) \label{eq:FT1}.
\end{align} 
For the calculation of the first term we use the residual theorem:

\begin{itemize}
 \item Clearly $f$ is analytic with potential poles at $k_3 = \pm q_{\ve}$.
 \item To verify \autoref{eq:growth} we use the elementary calculation rules for $\Curl$, summarized in \autoref{eq:doubleprod} and get
\begin{align*}
~ & \frac{(\bpsi \times \bk) \times \bk}{ k_3^2 - q_{\ve}^2} \\
= &-\frac{1}{ k_3^2 - q_{\ve}^2} \left(k_3^2 (\bpsi - \psi_3 \be_3) 
   - k_3 \begin{pmatrix} \psi_3 k_1 \\ \psi_3 k_2 \\ \psi_1 k_1 + \psi_2 k_2\end{pmatrix}   + (k_1^2+k_2^2) \bpsi - 
         \begin{pmatrix} (\psi_1 k_1 + \psi_2 k_2)k_1 \\ (\psi_1 k_1 + \psi_2 k_2)k_2 \\0 \end{pmatrix} \right) \\
= &-\frac{1}{k_3^2 - q_{\ve}^2} \left(k_3^2 \left( \bpsi - \psi_3 \be_3\right) - k_3 \begin{pmatrix} \psi_3 k_1 \\ \psi_3 k_2 \\ \psi_1 k_1 + \psi_2   
    k_2\end{pmatrix}   + (k_1^2+k_2^2) \left( \bpsi - \psi_3 \be_3\right) + 
    \begin{pmatrix} -\psi_1 k_1^2 - \psi_2 k_2k_1 \\ -\psi_2 k_2^2 - \psi_1 k_1k_2 \\(k_1^2+k_2^2) \psi_3 \end{pmatrix}     \right) \\
= &-\frac{1}{ k_3^2 - q_{\ve}^2} \left((k_3^2-q_{\ve}^2) \left( \bpsi - \psi_3 \be_3\right) - k_3 \begin{pmatrix} \psi_3 k_1 \\ \psi_3 k_2 \\ \psi_1 k_1 + 
  \psi_2 k_2\end{pmatrix}   + \kappa^2 \left( \bpsi - \psi_3 \be_3\right) + 
  \begin{pmatrix} -\psi_1 k_1^2 - \psi_2 k_2k_1 \\ -\psi_2 k_2^2 - \psi_1 k_1k_2 \\(k_1^2+k_2^2) \psi_3 \end{pmatrix}     \right) \\
= & - \bpsi + \psi_3 \be_3 +  \frac{k_3}{k_3^2 - q_{\ve}^2} {\bf a}
  - \frac{1}{k_3^2 - q_{\ve}^2}  {\bf b}
\end{align*}
where
\begin{equation*}
 {\bf a} = \begin{pmatrix} \psi_3 k_1 \\ \psi_3 k_2 \\ \psi_1 k_1 + \psi_2 k_2\end{pmatrix} \text{ and }
 {\bf b} = \begin{pmatrix} (\kappa^2 - k_1^2) \psi_1 -\psi_2 k_1 k_2 \\ 
  (\kappa^2 - k_2^2)  \psi_2 - \psi_1 k_1 k_2 \\ (k_1^2 + k_2^2)\psi_3 \end{pmatrix}.
\end{equation*}
Using that 
\begin{equation*}
 \frac{k_3}{k_3^2-q_{\ve}^2} = \frac{1}{2} \left( \frac{1}{k_3-q_{\ve}} + \frac{1}{k_3+q_{\ve}} \right) \text{ and }
 \frac{1}{k_3^2-q_{\ve}^2} = \frac{1}{2q_{\ve}} \left( \frac{1}{k_3-q_{\ve}} - \frac{1}{k_3+q_{\ve}} \right)
\end{equation*}
we get
\begin{equation}\label{eq:residual}
\frac{(\bpsi \times \bk) \times \bk}{ k_3^2 - q_{\ve}^2} + \bpsi - \psi_3 \be_3  
=  \frac{1}{k_3-q_{\ve}} \left( \frac{1}{2} {\bf a} - \frac{1}{2q_{\ve}} {\bf b} \right) 
+ \frac{1}{k_3+q_{\ve}} \left( \frac{1}{2}{\bf a} + \frac{1}{2q_{\ve}} {\bf b} \right).
\end{equation}
This shows \autoref{eq:growth}.
\end{itemize}
Therefore we can apply the residual \autoref{th:residual}: For this purpose let 
 \begin{equation} \label{eq:extension}
 z \in \C \to f(z):= \frac{1}{z^2 - q_{\ve}^2} \left( {\bpsi} \times \begin{pmatrix}k_1\\k_2\\z \end{pmatrix} \right) 
 \times \begin{pmatrix}k_1\\k_2\\z \end{pmatrix} + {\bpsi} - \psi_3 \be_3.
\end{equation}
Then, since $q_{\ve}$ is complex with positive imaginary part, $f$ has only one pole in the upper half plane, and we get
\begin{equation*}
\begin{aligned}
  \int_\R f(k_3) \e^{\i k_3 (r_3 - r_3^\bpsi)} d k_3 &= 2 \pi \i \text{Res} \left(f(z) \e^{\i z (r_3 - r_3^\bpsi)}; z=q_{\ve}\right) \\
 &= 2 \pi \i \left( \frac{1}{2q_{\ve}} \left( {\bpsi} \times \begin{pmatrix}k_1\\k_2\\q_{\ve} \end{pmatrix} \right) 
 \times \begin{pmatrix}k_1\\k_2\\q_{\ve} \end{pmatrix} \e^{\i q_{\ve} (r_3 - r_3^\bpsi)} \right).
\end{aligned}
\end{equation*}
This implies
 \begin{equation*}
\frac{1}{\sqrt{2\pi}}
\mathcal{F}_3^{-1} 
\left[k_3 \to  \left(\bpsi  +   \frac{(\bpsi \times \bk) \times \bk}{\abs{\bk}^2 - \kappa_\ve^2} \right)\e^{-\i k_3 r_3^\bpsi}\right](r_3) =
 \psi_3 \be_3\eindelta(r_3 - r_3^\bpsi) + \frac{\i \e^{\i q_{\ve}  (r_3-r_3^\bpsi)}}{ 2q_{\ve}  } (\bpsi \times \bk_{q_{\ve}}) \times \bk_{q_{\ve}}.
\end{equation*}


\section{Derivation of the far field approximation for the electric field}
\label{app:farField}
\begin{lemma} \label{le:alpha}
Let $\zeta$ be the function defined in \autoref{eq:alpha}. That is, for all $\bk_{12} \in \R^2$ 
$\zeta(\bk_{12}) = \bk_{12} \cdot \bv + q(\bk_{12})$, with $q = \sqrt{\kappa^2 - k_1^2 - k_2^2}$; 
note the formal definition of $q$ in \autoref{eq:qk1} is via the limit $\ve \to 0$.
Then the gradient of $\zeta$ is given by 
\begin{equation} \label{eq:crit}
 \nabla \zeta (\bk_{12}) = \bv- \frac{\bk_{12}}{\sqrt{\kappa^2 - \abs{\bk_{12}}^2}},
\end{equation}
which vanishes for 
\begin{equation} \label{eq:bv}
 \hat{\bk}_{12} := \frac{\kappa}{\sqrt{1+\abs{\bv}^2}}\bv.
\end{equation}
Consequently $\br = r_3 \be_3 + r_3 \begin{pmatrix} \bv & 1 \end{pmatrix}^T$ as in \autoref{eq:v} satisfies 
\begin{equation} \label{eq:bv3}
\frac{\br}{\abs{\br}} = \frac{1}{\sqrt{1+\abs{\bv}^2}} \begin{pmatrix} \bv\\ 1 \end{pmatrix}.
\end{equation}
Moreover, with $q$ as defined in \autoref{eq:qk1}, we get the following identities
\begin{equation} \label{eq:bv2}
\begin{aligned}
  \zeta(\hat{\bk}) = \hat{\bk}_{12} \cdot \bv + q =\kappa \sqrt{1+\abs{\bv}^2} \quad \text{ and } \quad 
  q = q(\hat{\bk}) = \sqrt{ \kappa^2 - \hat{k}_1^2-\hat{k}_2^2} = \frac{\kappa}{\sqrt{1+\abs{\bv}^2}}.
\end{aligned}
\end{equation}
The Hessian of $\zeta$ is given by 
\begin{equation*} 
 H(\zeta)(\bk_{12}) = \begin{pmatrix}
                           - \frac{1}{\sqrt{\kappa^2 - \abs{\bk_{12}}^2}} - \frac{k_1^2}{(\kappa^2 - \abs{\bk_{12}}^2)^{3/2}} & - \frac{k_1 k_2}{(\kappa^2 - \abs{\bk_{12}}^2)^{3/2}} \\- \frac{k_1 k_2}{(\kappa^2 - \abs{\bk_{12}}^2)^{3/2}} & - \frac{1}{\sqrt{\kappa^2 - \abs{\bk_{12}}^2}} - \frac{k_2^2}{(\kappa^2 - \abs{\bk_{12}}^2)^{3/2}}                          
                          \end{pmatrix}, \end{equation*}
which evaluated at $\hat{\bk}$ gives
\begin{align*}H(\zeta)(\hat{\bk}) &= \begin{pmatrix}
- \frac{c}{\omega}\sqrt{1+\abs{\bv}^2} - \frac{ c^3}{\omega^3 }(1+\abs{\bv}^2)^{3/2} \cdot \frac{v_1^2 \omega^2}{c^2 (1+ \abs{\bv}^2)} 
&- \frac{ c^3}{\omega^3 }(1+\abs{\bv}^2)^{3/2} \cdot \frac{v_1v_2 \omega^2}{c^2 (1+ \abs{\bv}^2)} \\
- \frac{ c^3}{\omega^3 }(1+\abs{\bv}^2)^{3/2} \cdot \frac{v_1v_2 \omega^2}{c^2 (1+ \abs{\bv}^2)} 
& - \frac{c}{\omega}\sqrt{1+\abs{\bv}^2} - \frac{ c^3}{\omega^3 }(1+\abs{\bv}^2)^{3/2} \cdot \frac{v_2^2 \omega^2}{c^2 (1+ \abs{\bv}^2)}    \end{pmatrix}, \\
& = - \frac{c}{\omega} \sqrt{1+\abs{\bv}^2} \begin{pmatrix}
1 + v_1^2 
& v_1 v_2  
\\v_1 v_2  & 1+v_2^2                          
\end{pmatrix},
\end{align*}
and the determinant satisfies
\begin{equation*} 
= \frac{c^2(1+\abs{\bv}^2)^2}{\omega^2} > 0.
\end{equation*}
\end{lemma}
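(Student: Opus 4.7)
The proof is a direct computation split naturally into four pieces (gradient, critical point, Hessian, determinant), so the plan is simply to organize the algebra carefully and not to worry about any conceptual obstacle. First I would differentiate $\zeta(\bk_{12}) = k_1 v_1 + k_2 v_2 + \sqrt{\kappa^2 - k_1^2 - k_2^2}$ componentwise, obtaining $\partial_{k_i}\zeta = v_i - k_i/\sqrt{\kappa^2-\abs{\bk_{12}}^2}$, which assembles to the gradient formula \autoref{eq:crit}. Setting $\nabla\zeta(\hat{\bk}_{12}) = 0$ gives the vector equation $\hat{\bk}_{12} = \bv\sqrt{\kappa^2 - \abs{\hat{\bk}_{12}}^2}$. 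Taking squared norms solves explicitly for $\abs{\hat{\bk}_{12}}^2 = \kappa^2 \abs{\bv}^2/(1+\abs{\bv}^2)$, and substituting back yields $\hat{\bk}_{12} = \kappa \bv/\sqrt{1+\abs{\bv}^2}$, which is \autoref{eq:bv}.

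Next I would derive the two follow-up identities. Identity \autoref{eq:bv3} is immediate from the representation \autoref{eq:v}: from $\br = r_3(v_1, v_2, 1)^T$ we get $\abs{\br} = r_3\sqrt{1+\abs{\bv}^2}$ (since we work in the region $r_3>0$), and dividing produces the claimed unit vector. For \autoref{eq:bv2}, I would substitute $\abs{\hat{\bk}_{12}}^2$ from the previous step directly into $\sqrt{\kappa^2 - \abs{\hat{\bk}_{12}}^2}$ to obtain $q(\hat{\bk}) = \kappa/\sqrt{1+\abs{\bv}^2}$; then combine it with $\hat{\bk}_{12}\cdot\bv = \kappa \abs{\bv}^2/\sqrt{1+\abs{\bv}^2}$ so that $\zeta(\hat{\bk}) = (\kappa \abs{\bv}^2 + \kappa)/\sqrt{1+\abs{\bv}^2} = \kappa\sqrt{1+\abs{\bv}^2}$, as required.

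For the Hessian, I would apply the quotient/chain rule to $\partial_{k_i}\zeta$, obtaining the off-diagonal entries $-k_1 k_2 /(\kappa^2-\abs{\bk_{12}}^2)^{3/2}$ and diagonal entries $-(\kappa^2-\abs{\bk_{12}}^2)^{-1/2} - k_i^2 (\kappa^2-\abs{\bk_{12}}^2)^{-3/2}$, matching the stated general formula. Evaluating at $\hat{\bk}_{12}$ amounts to replacing $(\kappa^2-\abs{\hat{\bk}_{12}}^2)^{-1/2}$ by $\sqrt{1+\abs{\bv}^2}/\kappa$ and $(\kappa^2-\abs{\hat{\bk}_{12}}^2)^{-3/2}$ by $(1+\abs{\bv}^2)^{3/2}/\kappa^3$, while the $k_i^2, k_1 k_2$ numerators become $\kappa^2 v_i^2/(1+\abs{\bv}^2)$ and $\kappa^2 v_1 v_2/(1+\abs{\bv}^2)$ respectively. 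After cancellation, each $(i,j)$ entry factors as $-(c/\omega)\sqrt{1+\abs{\bv}^2}$ times $\delta_{ij} + v_i v_j$, giving exactly the factored matrix claimed. The determinant is then $(c/\omega)^2 (1+\abs{\bv}^2)$ times $(1+v_1^2)(1+v_2^2) - v_1^2 v_2^2 = 1+\abs{\bv}^2$, yielding $c^2(1+\abs{\bv}^2)^2/\omega^2 > 0$.

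The only real pitfall is bookkeeping in the Hessian step: the two contributions to each diagonal entry must be put over the common cubic denominator before the common prefactor $-(c/\omega)\sqrt{1+\abs{\bv}^2}$ pops out, and one has to use $\kappa = \omega/c$ consistently; a dropped sign or a missing factor of $\sqrt{1+\abs{\bv}^2}$ is the most plausible source of error. Everything else is mechanical.
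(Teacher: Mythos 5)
Your computation is correct in every step (gradient, critical point, the two identities, the Hessian entries via $\kappa=\omega/c$, and the determinant), and it is exactly the direct verification the lemma calls for; the paper itself states \autoref{le:alpha} without writing out a proof, so there is nothing to diverge from. No gaps.
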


\section{Bessel identities}
For $x \in \R$ and $\varphi_0 \in [0,2\pi)$ the Bessel-identities hold:
\begin{equation} \label{eq:Besselidentities} 
\begin{aligned}
 2 \pi (-1)^m \i^m J_m(x) \cos (m \varphi_0) &= \int_{0}^{2 \pi} \e^{-\i x \cos (\varphi - \varphi_0)} \cos (m \varphi) d \varphi, \\
 2 \pi (-1)^m \i^m J_m(x) \sin (m \varphi_0) &= \int_{0}^{2 \pi} \e^{-\i x \cos (\varphi - \varphi_0)} \sin (m \varphi) d \varphi,
 \end{aligned}
\end{equation} where $J_m$ is the Bessel function of the first kind of order $m$.

\section*{Acknowledgement} 
OS is supported by the Austrian Science Fund (FWF), within SFB F68 (Tomography across the scales), 
project F6807-N36 (Tomography with Uncertainties) and I3661-N27 (Novel Error Measures and Source Conditions of Regularization 
Methods for Inverse Problems). 
KS is supported by the Austrian Science Fund (FWF), project P31053-N32 (Weighted X-ray transform and applications).
MLM, MS and GS are also supported by the SFB F68, project F6809-N36 (Ultra-high Resolution Microscopy). JS was supported in part by the NSF grant DMS-1912821 and the 
AFOSR grant FA9550-19-1-0320.

\bibliography{lm5st18_biber}
\bibliographystyle{plain}

\end{document}